\numberwithin{equation}{section}
\newcommand{\tp}{\otimes}
\newcommand{\norm}[1]{\left\lVert#1\right\rVert}
\newcommand{\diag}{\mathop{\mathrm{diag}}}
\newtheorem{theorem}{Theorem}
\newtheorem{lemma}[theorem]{Lemma}
\newtheorem{corollary}[theorem]{Corollary}
\newtheorem*{remark}{Remark}
\def\beq{\begin{equation}}
\def\eeq{\end{equation}}
\def\bea{\begin{eqnarray}}
\def\eea{\end{eqnarray}}
\def\barr{\begin{array}}
\def\earr{\end{array}}
\def\N{\mathcal{N}}
\newcommand{\ketbra}[2]{| #1 \rangle\!\langle #2|}
\DeclareMathOperator\erf{erf}
\DeclareMathOperator\erfc{erfc}
\DeclareMathOperator\Var{Var}
\DeclareMathOperator\Cov{Cov}
\newcommand{\Tr}[1]{\operatorname{Tr}\left(#1\right)}
\DeclareMathOperator{\sgn}{sgn}
\DeclareMathOperator{\openone}{\mathbb{1}}
\newcommand{\ad}{\mathrm{ad}}
\newcommand{\acomm}{\alpha_{\mathrm{comm}}}
\newcommand{\abs}[1]{\left\lvert#1\right\rvert}
\newcommand{\polynom}{P}
\def\UoT{Department of Computer Science, University of Toronto, Toronto, ON M5S 2E4, Canada}
\def\MSU{Facility for Rare Isotope Beams, Michigan State University, East Lansing, MI 48824, USA}
\def\Zapata{Zapata Computing Inc., Boston, MA 02110, USA}
\def\PNNL{Pacific Northwest National Laboratory, Richland, WA 99352, USA}
\begin{document}

\title{Improved Accuracy for Trotter Simulations Using Chebyshev Interpolation}

\author{Gumaro Rendon}
\affiliation{\Zapata}
\email{gumaro.rendon@proton.me}

\author{Jacob Watkins}
\affiliation{\MSU}
\email{watki243@msu.edu}

\author{Nathan Wiebe}
\affiliation{\UoT} 
\affiliation{\PNNL}
\email{nawiebe@cs.toronto.edu}

\maketitle

\begin{abstract}
Quantum metrology allows for measuring properties of a quantum system at the optimal Heisenberg limit. However, when the relevant quantum states are prepared using digital Hamiltonian simulation, the accrued algorithmic errors will cause deviations from this fundamental limit. In this work, we show how algorithmic errors due to Trotterized time evolution can be mitigated through the use of standard polynomial interpolation techniques. Our approach is to extrapolate to zero Trotter step size, akin to  zero-noise extrapolation techniques for mitigating hardware errors. We perform a rigorous error analysis of the interpolation approach for estimating eigenvalues and time-evolved expectation values, and show that the Heisenberg limit is achieved up to polylogarithmic factors in the error. Our work suggests that accuracies approaching those of state-of-the-art simulation algorithms may be achieved using Trotter and classical resources alone for a number of relevant algorithmic tasks.
\end{abstract}

\section{Introduction}
\label{sec:introduction}
Quantum simulation has become, arguably, the most promising application of quantum computing in the near-term~\cite{Lloyd1073,reiher2017elucidating}, with the potential to provide exponential speedups for a host of problems ranging from the electronic structure~\cite{reiher2017elucidating, whitfield2011simulation,lee2021even,von2021quantum} to simulation of scattering dynamics within quantum field theories~\cite{jordan2012quantum,shaw2020quantum, klco20202}.  The central challenge of digital Hamiltonian simulation is, given a fixed Hamiltonian, simulation time, and error tolerance, provide a minimal-length sequence of quantum gates that approximates the unitary dynamics within that error tolerance.

Major strides have been made in the last several years towards this goal.  Several classes of methods, such as Linear Combinations of Unitaries (LCU) methods~\cite{LCU_2012,low2019well,berry2015simulating,low2018hamiltonian,kieferova2019simulating}, qubitization~\cite{Low2019hamiltonian,babbush2018encoding} and Trotter-based methods~\cite{Berry_2006,wiebe2011simulating,childs2021theory,haah2021quantum,hagan2022composite, low2022complexity} have emerged as leading approaches for simulating quantum dynamics.   Unlike the other aforementioned methods, Trotter methods yield a complexity that scales with the commutators of the Hamiltonian terms~\cite{childs2021theory}, which can lead to substantial performance improvements for simulations of local Hamiltonian~\cite{haah2021quantum}.  However, Trotter methods scale super-polynomially worse with the error tolerance than other existing methods~\cite{Low_2017,berry2015simulating}.  This makes such methods inferior to other strategies in cases where high accuracy is required.

It would be interesting if the gap in simulation accuracy between Trotter and other approaches could be bridged in a minimal way, such as classical post processing. Richardson extrapolation has been proposed as a method  to mitigate algorithmic and hardware errors for simulation and linear systems~\cite{PhysRevA.99.012334,Vazquez2022Richardson}. Recently, the approach been demonstrated on quantum hardware ~\cite{vazquez2023well} and yielded improvements compared to Trotter simulation alone. However, there remains a gap in the theoretical understanding of the errors, though one may be able to adapt existing results from the theory of multiproduct formulas.

The present work takes a different approach to the problem, achieving improved accuracy using polynomial interpolation, hereafter referred to simply as "interpolation." In this scheme, we take Trotter simulation data at various time step sizes, then interpolate this data with a polynomial. Our estimator is then the value of this polynomial at the desired ideal of zero step size. By choosing our nodes so that the interpolation is well-conditioned, we show that a Heisenberg-limited $1/\epsilon$ scaling can be attained for estimating observables up to logarithmic factors. This is in contrast to the $1/\epsilon^{1 + 1/p}$ scaling obtained with a single Trotter estimate alone, where $p$ is the order of the Trotter formula. Interestingly, our result also holds in cases where only the lowest order Trotter formula is used, providing the first poly-logarithmic scaling method with the error that uses a constant number of ancillary qubits.  We study this method in two cases. The first case uses phase estimation to extract eigenvalues of the Hamiltonian; whereas the second method uses dynamical simulation and amplitude estimation to learn expectation values of observables.  We further validate these methods numerically using a newly proposed unbiased phase estimation method that we call Gaussian phase estimation.

Similar to the aforementioned Richardson extrapolation, our approach is best thought of as algorithmic error mitigation. The particular error we are mitigating is imperfect Hamiltonian evolution due to the use of product formulas. Rather than being a new approach to eigenvalue or expectation value estimation, our proposed protocol relies on existing methods for, say, phase or amplitude estimation as a subroutine. We will make use of the known performance of these subroutines as we quantify the effectiveness of our full procedure.

The layout of the paper is as follows.  In \Cref{sec:setup}, we go over the basic setup of our work, including assumptions and notation. We review Suzuki-Trotter (ST) formulas and polynomial interpolation, state an important lemma, and present the well-conditioned interpolation formulas that we use to enable our approach. \Cref{sec:PE} contains our analysis and applications of interpolation to eigenvalue estimation. Specifically, \Cref{subsec:perturbPE} takes a perturbative approach to deriving error bounds and complexity, and applies the results to the problem of estimating Trotter error on a quantum computer. \Cref{subsec:Bernstein} approaches the problem through complex analysis and the Bernstein ellipse, and applies the analysis to the newly proposed Gaussian phase estimation. In \Cref{sec:expvals}, we present our main results involving extrapolation of expectation values of time-evolved observables.  This case is conceptually distinct from the case of phase estimation, because here the time evolutions are generally long. In \Cref{sec:numerics} we validate our claims numerically for small instances of transverse Ising models.  Finally, we conclude in \Cref{sec:conclusion} and discuss future avenues of research.

\section{Setup and Notation}
\label{sec:setup}
The algorithms presented in this paper are simply an application of standard polynomial interpolation to data obtained from Trotter simulations. These two pieces are, in many respects, disjoint, and the interpolation can be thought of as a form of classical post processing. Thus, we will begin this section with a review of the quantum part, Trotterization, then move to interpolation. 

\subsection{Trotter}

Consider a Hamiltonian $H$ on a collection of $n$ qubits, decomposed in a specified way into a sum of $m$ terms.
\begin{align}
    H = \sum^{m}_{j=1} H_j
\end{align}
Notice that both the Hamiltonian \emph{and} decomposition are specified, as the decomposition is not unique and can have implications for algorithmic performance. Questions about optimal decompositions are entirely neglected in this work, as well as questions of how one maps a quantum system of interest (such as a molecule or crystal) onto a set of $n$ qubits appropriately.

In our results, as a proxy for the true simulation cost, we will quote the number of exponentials $U_j = e^{-iH_j t}$ used to approximate $e^{-iHt}$. Of course, this is only indicates the true cost when each $U_j$ is computationally cheap. In a number of relevant situations, the Hamiltonian is $k$-local or sparse, and it is possible to identify decompositions such that each $U_j$ has constant cost  and $m$ grows polynomially with respect to the number of qubits $n$. For such Hamiltonians, there exist efficient quantum algorithms to approximate the time evolution operator $U(t) = e^{-i H t}$ to some desired precision $\epsilon,$ many of which rely on Trotter formulas. Standard examples of Trotter formulas include the first-order formula
\begin{align}
S_1 (t) := \prod^{m}_{j=1} e^{-i H_j t},
\end{align}
the second-order symmetric formula
\begin{align}
S_2 (t) :=  e^{-i  H_1 t/2} \dots  e^{-i  H_m t/2} e^{-i H_m t/2} \dots  e^{-i  H_1 t/2},
\end{align}
and more generally, the order $2k$ symmetric ST formula, defined recursively as
\begin{align} \label{eq:symm_S2k}
S_{2k} (t) := [S_{2k-2}(u_k t)]^2 S_{2k-2}\left((1 - 4 u_k) t\right) [S_{2k-2}(u_k t)]^2
\end{align}
for every $k\in \mathbb{Z}_+\setminus\{1\}$, where $u_k := (4-4^{1/(2k-1)})^{-1}$~\cite{Suzuki1991GeneralTO}.  Though many kinds of Trotter formulas exist, the ST formulas~\eqref{eq:symm_S2k} will be our primary tool in this work for several reasons. First, they are symmetric under naive\footnote{Not to be confused with actual time reversal, which is an antilinear operator.} time reversal
\begin{align}
    S_{2k}(-t) = S_{2k}(t)^\dagger
\end{align}
which will allow us to effectively double the number of interpolation points. Second, the order of the formula $2k$ can be taken arbitrarily large, and therefore our results will apply to quite a general class. Finally, the lowest order $k = 1$ symmetric formula is actually practical, having small constant factors, and therefore our bounds will have something to say about actual simulations.

Trotter formulas approximate $U(t)$ only in a neighborhood around $t = 0$. Thus, the standard trick is to divide the time interval $[0, t]$ into $r$ subintervals, such that each interval is sufficiently small that the Trotter approximation is valid, then string together these simulations. For the simple case of a uniform mesh of $r$ subintervals, this becomes
\begin{align}
    S_{2k} (t/r)^r = U(t) + O(t^{2k+1}/r^{2k})
\end{align}
where big $O$ is understood as taking $r$ large. Clearly, we have that $\lim_{r\to \infty} S_{2k}(t/r)^r = U(t)$. 

Rather than thinking about the number of steps $r$ tending to infinity, it is simpler for our subsequent analysis to consider $s = 1/r$ as a "dimensionless step size," and instead think about $s \rightarrow 0$. In terms of $s,$ we define 
\begin{align} \label{eq:s_param_S2k}
    \tilde{U}_s(t):= S_{2k}(st)^{1/s}.
\end{align}
as the approximate evolution operator for $s\neq0$. The discontinuity at $s = 0$ in~\eqref{eq:s_param_S2k} may be filled by the exact evolution $\tilde{U}_0(t) := U(t)$. Though we defined $s$ as a reciprocal integer, definition~\eqref{eq:s_param_S2k} together with $\tilde{U}_0(t)$ suggests an extension to allow $s$ to be real-valued. In fact, the resulting function $\Tilde{U}_s$ is smooth on a neighborhood of $s = 0$, a fact that will allow us to precisely characterize the interpolation error. For our purposes, we will only consider $\abs{s} \leq 1$. When $1/s$ is not an integer, we may implement $\Tilde{U}_s$ using fractional queries~\cite{gilyen2019quantum} by splitting $1/s$ into integer and fractional parts.
\begin{align}
    1/s = r + f
\end{align}
Here, $r = \mathrm{rnd}(1/s) \in\mathbb{Z}$ is $1/s$ rounded to the nearest integer, and $f\in[-1/2,1/2]$. Finally, we note that $\tilde{U}_s$ is an even function of $s,$ which we will make use of to cut the number of interpolation points in half by reflecting across $s = 0.$

Prior work has demonstrated the value of considering the effective Trotter Hamiltonian in the analysis of Trotter formulas~\cite{Yi2022}. This approach is also helps us calculate high order derivatives of $\tilde{U}_s$ as needed for our error bounds. We thus define an effective Hamiltonian
\begin{align} \label{eq:def_eff_Ham}
    \tilde{H}_s &:= \frac{i}{st} \log S_{2k} \left(st\right)
\end{align}
so that
\begin{align}
    \tilde{U}_s(t) &= e^{-i \tilde{H}_s t}.
\end{align}
Note that $\tilde{H}_s$ depends on $t$ as well, though this dependence will be left implicit. For the purposes of bounding the interpolation error, we require a bound on the norm of $\tilde{H}_s$. This is supplied by the following lemma.
\begin{lemma}\label{lem:Heffderiv}
In the notation introduced above, let $s$ be chosen such that 
\begin{align} \label{eq:lem1_assumpt}
    k(5/3)^{k}m\max_{l\in [1,m]} \|H_l\| \abs{s}t \le \pi/20.
\end{align}
Then the following bound on the derivatives of $\tilde{H}_s$ with respect to $s$ holds. 
$$
\|\partial_s^n \tilde{H}_s\| \le 2t^{-1}n^n(e^2k(5/3)^k m\max_{l\in [1,m]} \|H_l\| t)^{n+1}.
$$
\end{lemma}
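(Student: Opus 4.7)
My plan is to apply Cauchy's integral formula for derivatives, exploiting the fact that $\tilde{H}_s$ extends to a holomorphic matrix-valued function of complex $s$ in a neighborhood of the real axis. Derivatives at any real point are then controlled by the uniform norm on a small complex disk centered there, and the factors $n^n$ and $(e^2 k(5/3)^k m\lambda t)^{n+1}$ in the target bound should emerge respectively from $n! \leq n^n$ and from the reciprocal of a suitably chosen Cauchy radius $R$.

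To make this concrete, I would first establish that $\tilde{H}_z$ is holomorphic and uniformly bounded on the requisite disk. Since each factor $e^{-i a_l z t H_{j(l)}}$ in the Suzuki product is entire in $z$, the product $S_{2k}(zt)$ is entire; the matrix logarithm is then defined via $\log(I + X) = \int_0^1 (I + uX)^{-1} X\, du$ wherever $\|S_{2k}(zt) - I\| < 1$, and the apparent $1/z$ singularity in $(i/(zt))\log S_{2k}(zt)$ is removable because $\log S_{2k}(zt) = -izt H + O((zt)^{2k+1})$, giving $\tilde{H}_0 = H$. Let $\lambda := \max_{l \in [1,m]}\|H_l\|$ and let $T_k$ denote the sum of the absolute values of the coefficients appearing in $S_{2k}$; unfolding the recursion~\eqref{eq:symm_S2k} yields $T_k = (8u_k - 1) T_{k-1}$ with $T_1 = m$ and $8u_k - 1 \to 5/3$, from which one obtains $T_k \leq C k(5/3)^k m$ for an absolute constant $C$. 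The product rule then gives $\|S_{2k}(zt) - I\| \leq T_k \lambda |z|t \exp(T_k \lambda |z|t)$, and the log integral produces $\|\tilde{H}_z\| = O(T_k \lambda)$ so long as $T_k \lambda |z|t$ remains strictly below the threshold $W(1) \approx 0.567$ of $x \mapsto xe^x = 1$.

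The finishing step is to apply Cauchy's formula on a complex disk of radius $R = 1/(e^2 k(5/3)^k m\lambda t)$ centered at $s$ (assumed real and satisfying the lemma's hypothesis). Under the hypothesis $k(5/3)^k m\lambda|s|t \leq \pi/20$ and with this $R$, the quantity $T_k\lambda|z|t$ on the disk is bounded by $O(\pi/20 + 1/e^2)$, well below $W(1)$, so $\max_{|z - s| = R}\|\tilde{H}_z\| = O(k(5/3)^k m\lambda)$. Cauchy then yields
\[
\|\partial_s^n \tilde{H}_s\| \leq \frac{n!}{R^n}\max_{|z - s| = R}\|\tilde{H}_z\|,
\]
and combining this with $n! \leq n^n$ and substituting the chosen $R$ reproduces the stated bound $2t^{-1} n^n (e^2 k(5/3)^k m\lambda t)^{n+1}$ up to constants.

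The main obstacle is constant chasing to produce exactly the $e^2$ factor rather than a larger numerical constant. This factor is generated by the trade-off between shrinking $R$ to sharpen the $R^{-n}$ factor and keeping $\|\tilde{H}_z\|$ bounded by $O(k(5/3)^k m\lambda)$ on the boundary; the optimum is struck when the ``safety margin'' $T_k \lambda t R$ is approximately $1/e^2$, and the hypothesis constant $\pi/20$ is calibrated precisely so that $\pi/20 + 1/e^2$ stays comfortably below $W(1)$ and the logarithm series for $S_{2k}(zt)$ converges throughout the Cauchy disk.
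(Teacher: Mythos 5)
Your proposal takes a genuinely different route from the paper. The paper's proof expands $\log S_{2k}(st)$ as a power series about the identity, differentiates termwise, and controls the result through repeated multinomial expansions and triangle inequalities — a purely combinatorial argument that explicitly tracks the structure of every derivative. You instead invoke the holomorphy of $\tilde{H}_z$ on a complex disk and apply Cauchy's integral formula, which in one stroke converts a uniform bound on $\|\tilde{H}_z\|$ into a bound on all derivatives. This is the cleaner and shorter route, and it transparently explains where the $n^n$ and $(e^2 c)^{n+1}$ factors come from (you actually get $n!$ rather than $n^n$ before relaxing, so your bound is if anything slightly sharper). The paper's approach is more elementary — no complex analysis — and has the side benefit of exposing the internal structure of the derivatives, but it is much longer.

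Two points to tighten before this is fully rigorous. First, the recursion $T_k = (8u_k - 1)T_{k-1}$ is correct, but $8u_k - 1 > 5/3$ for every finite $k$ (it approaches $5/3$ \emph{from above}, with $8u_2 - 1 \approx 2.32$), so the claim $T_k \leq C\,k(5/3)^k m$ needs an actual estimate of $\prod_{j=2}^k(8u_j - 1)/(5/3)^{k-1}$ rather than just the observation that the ratio tends to $5/3$; alternatively, just quote the bound $\tau_{\max} \leq 2k/3^k$ together with $N_k = 2m5^{k-1}$ as the paper does. Second, you hedge with ``up to constants,'' but the lemma asserts a specific prefactor of $2$, so the constant must actually be chased. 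The good news is it works comfortably: with $R = 1/(e^2 c)$ and $c(|s|+R) \leq \pi/20 + e^{-2} \approx 0.29$, the bound $\|S_{2k}(zt)-I\| \leq e^{x}-1$ with $x = T_k\lambda|z|t$ together with $\|\log(I+X)\| \leq -\log(1-\|X\|)$ gives $\max_{|z-s|=R}\|\tilde H_z\| \lesssim 2.1\, c/t$, and then $n!(e^2c)^n \cdot 2.1\,c/t \leq n^n(e^2c)^n \cdot 2.1\,c/t \leq 2\,n^n(e^2c)^{n+1}/t$ since $2.1/e^2 < 2$. So the proposed proof not only works but lands well inside the target constant; you should simply carry the arithmetic through rather than leave it as a disclaimer.
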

The proof of this lemma is technical, so it is relegated to Appendix~\ref{app:setup_proofs}. Note that our bounds are uniformly worse for larger $k$, i.e., higher order ST formulas. Assuming that this is not an artifact of our mathematical treatment, this suggests low order formulas are  unconditionally preferred over high order ones for interpolation. Numerical studies could help determine the true impact of higher order formulas on the interpolation procedure.

\subsection{Polynomial Interpolation}
Having introduced ST formulas and stated a needed lemma, we turn our attention to interpolation. Essentially, our goal is to use interpolation to "extrapolate" to the ideal of $s = 0$.\footnote{We occassionally interchange between the terminology "extra-" and "interpolation." We view our method as an extrapolation beyond the data using a numerical technique commonly known as polynomial interpolation.} There are many quantities that we could be interested in extrapolating, including the eigenvalues $\lambda_i (s)$ of the effective Hamiltonian,
\begin{align} \label{eq:setup_eigvals}
    \tilde{H}_s \ket{\lambda_i (s)} = \lambda_i (s) \ket{\lambda_i (s)}
\end{align}
and expectation values of time-evolved observables.
\begin{align} \label{eq:setup_expvals}
\begin{aligned}
    \langle O_s (t) \rangle &= \Tr{\rho O_s (t)} \\
    O_s (t) &:= \tilde{U}_s(t)^\dagger O \tilde{U}_s(t)
\end{aligned}
\end{align}
These will be the primary focus of the present work. No matter what quantity we are interested in, we'll assume it was obtained from ST on a quantum computer for the purposes of subsequent cost analysis. However, in principle our approach should work for any Trotter scheme, not just ST.

While the interpolation is classical and independent of the method in which the data is generated, we will assume a quantum simulation was used when considering the computational cost. We assume all quantum operations are executed perfectly, including the exponentials $\exp (-i H_j t)$ for simulation. Thus, the primary sources of error are the interpolation error and error in the calculation of the data points (e.g. the Hamiltonian energies or expectation values at various points $s_i$). Error in the data points may arise from hardware noise, but even in its absence, a measurement protocol such as phase estimation induces a systematic error that cannot be removed. In this work, we only account for the latter: algorithmic errors. This is not to say that the interpolation method could not be applied to noisy quantum systems, but rather that our cost analysis does not account for it.

We now describe the interpolation framework. Let $f\in C^\infty([-a,a])$ be a smooth, real-valued function of a single variable $s \in [-a,a]$ and suppose we have calculated $f$ for $n$ distinct points $s_1, s_2 \dots s_n \in [-a,a]$. That is, we have data in the form of a set of tuples $D = \{(s_i, f_i)\}_{i=1}^n$, where $f_i = f(s_i)$.  Let $\polynom_{n-1} f$ be the unique $(n-1)$-degree polynomial interpolating $D$, i.e. $P f_{n-1}(s_i) = f_i$ for each $i = 1, \dots, n$. For any $s \in [-a,a]$, standard results in polynomial interpolation~\cite{quarteroni2010numerical} tell us that the signed error is given by 
\begin{align} \label{eq:error_at_zero}
    E_{n-1}(s) := f(s) - \polynom_{n-1} f(s) = \frac{f^{(n)}(\xi)}{n!} \omega_{n} (s)
\end{align}
for some $\xi \in I_s$, where $I_s \subset [-a,a]$ is the smallest interval containing $s$ and the interpolation points $\{s_i\}$. Throughout this work, superscripts such as in $f^{(n)}$ will refer to $n$th-order derivatives. The $n$th degree nodal polynomial $\omega_n (s)$ is defined as the unique monic polynomial with zeros at the interpolation points.
\begin{align}
    \omega_n (s) := \prod_{i=1}^n (s-s_i)
\end{align}

Our estimate for $f(0)$ is $\polynom_{n-1} f(0)$. Since we are interested in $s = 0$, $\omega_n$ becomes a (signed) product of the interpolation points. We can bound the interpolation error $E_n (0)$ in a way that is independent of  the precise value of $\xi$ (which is unknown and difficult to find) by maximizing over $\xi \in I_s$.
\begin{align} \label{eq:bound_at_zero}
    \abs{E_{n-1} (0)} \leq \max_{s \in I_s} \frac{\abs{f^{(n)}(s)}}{n!} \prod_{i=1}^n \abs{s_i}
\end{align}
Much of the technical work in this paper involves finding suitable bounds on the size of the derivatives $f^{(n)}$ for different forms of $f$. For example, in the eigenvalue setting\footnote{Smoothness assumptions can be violated by eigenvalue "level crossings." Our results will require a minimum gap as $s \rightarrow 0$ to this and other technical problems.} of equation~\eqref{eq:setup_eigvals}, $f(s) = \lambda(s)$ while for the expectation values of equation~\eqref{eq:setup_expvals}, $f(s) = \langle O_s(t)\rangle$.

For reasons which we discuss in the next subsection, we choose the Chebyshev nodes on $[-a,a]$ as our interpolation points. 
\begin{align} \label{eq:cheb_node_def}
    s_i = a \cos\left(\frac{2i-1}{2n} \pi\right)
\end{align}
This allows us to specialize our interpolation error in the manner described in the following lemma.

\begin{lemma} \label{lem:Cheb_error}
    Let $s_i$, $i = 1, 2, \dots, n$ be the collection of Chebyshev interpolation points on the interval $[-a,a]$. In the notation above, we have
    \begin{align}
        \abs{E_{n-1}(0)} \leq \max_{s \in [-a,a]} \abs{f^{(n)}(s)} \left(\frac{a}{2n}\right)^n.
    \end{align}
\end{lemma}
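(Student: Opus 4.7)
The plan is to specialize the general Lagrange bound~\eqref{eq:bound_at_zero} to the Chebyshev nodes by exploiting the explicit relation between the nodal polynomial $\omega_n$ and the Chebyshev polynomial of the first kind $T_n$. There are three natural steps: identify $\omega_n$ as a rescaling of $T_n$; use $\abs{T_n}\le 1$ to bound $\abs{\omega_n(0)}$; and simplify the resulting factorial prefactor.

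First I would note that $T_n(x)$ has leading coefficient $2^{n-1}$ and roots at $\cos((2i-1)\pi/(2n))$, so the scaled Chebyshev nodes in~\eqref{eq:cheb_node_def} give the monic polynomial
\begin{equation*}
\omega_n(s) = \frac{a^n}{2^{n-1}}\, T_n(s/a).
\end{equation*}
The classical bound $\abs{T_n(x)}\le 1$ on $[-1,1]$ then yields $\abs{\omega_n(s)}\le a^n/2^{n-1}$ uniformly on $[-a,a]$, and in particular $\abs{\omega_n(0)} = \prod_{i=1}^n\abs{s_i}\le a^n/2^{n-1}$.

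Substituting into~\eqref{eq:bound_at_zero} and enlarging the supremum of $\abs{f^{(n)}}$ from $I_0\subseteq[-a,a]$ to the full interval gives $\abs{E_{n-1}(0)}\le \max_{s\in[-a,a]}\abs{f^{(n)}(s)}\cdot a^n/(2^{n-1}n!)$. The final bookkeeping converts this prefactor into the form $(a/(2n))^n$ stated in the lemma using a Stirling-type lower bound such as $n!\ge (n/e)^n$. I expect this last step to be the main subtlety: a straightforward Stirling substitution produces $(ea/(2n))^n$ times a tame polynomial-in-$n$ factor, so matching the literal $(a/(2n))^n$ statement requires either absorbing the residual $O(1)^n$ factor into the overall constant, or working with a sharper handle on $\prod_i\abs{s_i}$ that improves on the uniform $\sup\abs{\omega_n}$ bound. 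The first two steps are essentially algebraic and standard; it is this factorial packaging that does the real work of producing the compact form in the lemma.
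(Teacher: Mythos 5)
Your identification of the nodal polynomial with a rescaled Chebyshev polynomial, $\omega_n(s) = (a^n/2^{n-1})\,T_n(s/a)$, is a genuinely different and cleaner route than the paper takes: the paper bounds the product $\prod_i\left|\cos\left(\tfrac{2i-1}{2n}\pi\right)\right|$ directly by exploiting reflection symmetry, $\sin x \le x$, and a double-factorial estimate. Your first two steps are correct, and for even $n$ your estimate $\abs{\omega_n(0)} \le a^n/2^{n-1}$ is in fact an \emph{equality}, since $\abs{T_n(0)} = 1$. This gives the exact error coefficient $a^n/(2^{n-1}n!)$.

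The gap you flag in the final packaging step is real and cannot be closed, and both of the repairs you sketch fail. A factor of $e^n$ cannot be absorbed into an overall constant in an explicit, non-asymptotic inequality, and there is no ``sharper handle on $\prod_i\abs{s_i}$'' to be had because your computation of $\omega_n(0)$ is already exact. The reason you cannot reach $(a/(2n))^n$ is that the lemma as stated is false. Take $n=2$ and $f(s)=s^2$: then $f^{(2)}\equiv 2$ and $E_1(0)=\omega_2(0)=a^2/2$, whereas the claimed bound gives $2\cdot(a/4)^2 = a^2/8 < a^2/2$.

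The paper's own proof contains the bug: in the chain of displays, $\prod_{i=1}^m\cos\!\left(\tfrac{2i-1}{4m}\pi\right)=\prod_{i=1}^m\cos\!\left(\tfrac{\pi}{2}-\tfrac{2i-1}{4m}\pi\right)=\prod_{i=1}^m\sin(\cdot)$ is followed by $\sin x \le x$, but the factor of $\pi$ in the sine's argument is silently dropped, which understates the product by $\pi^m$ and the final bound by $\pi^n$. Restoring it gives $\prod_i\left|\cos\left(\tfrac{2i-1}{2n}\pi\right)\right| \le \pi^n n!/(2n)^n$ and hence the corrected statement $\abs{E_{n-1}(0)} \le \max\abs{f^{(n)}}\,(\pi a/(2n))^n$. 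Your form $\abs{E_{n-1}(0)} \le \tfrac{a^n}{2^{n-1}n!}\max\abs{f^{(n)}}$, which via Stirling is $\le \tfrac{2}{\sqrt{2\pi n}}(ea/(2n))^n\max\abs{f^{(n)}}$, is correct and in fact strictly tighter than the corrected bound since $e<\pi$. The right move is to state the bound you actually have rather than chase the paper's form; the downstream asymptotic claims (Theorems~\ref{thm:PEMain} and~\ref{thm:expvals_cost}) survive with adjusted constants, since the $\pi^n$ versus $e^n$ discrepancy only shifts the base of the geometric factor.
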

\begin{proof}
    For $n$ odd, $s=0$ is one of the interpolation points, so the error is zero and the bound holds automatically. Hereafter, we only consider $n$ even (which will be the case of practical interest). 
    
    Using the generic bound~\eqref{eq:bound_at_zero} with the Chebyshev nodes,
    \begin{align} \label{eq:cheb_bound_at_zero}
    \begin{aligned}
        \abs{E_{n-1} (0)} &\leq \max_{\xi \in [-a,a]} \abs{f^{(n)}(\xi)} \frac{1}{n!} a^n \prod_{i=1}^n \abs{\cos\left(\frac{2i-1}{2n} \pi\right)}.
    \end{aligned}
    \end{align}
    To obtain the lemma, we just need to appropriately bound the product of cosines. Since $n$ is even, $n = 2m$ for some $m \in \mathbb{Z}_+$. Moreover, we have a reflectional symmetry about $m$, in the sense that
    \begin{align}
        \abs{\cos\left(\frac{2i-1}{2n} \pi \right)} = \abs{\cos\left(\frac{2(n-i+1)-1}{2n}\pi\right)}.
    \end{align}
    Hence, we only need to take the product over $i = 1, \dots, m$ and square it.
    \begin{align} \label{eq:halve_points}
        \prod_{i=1}^n \abs{\cos\left(\frac{2i-1}{2n} \pi\right)} = \left(\prod_{i=1}^m \cos\left(\frac{2i-1}{4m}\pi\right)\right)^2
    \end{align}
    To proceed further, let's reindex the remaining product by $i \to m-i+1$. This gives
    \begin{align}
    \begin{aligned}
        \prod_{i=1}^m \cos\left(\frac{2i-1}{4m}\pi\right) &= \prod_{i=1}^m \cos\left(\frac{\pi}{2}-\frac{2i-1}{4m}\pi\right) \\
        &= \prod_{i=1}^m \sin\left(\frac{2i-1}{4m}\right) \\
        &\leq \prod_{i=1}^m 
        \frac{2i-1}{4m}
    \end{aligned}
    \end{align}
    where we used the fact that $\sin(x) \leq x$ for all $x \geq 0$. Factoring out the denominator from the product, the remaining terms become a double factorial.
    \begin{align}
        \prod_{i=1}^m \frac{2i-1}{4m} = \frac{(2m-1)!!}{(4m)^m}
    \end{align}
    The double factorial can be bounded as follows.
    \begin{align}
        (2m-1)!!^2 \leq (2m-1)!! (2m)!! = 2m!
    \end{align}
    so that $(2m-1)!! \leq \sqrt{(2m)!}$. 
    Returning to the original product of equation \eqref{eq:halve_points}, and reintroducing $n = 2m$, the resulting bound is
    \begin{align}
        \prod_{i=1}^n \abs{\cos\left(\frac{2i-1}{2n}\pi\right)} \leq \left(\frac{\sqrt{n!}}{(2n)^{n/2}}\right)^2 = \frac{n!}{(2n)^n}
    \end{align}
    Reinserting this result into the last line of equation~\eqref{eq:cheb_bound_at_zero} gives the bound stated in the lemma.
\end{proof}

Though Chebyshev interpolation enjoys nice mathematical properties, it presents a challenge for Trotter simulation because of the need for noninteger time steps in equation~\eqref{eq:s_param_S2k}. In the face of this, there are several options one could take: restricting to integer time steps, or perform fractional queries using, say the Quantum Singular Value Transformation (QSVT). 

First, consider restricting to integer time steps, by gathering data at the nearest reciprocal integer $1/\tilde{r}$ to the Chebyshev node $s$. For symmetrical interval $[-a,a]$, this distance goes as $O(a^2)$ as $a \rightarrow 0$. From here, one could either (a) take the estimate for $f(1/\tilde{r})$ as the estimate for $f(s)$, accruing some error in the process, or (b) perform the interpolation at the \emph{approximate} Chebyshev nodes given by the collection of points $1/r_i$. Unfortunately, for our purposes, option (a) leads to unacceptable errors of order $O(a)$ in the data, eliminating accuracy gains. As for option (b), it is possible to use robustness results on Chebyshev interpolation~\cite{Piazzon2018StabilityIF} to argue that almost-Chebyshev nodes should be almost as well-conditioned. Again, however, we find that our scaling of the number of nodes is such that the node displacements must be quite small, leading again to poor scaling.

Because of this, for most of this work we choose to invoke access to fractional queries using the QSVT~\cite{gilyen2019quantum}. A notable exception to this is in application to Gaussian phase estimation in~\Cref{subsubsec:gqpea}, where the error in the fractional part can be mitigated in that context. While fractional queries increase the overhead compared to Trotter alone, this overhead is a constant.

\subsection{Stability Analysis and Interpolation}

Polynomial interpolation is a valuable numerical tool, but some implementations can lead to numerical instability~\cite{DECAMARGO2020112369}. However, the situation is not as bad as often presented in textbooks~\cite{trefethen2011six}. While linear algebraic approaches involving Vandermonde matrices suffer instability for high degree polynomials~\cite{Gautschi1990stable}, methods such as barycentric formulas are provably stable with respect to floating point arithmetic~\cite{higham2004numerical}. 

A particularly important consideration is the choice of interpolation nodes. It is well known that equally spaced nodes can lead to the Runge phenomenon: rapid oscillations near the ends of the interval that grow with polynomial degree~\cite{quarteroni2010numerical}. These oscillations can be overcome with a superior choice of nodes, such as the zeros of the Chebyshev polynomials. Interpolations done with this set of nodes are guaranteed to converge to functions that are Lipschitz continuous as $n\rightarrow \infty$. Moreover, they are well-conditioned in the sense of small errors in the data values. Finally, because they anti-cluster around $s = 0$, they are relatively cheap to compute with Trotter formulas. In this work, we will always interpolate at the $n$th-degree Chebyshev nodes, or approximations thereof, on a symmetric interval $[-a,a]$ about the origin, defined in~\eqref{eq:cheb_node_def}. We choose even $n$ so as to avoid the origin (which has infinite cost to compute), and also utilize the reflectional symmetry of $f(s)$. 

To compute the interpolant $P_{n-1} f$ linear algebraically, we overcome the limitations of the standard Vandemonde approach by expanding in terms of orthonormal Chebyshev polynomials rather than monomials $x^j$. 
\begin{align}\label{eq:O_expans_opt}
    \polynom_{n-1} f(s) = \sum_{j=0}^{n-1} c_{j} p_j (s).
\end{align}
Here, $p_j$ is defined by
\begin{align} \label{eq:cheb_orthonorm_def}
     p_j(s) :=
    \begin{cases}
    \sqrt{\frac1n}T_0(s), &j=0 \\
    \sqrt{\frac2n}T_j(s), &j=1,2,\dots \\
    \end{cases} 
\end{align}
where $T_j$ is the standard $j$th Chebyshev polynomial.
\begin{align}
    T_j(x) := \cos (j \cos^{-1} x)
\end{align}
By orthonormality, we are referring to the condition~\cite{mason2002chebyshev}
\begin{align}
    \sum_{k=1}^n p_i (s_k) p_j (s_k) = \delta_{ij}
\end{align}
for all $0\leq i, j < n$, with $s_k$ being the zeros of $T_n$ given in~\eqref{eq:cheb_node_def}. This immediately implies the matrix 
\begin{align}
 \mathbf{V} :=
\begin{pmatrix}
 p_0(s_1)   & p_1(s_1)   & \dots  & p_{n-1}(s_1) \\
 p_0(s_2)   & p_1(s_2)   & \dots  & p_{n-1}(s_2) \\
 \vdots     & \vdots     & \ddots & \vdots  \\
 p_0(s_{n}) & p_1(s_{n}) & \dots  & p_{n-1}(s_{n})
\end{pmatrix}
\end{align}
is orthogonal, and therefore has condition number $\kappa(\mathbf{V}) := \norm{\mathbf{V}}\norm{\mathbf{V}^{-1}}$ equal to one. This is the source of well-conditioning in our approach. The coefficients $c = (c_0, c_1, \dots , c_{n-1})$ in equation~\eqref{eq:O_expans_opt} satisfy
\begin{align} \label{eq:cheb_sys}
    y = \mathbf{V} c
\end{align}
for the vector of values $y = (f(s_1), f(s_2), \dots, f(s_n)),$ since $\polynom_{n-1} f$ is an interpolant. Hence, $c = \mathbf{V}^T y$ gives the vector of coefficients. 

We now develop our argument for well-conditioning. Unless otherwise subscripted, all logarithms are natural.

\begin{lemma}\label{lem:Trotter_step_bound}
    Let $s_1, s_2, \dots, s_n$ be the standard Chebyshev nodes on $[-a,a]$~\eqref{eq:cheb_node_def} with $n$ even. Then the nodes satisfy
    
    $$
    \sum_{k=1}^n \frac{1}{\abs{s_k}} \leq \frac{4 n}{\pi a}\left(\gamma + \log(2n+2)\right),
    $$
    where $\gamma\approx 0.577$ is the Euler-Mascheroni constant.
\end{lemma}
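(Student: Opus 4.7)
My plan is to reduce $\sum_k 1/|s_k|$ to an odd-harmonic sum via symmetry and Jordan's inequality, then estimate using standard bounds on the harmonic numbers. First, since $n$ is even the Chebyshev nodes satisfy $s_{n-k+1}=-s_k$, so $\sum_{k=1}^n 1/|s_k| = (2/a)\sum_{k=1}^{n/2} 1/\cos((2k-1)\pi/(2n))$, where the cosines are positive because the angles lie in $(0,\pi/2)$. Writing $\cos\theta = \sin(\pi/2-\theta)$ and reindexing by $j = n/2 - k + 1$ converts each cosine into $\sin((2j-1)\pi/(2n))$. Jordan's inequality $\sin x \ge 2x/\pi$ on $[0,\pi/2]$ then yields
\begin{align*}
\sum_{k=1}^n \frac{1}{|s_k|} \;\le\; \frac{2n}{a} \sum_{j=1}^{n/2} \frac{1}{2j-1}.
\end{align*}

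For the odd-harmonic sum, the plan is to use the identity $\sum_{j=1}^m 1/(2j-1) = H_{2m} - H_m/2$ together with the standard estimates $H_N \le \gamma + \log N + 1/(2N)$ (upper) and $H_N > \gamma + \log N$ (lower). Setting $m = n/2$, these combine to give $\sum_{j=1}^{n/2} 1/(2j-1) \le \tfrac{1}{2}(\gamma + \log(2n)) + 1/(2n)$, so after substitution
\begin{align*}
\sum_{k=1}^n \frac{1}{|s_k|} \;\le\; \frac{n}{a}\bigl(\gamma + \log(2n)\bigr) + \frac{1}{a}.
\end{align*}

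The final---and hardest---step will be to verify that this estimate is majorized by the stated $\tfrac{4n}{\pi a}(\gamma + \log(2n+2))$. The coefficient mismatch ($2$ versus $4/\pi$ after the application of Jordan's inequality) means the slack must come from the $\gamma$ term carrying a $(4/\pi - 1)$ prefactor and from $\log(2n+2)$ being slightly larger than $\log(2n)$. After a cross-multiplication, the desired inequality becomes
\begin{align*}
\left(\tfrac{4}{\pi}-1\right) n \bigl(\gamma + \log(2n)\bigr) \;+\; \tfrac{4n}{\pi}\log(1 + 1/n) \;\ge\; 1
\end{align*}
for all even $n \ge 2$. I plan to check the base case $n = 2$ directly (the left side is approximately $2.1$), and then argue that both summands are nondecreasing functions of $n$: the first trivially, and the second because $g(n) = n \log(1 + 1/n)$ satisfies $g'(n) = \log(1+1/n) - 1/(n+1) > 0$ (from the elementary inequality $\log(1+x) > x/(1+x)$ for $x > 0$) and is bounded above by $\log e = 1$. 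Monotonicity then extends the base-case verification to all even $n \ge 2$.
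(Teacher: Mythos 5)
Your proof is correct, but it diverges from the paper's argument in a way worth noting. The paper bounds $\sin x$ from below by $x/2$ on $[0,\pi/2]$, a deliberately slack inequality chosen so that, after bounding the odd-harmonic sum via the digamma function, the answer lands exactly on the coefficient $\tfrac{4n}{\pi a}(\gamma + \log(2n+2))$ with no further reconciliation needed. You instead invoke Jordan's inequality $\sin x \ge 2x/\pi$, which is tighter, and pair it with elementary bounds on harmonic numbers rather than $\psi$. This produces the sharper intermediate estimate $\tfrac{n}{a}(\gamma + \log(2n)) + \tfrac{1}{a}$ --- strictly better than the stated lemma --- but at the cost of an extra verification step: a cross-multiplication, a base-case check at $n=2$, and a monotonicity argument for $n\log(1+1/n)$, all of which you carry out correctly. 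In effect, you proved a stronger result and then had to spend effort weakening it to match the statement; the paper avoids that work by pre-matching constants. If you wanted, you could report your tighter intermediate bound directly, which would make the lemma slightly sharper at no extra length.
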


\begin{proof}
    We focus on the case $a = 1$, since the general result follows by a simple rescaling. Because sine and cosine are phase shifted by $\pi/2$,
    \begin{align}\label{eq:reciprocal_Cheb_sum}
    \sum_{k=1}^n \frac{1}{\abs{s_k}}= \sum^n_{k=1} \frac1{\left\vert \cos \left(\frac{2k-1}{2 n}\pi\right) \right\vert}=\sum^n_{k=1} \frac1{\left\vert\sin \left(\frac{n-2k+1}{2 n}\pi\right)\right\vert}.
    \end{align}
    Taking advantage of the symmetry about $s = 0,$
    \begin{align}
        \sum^n_{k=1} \frac1{\left\lvert\sin \left(\frac{n-2k+1}{2 n}\pi\right)\right\rvert} &= 2\sum^{n/2}_{k=1} \frac1{\sin \left(\frac{2k-1}{2 n}\pi\right)} .
    \end{align}
    Next, we use the lower bound
    \begin{align}
    \sin x \geq x/2 \quad \left( 0 \leq x \leq \pi/2 \right)
    \end{align}
    in order to bound the terms of the sum.
    \begin{align} \label{eq:sin_sum}
    \begin{aligned}
        2\sum^{n/2}_{k=1} \frac1{\sin \left(\frac{2k-1}{2 n}\pi\right)} &\leq \frac{8 n}{\pi} \sum^{n/2}_{k=1} \frac{1}{2k-1} \\
        &= \frac{8 n}{\pi} \left( H_{n} - \frac{1}{2} H_{n/2} \right) \\
    \end{aligned}
    \end{align}
    Here, $H_n$ denotes the $n$th harmonic number. From the relation $H_n = \gamma + \psi(n+1)$, where $\psi$ is the digamma function,
    \begin{align}
        H_{n} - \frac{1}{2} H_{n/2} = \gamma/2 + \psi(n+1) - \frac12 \psi(n/2 + 1).
    \end{align}
    Moreover, since $\psi(x) \in (\log \left(x - \frac12\right), \log x)$ for any $x > 1/2$, this is upper bounded by
    \begin{align}
        H_{n} - \frac{1}{2} H_{n/2} < \gamma/2 + \log(n+1) - \frac12 \log(\frac{n+1}{2}) = \frac{\gamma + \log 2}{2} + \frac{\log (n+1)}{2}.
    \end{align}
    Reinserting this into~\eqref{eq:sin_sum}, one obtains the bound
    \begin{align}
    \begin{aligned}
        \sum_{k=1}^n \frac{1}{\abs{s_k}} &\leq \frac{4 n}{\pi} \left(\gamma + \log(2n+2)\right)
    \end{aligned}
    \end{align}
    The general lemma follows from a rescaling by $1/a$.
\end{proof}

\begin{remark}
    Observe that $1/\abs{s_k}$ is essentially the number of Trotter steps to compute the $k$th interpolation point. Thus, Lemma~\ref{lem:Trotter_step_bound} amounts to a bound on the total number of Trotter steps, and we see this grows as $O(a^{-1} n \log n)$.
\end{remark}

\begin{lemma}\label{lem:c_norm}
Let $p(s) = (p_0(s), p_1(s), \dots, p_{n-1}(s))$ be a vector of (normalized) Chebyshev polynomials on $[-a,a]$. Then,
$$
\| \mathbf{V}p(0) \|_1 < \frac{2}{\pi}\log\left(n+1\right) + 1
$$
where $\norm{\cdot}_1$ denotes the vector 1-norm.
\end{lemma}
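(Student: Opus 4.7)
The plan is to identify $\mathbf{V}p(0)$ as the vector of Lagrange interpolation weights at $s=0$, then derive an explicit closed form for those weights that reduces the $1$-norm to a tractable trigonometric sum.

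First I would show that $(\mathbf{V}p(0))_k = \ell_k(0)$, where $\ell_k$ is the $k$th Lagrange basis polynomial for the nodes $\{s_i\}$. Inverting the system $y = \mathbf{V}c$ using orthogonality of $\mathbf{V}$ gives $c = \mathbf{V}^T y$, so on one hand $\polynom_{n-1}f(0) = p(0)^T c = (\mathbf{V}p(0))^T y$; on the other hand the Lagrange representation gives $\polynom_{n-1}f(0) = \sum_k \ell_k(0)\, f(s_k)$. Matching these linear functionals for arbitrary data $y$ yields the identification, so $\|\mathbf{V}p(0)\|_1 = \sum_{k=1}^n |\ell_k(0)|$.

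Next I would invoke the Chebyshev-specific Lagrange identity $\ell_k(x) = T_n(x)/[T_n'(s_k)(x - s_k)]$, which follows from $T_n(x) = 2^{n-1}\prod_j(x - s_j)$. Evaluating at $x=0$ with $T_n(0) = (-1)^{n/2}$ for even $n$, the formula $T_n'(\cos\theta) = n\sin(n\theta)/\sin\theta$ specialized at $n\theta_k = (2k-1)\pi/2$ (giving $T_n'(s_k) = (-1)^{k-1}n/\sin\theta_k$), and $s_k = \cos\theta_k$ with $\theta_k = (2k-1)\pi/(2n)$, yields the clean closed form
$$|\ell_k(0)| = \frac{|\tan\theta_k|}{n}.$$

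The remaining task is to bound $(1/n)\sum_{k=1}^n |\tan\theta_k|$. The reflection symmetry $\theta_{n+1-k} = \pi - \theta_k$ leaves $|\tan|$ invariant, halving the sum. The substitution $j = n/2 - k + 1$, which sends $\theta_k \mapsto \pi/2 - (2j-1)\pi/(2n)$, converts each $\tan\theta_k$ with $k\le n/2$ into $\cot((2j-1)\pi/(2n))$. Applying the elementary inequality $\cot x < 1/x$ for $x\in(0,\pi/2)$ gives
$$\sum_{k=1}^n |\tan\theta_k| < \frac{4n}{\pi}\sum_{j=1}^{n/2}\frac{1}{2j-1} = \frac{4n}{\pi}\bigl(H_n - \tfrac12 H_{n/2}\bigr),$$
and the digamma-based harmonic bound $H_n - H_{n/2}/2 < (\gamma + \log 2 + \log(n+1))/2$ already established inside the proof of \Cref{lem:Trotter_step_bound} yields
$$\|\mathbf{V}p(0)\|_1 < \frac{2(\gamma + \log 2)}{\pi} + \frac{2}{\pi}\log(n+1).$$
A numerical check that $2(\gamma + \log 2)/\pi \approx 0.808 < 1$ absorbs the constant into the claimed $+1$.

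The main obstacle is the careful sign- and factor-tracking in the Chebyshev-Lagrange closed form at $s=0$ (the parity constraints from $n$ even are essential to avoid a node at $s=0$ and to pin down $|T_n(0)| = 1$); once $|\ell_k(0)| = |\tan\theta_k|/n$ is secured, the rest of the argument directly parallels and reuses the harmonic-sum machinery developed for \Cref{lem:Trotter_step_bound}.
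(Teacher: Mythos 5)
Your proof is correct, and the route you take to the closed form $|(\mathbf{V}p(0))_k| = |\tan\theta_k|/n$ is genuinely different from the paper's. The paper computes $d_k(0) = \sum_{j=0}^{n-1} p_j(s_k)p_j(0)$ directly as an explicit trigonometric sum (which it evaluates with symbolic software), whereas you first identify $\mathbf{V}p(0)$ with the vector of Lagrange cardinal weights $\ell_k(0)$ by matching the two representations of the linear functional $y \mapsto P_{n-1}f(0)$, and then apply the Chebyshev cardinal identity $\ell_k(x) = T_n(x)/\bigl(T_n'(s_k)(x-s_k)\bigr)$ together with $T_n(0)=(-1)^{n/2}$ and $T_n'(s_k) = (-1)^{k-1}n/\sin\theta_k$. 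Your approach is more self-contained and conceptual: it explains \emph{why} the tangent form arises rather than pulling it out of a symbolic evaluation, and it also makes manifest that what is being bounded is precisely the Lebesgue function $\lambda_n(0) = \sum_k |\ell_k(0)|$ at the origin, which is exactly the quantity needed for well-conditioning. After the closed form is secured, the two arguments coincide: the paper bounds $\tan(\theta_k)$ by $1/(\pi/2 - \theta_k)$ directly, while you substitute to a cotangent and use $\cot x < 1/x$; these are the same inequality under $x \mapsto \pi/2 - x$. Both then land on $\frac{4n}{\pi}\bigl(H_n - \tfrac12 H_{n/2}\bigr)$ and reuse the digamma bound from the proof of \Cref{lem:Trotter_step_bound}. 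Your numerical check $2(\gamma + \log 2)/\pi \approx 0.808 < 1$ closes the argument correctly.
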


\begin{proof}
Let $d(s) = \mathbf{V}p(s)$. For each $k = 1, 2,\dots n$ we have
\begin{align}
\begin{aligned}
    d_k(s) &= \sum_{j=0}^{n-1} \mathbf{V}_{kj} p_j(s) = \sum_{j=0}^{n-1} p_j (s_k) p_j(s) \\
    &= \frac1n+\frac2n \sum_{j=1}^{n-1} \cos\left(j  \left(\frac{2k-1}{2 n}\pi\right) \right) \cos(j \cos^{-1}(s)).
\end{aligned}
\end{align}
At $s  = 0$, $\cos(j \cos^{-1}(0)) = \cos(j \pi/2)$, which is zero for odd $j$. Hence,
\begin{align}
\begin{aligned}
    d_k(0) &= \frac1n + \frac2n \sum_{j=2, \text{even}}^{n-2}\cos\left(j  \left(\frac{2k-1}{2 n}\pi\right) \right) (-1)^{j/2} \\
    &= \frac1n + \frac2n \sum_{j'=1}^{n/2 -1}(-1)^{j'} \cos\left(\pi j' \frac{2k-1}{n}\right).
\end{aligned}
\end{align}
The sum can be evaluated exactly (the authors used Mathematica), yielding
\begin{align}
    d_k(0) &= \frac1n - \frac2n\left(\frac{1-\cos((k+n/2)\pi)\tan(\pi\frac{2k-1}{2n})}{2}\right) \\
    &=\frac{1}{n}-\frac{1}{n}\left(1-(-1)^{k+n/2}\tan(\frac{2k-1}{2n}\pi)\right) \\
    &= \frac{1}{n}(-1)^{k+n/2} \tan\left(\frac{2k-1}{2n}\pi\right).
\end{align}
With coefficients in hand, we now compute the one norm of $d(0)$.
\begin{align}
    \norm{d(0)}_1 = \frac{1}{n} \sum_{k=1}^n \abs{\tan\left(\frac{2k-1}{2n}\right)}
\end{align}
We have a reflectional symmetry about $k \rightarrow n-k + 1$, allowing us to cut the sum in half and remove the absolute value sign.
\begin{align}
\begin{aligned}
    \norm{d(0)}_1 &= \frac{2}{n} \sum_{k=1}^{n/2} \tan\left(\frac{2k-1}{2n}\pi\right) \\
    &= \frac{1}{m} \sum_{k=1}^m \tan\left(\frac{2k-1}{2m}\frac{\pi}{2}\right)
\end{aligned}
\end{align}
Here, $m\equiv n/2$. We observe that the sum increases as $k$ approaches $m$ due to the first order pole at $\pi/2$. We can upper bound $\tan(x)$, and therefore the sum above, as follows.
\begin{align}
\begin{aligned}
    \frac{1}{m} \sum_{k=1}^m \tan\left(\frac{2k-1}{2m}\frac{\pi}{2}\right) \leq \frac{1}{m}\sum_{k=1}^m \frac{1}{\frac{\pi}{2}-\frac{\pi}{2}\left(\frac{2k-1}{2m}\right)} &= \frac{4}{\pi} \sum_{k=1}^m \frac{1}{2(m-k)+1} \\
    &= \frac{4}{\pi}\sum_{j=1}^m \frac{1}{2j-1}
\end{aligned}
\end{align}
In the last line, we reindexed by $j = m-k + 1$. Borrowing the reasoning from the prior lemma, 
\begin{align}
    \frac{4}{\pi} \sum_{j=1}^m \frac{1}{2j-1} < \frac{2}{\pi} (\gamma + \log(2n+2)).
\end{align}
Tracing back, this is an upper bound on $\norm{d(0)}_1$. Hence,
\begin{align}
    \norm{d(0)}_1 < \frac{2}{\pi}\log(n+1) + \frac{2(\gamma + \log(2))}{\pi} < \frac{2}{\pi}\log(n+1) + 1
\end{align} 
\end{proof}
The benefit of well-conditioning comes from relaxing the need to have exquisitely precise data to achieve good interpolations. This property is captured by the following theorem.

\begin{theorem}\label{thm:extrapBd}
Let $y = (f(s_1), f(s_2), \dots, f(s_n))^T$, and let $\tilde{y} \in \mathbb{R}^n$ be an approximation of $y$ in the sense that, for all $1 \leq j \leq n,$ $\vert f(s_j)-\tilde{y}_j \vert\le \epsilon/(\frac{2}{\pi}\log(n+1)+1)$ with probability at least $1-\delta/n$.  Let $p(s) = (p_0(s), \dots, p_{n-1}(s))^T$ be the vector of orthonormal Chebyshev polynomials. Then $\tilde{y}^T \mathbf{V} p(s)$ is an estimate of the interpolant $\polynom_{n-1} f(s)$ at $s = 0$ to precision
$$
\abs{\polynom_{n-1} f(0) -\tilde{y}^T \mathbf{V} p(0)} \le \epsilon
$$
with probability at least $1-\delta$.
\end{theorem}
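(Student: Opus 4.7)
The plan is to reduce the claim to a straightforward application of Hölder's inequality combined with the bound from Lemma~\ref{lem:c_norm} and a union bound over the $n$ sample points. First I would note that, by equation~\eqref{eq:cheb_sys} together with $c = \mathbf{V}^T y$, the exact interpolant evaluated at $s=0$ may be written as
\begin{equation}
\polynom_{n-1} f(0) = c^T p(0) = (\mathbf{V}^T y)^T p(0) = y^T \mathbf{V} p(0),
\end{equation}
so the estimator $\tilde{y}^T \mathbf{V} p(0)$ differs from the true value only through $y - \tilde{y}$:
\begin{equation}
\polynom_{n-1} f(0) - \tilde{y}^T \mathbf{V} p(0) = (y - \tilde{y})^T \mathbf{V} p(0).
\end{equation}

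Next I would apply Hölder's inequality in the $(\infty,1)$ dual pairing:
\begin{equation}
\abs{(y-\tilde{y})^T \mathbf{V} p(0)} \leq \|y-\tilde{y}\|_\infty \, \|\mathbf{V} p(0)\|_1.
\end{equation}
Lemma~\ref{lem:c_norm} immediately provides the bound $\|\mathbf{V} p(0)\|_1 < \frac{2}{\pi}\log(n+1) + 1$, so it only remains to control $\|y - \tilde{y}\|_\infty$. For this I would invoke the union bound: by the hypothesis, each individual entry satisfies $|y_j - \tilde{y}_j| \leq \epsilon/(\frac{2}{\pi}\log(n+1)+1)$ with failure probability at most $\delta/n$, so the probability that any entry violates this bound is at most $n \cdot (\delta/n) = \delta$. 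Hence with probability at least $1-\delta$ we have
\begin{equation}
\|y - \tilde{y}\|_\infty \leq \frac{\epsilon}{\frac{2}{\pi}\log(n+1) + 1}.
\end{equation}

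Multiplying the two bounds then yields the desired inequality $\abs{\polynom_{n-1} f(0) - \tilde{y}^T \mathbf{V} p(0)} \leq \epsilon$ with probability at least $1-\delta$. There is no real obstacle here once Lemma~\ref{lem:c_norm} is in hand; the only subtle point is recognizing that the natural conditioning quantity for evaluating the interpolant at a single point is the 1-norm $\|\mathbf{V} p(0)\|_1$ (equivalently, the Lebesgue-function value at $s=0$), rather than the operator norm of $\mathbf{V}$. This is precisely why the $\log n$ factor, rather than a much larger Lebesgue constant, suffices, and it is what makes the Chebyshev-node interpolation well-conditioned for the extrapolation task.
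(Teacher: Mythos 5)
Your proof is correct and follows essentially the same route as the paper's: rewrite $\polynom_{n-1}f(0) - \tilde{y}^T\mathbf{V}p(0)$ as $(y-\tilde{y})^T\mathbf{V}p(0)$, apply H\"older's inequality in the $(\infty,1)$ pairing, invoke Lemma~\ref{lem:c_norm} for $\|\mathbf{V}p(0)\|_1$, and use a union bound over the $n$ data points. The only stylistic difference is that the paper first notes the probability $(1-\delta/n)^n$ before correcting to the union-bound statement, whereas you go directly to the union bound, which is cleaner.
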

\begin{proof}
First, observe that $P_{n-1} f (s)= p(s)^T c = p(s)^T \mathbf{V}^T y$ by the discussion surrounding~\eqref{eq:cheb_sys}. Hence,
\begin{align}
    \abs{\polynom_{n-1} f(0) - p(0)^T \mathbf{V}^T \tilde{y}} &= \abs{ (\mathbf{V} p(0))^T (y - \tilde{y})}.
\end{align}
By H{\"o}lder's inequality,
\begin{align}
\begin{aligned}
    \abs{(\mathbf{V} p(0))^T (y - \tilde{y})} &\leq \norm{\mathbf{V} p(0)}_1 \norm{y - \tilde{y}}_\infty.
\end{aligned}
\end{align}
From Lemma~\ref{lem:c_norm}, and from the assumptions on the distance between $y$ and $\tilde{y},$
\begin{align}
    \norm{\mathbf{V} p(s)}_1 \norm{y - \tilde{y}}_\infty \leq \left(\frac{2}{\pi} \log(n + 1) + 1\right) \frac{\epsilon}{\frac{2}{\pi} \log(n + 1) + 1} = \epsilon
\end{align}
with probability $\Pr = (1-\delta/n)^n$. In fact, since the probability of each component $\tilde{y}$ exceeding the specified distance is $\delta/n$, by the union bound the total probability of at least one component exceeding this distance is less than $n \times (\delta/n) = \delta$. Thus, the inequality is satisfied with probability $\Pr \geq 1-\delta$. This completes the proof.
\end{proof}

Theorem~\ref{thm:extrapBd} is what suggests that our interpolation approach may have the potential to achieve accuracy improvements without increasing costs compared to standard Trotter. It tells us that the error in Trotter data can be as large as the error of the final estimate up to a factor which is logarithmically small in the number of interpolation points, and therefore these data $\tilde{y}_i$ can be computed "cheaply enough." Thus, Theorem~\ref{thm:extrapBd} is plays an important role in the proofs of Appendices~\ref{app:PE_proof} and~\ref{app:expvals_proof}, and in the corresponding results presented below.

In the applications to Gaussian phase estimation we consider in~\Cref{subsubsec:gqpea}, we will need a probabilistic statement of well-conditioning that takes into account confidence intervals for the data. This is supplied by the following lemma.

\begin{lemma}\label{lem:extrapGaussBd}
Let $\{Y_j\}$ be normally distributed random variables for the $n$ Chebyshev interpolation points, with central values $\mu_j=f(s_j)$ and variances $\sigma_j$. Then the random variable $R_{n-1}(s)$ associated with the polynomial interpolant at value $s$ is normally distributed with mean $P_{n-1}(s)$ and variance $\sigma_R^2$ bounded as
\begin{align}
    \sigma_{R}^2(s) \leq 2 \max_j\left( \sigma^2_j\right).
\end{align}
\end{lemma}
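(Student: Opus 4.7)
The plan is to observe that the interpolant random variable is simply a fixed linear combination of the data random variables $Y_j$, and then to use the orthogonality of $\mathbf{V}$ (already established in the well-conditioning discussion) to control the variance.

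First, I would reuse the representation $P_{n-1} f(s) = p(s)^T \mathbf{V}^T y$ derived from $c = \mathbf{V}^T y$ in the paragraph surrounding~\eqref{eq:cheb_sys}. Replacing the deterministic vector $y$ by the random vector $Y=(Y_1,\dots,Y_n)^T$ gives
\begin{equation}
R_{n-1}(s) = p(s)^T \mathbf{V}^T Y = \sum_{j=1}^n w_j(s) Y_j, \qquad w(s) := \mathbf{V} p(s).
\end{equation}
Assuming the $Y_j$ are independent normal (or jointly normal), $R_{n-1}(s)$ is normal by closure of the Gaussian family under linear combinations. Its mean is $\sum_j w_j(s)\mu_j = p(s)^T \mathbf{V}^T \mu = P_{n-1}(s)$ by the same calculation applied to $\mu=(f(s_1),\dots,f(s_n))^T$, and its variance is $\sigma_R^2(s)=\sum_j w_j(s)^2 \sigma_j^2$.

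Second, I would pull out the maximum variance and recognize the remaining quantity as a squared 2-norm:
\begin{equation}
\sigma_R^2(s) = \sum_{j=1}^n w_j(s)^2 \sigma_j^2 \le \Big(\max_j \sigma_j^2\Big)\, \|w(s)\|_2^2 = \Big(\max_j \sigma_j^2\Big)\, p(s)^T \mathbf{V}^T \mathbf{V} p(s).
\end{equation}
Here I invoke the orthogonality of $\mathbf{V}$ (i.e. $\mathbf{V}^T\mathbf{V}=\openone$), established via the discrete orthonormality relation just before Lemma~\ref{lem:Trotter_step_bound}, which collapses the expression to $\|p(s)\|_2^2$.

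Finally, I would bound $\|p(s)\|_2^2$ using the normalization in~\eqref{eq:cheb_orthonorm_def} and the standard bound $|T_j(s)|\le 1$ for $s\in[-1,1]$:
\begin{equation}
\|p(s)\|_2^2 = \frac{1}{n} T_0(s)^2 + \frac{2}{n}\sum_{j=1}^{n-1} T_j(s)^2 \le \frac{1}{n} + \frac{2(n-1)}{n} = 2 - \frac{1}{n} < 2.
\end{equation}
Combining gives the claimed $\sigma_R^2(s) \le 2 \max_j \sigma_j^2$. There is no real technical obstacle here; the only subtlety worth flagging explicitly in the write-up is the (implicit) independence of the $Y_j$, which is what guarantees both the normality of the linear combination and that variances (rather than covariances) determine $\sigma_R^2$.
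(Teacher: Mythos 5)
Your proof is correct and follows essentially the same route as the paper's: both express $R_{n-1}(s)$ as a linear combination with weights $w(s)=\mathbf{V}p(s)$, use the orthogonality of $\mathbf{V}$ to reduce the variance bound to $\|p(s)\|_2^2$, and then use $|T_j|\leq 1$ to bound that norm by $2$. Your write-up is slightly more explicit than the paper's in two respects worth keeping: you state outright the independence (or joint normality) assumption on the $Y_j$, which the paper invokes only implicitly when it replaces $\Cov(Y_j,Y_k)$ by $\delta_{jk}\sigma_j^2$, and you compute $\|p(s)\|_2^2 \leq 2 - 1/n$ directly rather than leaving the final numerical factor unexplained.
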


\begin{proof}
The variable $R_{n-1}(s)$ is given by $p(s)^T \mathbf{V} Y$, with $Y = (Y_1, Y_2, \dots, Y_n)$. By linearity, the expectation value $E(R_{n-1}(s)) = p(s)^T \mathbf{V} E(Y) = p(s)^T \mathbf{V} y = P_{n-1}(s)$ as claimed, where $y = (f(s_1), \dots, f(s_n)).$

To bound the variance, observe that
\begin{align}
    \sigma^2_R(s)=\Var\left(R_{n-1}(s)\right) &\equiv \Cov\left(R_{n-1} (s),R_{n-1}(s)\right) \cr 
                &= \sum_{i,j,k,l} \Cov\left(p_i(s) \left(\mathbf{V}^{-1}\right)_{ij} Y_j,p_l(s) \left(\mathbf{V}^{-1}\right)_{lk} Y_k \right) \cr
                &= p^T(s) \mathbf{V}^{-1} \mathbf{\Sigma}^2 \mathbf{V} p (s)\\
                &= \norm{\mathbf{\Sigma} \mathbf{V}p(s)}_2^2,
\end{align}
where $\mathbf{\Sigma} = \diag\left(\sigma_1,\sigma_2,\dots,\sigma_n\right)$, and the last inner product was identified as a squared two norm. We can bound this using the spectral norm. 
\begin{align}
    \sigma_{R}^2(s) \leq \|\mathbf{\Sigma}\|^2 \| p(s) \|_2^2 \leq 2 \max_j \sigma_j^2
\end{align}
\end{proof}
At first, it may seem that the results of this lemma are quite restrictive, only applying to unbiased, normally distributed estimates of the function values $f(s_i)$. However, we will find occasion to apply this lemma in the context of the Gaussian phase estimation considered in~\Cref{subsubsec:gqpea}.

\section{Interpolation for Eigenvalues} \label{sec:PE}

\subsection{Approach 1: Error Analysis through Perturbation Theory} \label{subsec:perturbPE}
Having laid the groundwork for well-conditioned polynomial interpolation, we apply our framework to the task of phase estimation.  The idea is to perform logarithmically many phase estimation experiments evaluated at the Chebyshev nodes~\ref{eq:cheb_node_def}. We then bound the error using the interpolation theory of the previous section. The following theorem bounds the performance of such an algorithm.

\begin{theorem}\label{thm:PEMain}
Let $H:\mathbb{R}\mapsto \mathbb{C}^{2^n\times 2^n}$ be a Hamiltonian of the form  $H(t) =\sum_{j=1}^m a_j(t) H_j$ for all $t \in \mathbb{R}$, where each $H_j$ is a Hermitian matrix and $a_j(t)$ is real valued and in $C^n$ for positive integer $n$.  For each $t\in \mathbb{R}$, let $\ket{\ell(t)}$ be an eigenstate of $H(t)$ with eigenvalue $\lambda_\ell(t)$, and assume oracular preparation of $\ket{\ell (t)}$. Further, assume that there exists a minimum gap $\gamma(t)>0$ such that for all  $\ell'\ne \ell$ and $t>0$, $|\lambda_\ell(t)-\lambda_{\ell'}(t)|\ge \gamma(t)$.  Finally, that $t$ is sufficiently small such that the assumption~\eqref{eq:lem1_assumpt} of~\Cref{lem:Heffderiv} holds. It is then possible to use an $n$-point polynomial interpolation formula over the $2k$-th order Suzuki Trotter formula to estimate $\lambda_\ell(0)$, within error $\epsilon$ and failure probability at most $1/3$, using a number of operator exponentials and queries that is bounded above by
$$
\widetilde{O}\left(\frac{m^2(25/3)^k \max\|H_i\| (1+\Gamma)}{\epsilon} \right),
$$
where 
$$
\Gamma:= \max_{t,\, p=1,\ldots,n}\left( \frac{p}{n}\right)\left(\frac{ke^2(5/3)^km\max_{l\in [1,m]} \|H_l\|}{\gamma(t)}\right)^{1/p} 
$$
and $n\in \widetilde{O}\left( \log\left(mk(5/3)^k\max_i \|H_i\|(1+\Gamma)\right)  + \log 1/\epsilon\right)$.
\end{theorem}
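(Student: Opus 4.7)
The plan is to reduce the theorem to an application of the interpolation machinery developed in \Cref{sec:setup}. Define $f(s):=\lambda_\ell(s)$ to be the $\ell$-th eigenvalue of the effective Trotter Hamiltonian $\tilde{H}_s$ as a function of the dimensionless step $s$. By the minimum-gap assumption and the smoothness of $\tilde{H}_s$ near the origin, $f$ is smooth on a neighborhood of $s=0$, and the target quantity $\lambda_\ell(0)$ equals $\lim_{s\to 0}f(s)$. The algorithm is to run phase estimation on $\tilde{U}_{s_i}(t)=S_{2k}(s_i t)^{1/s_i}$ at each Chebyshev node $s_i\in[-a,a]$ (with fractional powers realized by QSVT at constant overhead), producing approximate data $\tilde{y}_i\approx f(s_i)$, and return the Chebyshev interpolant $\tilde{y}^T \mathbf{V}p(0)$ as the estimator of $\lambda_\ell(0)$. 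The total error splits into an interpolation bias and a data error to be controlled separately.

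The interpolation bias is handled by \Cref{lem:Cheb_error}, which gives $|E_{n-1}(0)|\le \max_{s\in[-a,a]}|f^{(n)}(s)|\,(a/(2n))^n$, so what remains is to bound derivatives of an isolated eigenvalue of $\tilde{H}_s$. The cleanest route is analytic perturbation theory: whenever $a$ is small enough that $\|\tilde H_{s+h}-\tilde H_s\|\lesssim \gamma(t)/2$ for all $|h|\le\rho$, the nondegenerate eigenvalue $\lambda_\ell$ extends analytically to a complex disk of radius $\rho\sim \gamma(t)/\|\partial_s\tilde H_s\|$, and Cauchy's derivative estimate yields $|f^{(n)}(s)|\lesssim n!\,\|\tilde H_s\|\,\rho^{-n}$. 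Combining with the bound $\|\partial_s\tilde H_s\|\le 2t^{-1}B^2$ from \Cref{lem:Heffderiv}, where $B:=e^2 k(5/3)^k m\max_l\|H_l\|t$, and choosing $a$ at the boundary of the validity region of that lemma (i.e.\ $a\sim 1/(k(5/3)^k m\max_l\|H_l\|t)$), the interpolation bias decays roughly like $((1+\Gamma)/n)^n$. Solving for the bias to be at most $\epsilon/2$ gives the stated $n\in\widetilde O(\log(mk(5/3)^k\max_l\|H_l\|(1+\Gamma))+\log(1/\epsilon))$, where the $\Gamma$ of the theorem statement absorbs precisely the $1/\gamma$-dependence coming from $\rho$.

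For the data error, \Cref{thm:extrapBd} shows it suffices to estimate each $f(s_i)$ to precision $\epsilon'=\Theta(\epsilon/\log n)$ with per-node failure probability at most $1/(3n)$; a union bound then gives total failure at most $1/3$. Heisenberg-limited phase estimation on $\tilde U_{s_i}(t)$, initialized with the oracularly prepared eigenstate $\ket{\ell(0)}$ (whose overlap with the corresponding eigenstate of $\tilde H_{s_i}$ is close to one by the same gap/perturbation argument used above), achieves this in $\widetilde O(1/\epsilon')$ queries to $\tilde U_{s_i}(t)$. Each query invokes $S_{2k}$ exactly $1/|s_i|$ times, at $O(m\cdot 5^{k-1})$ exponentials per $S_{2k}$. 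Summing over nodes with \Cref{lem:Trotter_step_bound}, which gives $\sum_i 1/|s_i|=O((n\log n)/a)$, the total exponential count is
\[
\widetilde O\!\left(\frac{m\cdot 5^k}{a\,\epsilon}\right)\;=\;\widetilde O\!\left(\frac{m^2\,(25/3)^k\max_i\|H_i\|\,(1+\Gamma)}{\epsilon}\right),
\]
using $5^k\cdot(5/3)^k=(25/3)^k$ and substituting $a$, with all residual logarithmic dependence on $n$, $\gamma$, and $m$ absorbed into the $\widetilde O$ and the $(1+\Gamma)$ factor.

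The main obstacle is the perturbation-theoretic step: translating the operator-norm bound of \Cref{lem:Heffderiv} into a bound on $|\partial_s^n\lambda_\ell(s)|$ whose $\gamma$-dependence matches the expression for $\Gamma$ in the theorem. A direct Leibniz expansion of the resolvent produces factorial blow-ups that are awkward to tame, so I would prefer the radius-of-analyticity plus Cauchy-estimate route sketched above; the trade-off is that one must verify the gap persists \emph{uniformly} over $[-a,a]$ and not merely at $s=0$, which forces the choice of $a$ to depend on $\gamma(t)$ as well as on the condition of \Cref{lem:Heffderiv}. Two secondary issues are (i) ensuring the evolution time $t$ used inside $\tilde U_{s_i}(t)$ is large enough for Heisenberg scaling but small enough to avoid aliasing modulo $2\pi/t$ in phase estimation, and (ii) quantifying the overlap loss between $\ket{\ell(0)}$ and the eigenstates of $\tilde H_{s_i}$, both standard but needing to be checked against the same $a$-constraint.
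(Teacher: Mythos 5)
Your overall architecture matches the paper's: data from phase estimation at the Chebyshev nodes, interpolation bias bounded by \Cref{lem:Cheb_error}, data error controlled by \Cref{thm:extrapBd}, Trotter-step count summed with \Cref{lem:Trotter_step_bound}, and the final arithmetic. The genuine divergence is in the single technical step you flag yourself as ``the main obstacle'': converting the operator bound of \Cref{lem:Heffderiv} into a bound on $|\partial_s^n\lambda_\ell(s)|$.

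The paper does \emph{not} use a radius-of-analyticity/Cauchy argument here. It proves a dedicated lemma (\Cref{lem:Lambda}) by the very ``direct Leibniz expansion of the resolvent'' route you dismiss as awkward: it differentiates the first-order perturbation formulas $\partial_t\lambda_k=\bra{k}\partial_t\tilde H_t\ket{k}$ and $\partial_t\ket{k}=\sum_{j\ne k}\ket{j}\bra{j}\partial_t\tilde H_t\ket{k}/(\lambda_k-\lambda_j)$ repeatedly, controls the combinatorics by tracking a ``degree'' (at most $+4$ per derivative) and a term count (shown inductively to be $\le 8^{q-1}(q-1)!$), and packages the result as $|\partial_t^q\lambda_k|\le 8^{q-1}(q-1)!\Lambda_\gamma^{q+1}$ with
\begin{align*}
\Lambda_\gamma:=\max_p\sup_t\left(\|\partial_t^p\tilde H_t\|^{1/(p+1)},\;\Bigl(\tfrac{\|\partial_t^p\tilde H_t\|}{\gamma(t)}\Bigr)^{1/p}\right).
\end{align*}
The $\max_p(\cdot)^{1/p}$ structure in the theorem's $\Gamma$ is an artifact of precisely this $\Lambda_\gamma$ construction; a Cauchy estimate on a disk of radius $\rho\sim\gamma/\|\partial_s\tilde H_s\|$ would naturally yield a bound of the form $n!\,\rho^{-n}$ with a \emph{single} power of $1/\gamma$ rather than the $p$-indexed family, so your route would not reproduce the stated $\Gamma$ without further work. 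Notably, the analyticity/Bernstein-ellipse route you prefer \emph{is} in the paper — it is ``Approach 2'' of \Cref{subsec:Bernstein}, built on \Cref{lem:Berns}, \Cref{lem:BauerFike}, and \Cref{lem:radius_of_analiticity} — but it is used to prove \Cref{thm:gauss_phase_est}, not \Cref{thm:PEMain}. So you have effectively described a valid alternative argument that the authors also develop, but for a different theorem with a differently shaped complexity bound. To close the gap for the statement as given, you would either need to prove a lemma equivalent to \Cref{lem:Lambda} (the ``awkward'' combinatorics does close up cleanly) or re-derive the cost bound from your Cauchy estimate and show it is subsumed by the theorem's $\tilde O$ with the stated $\Gamma$.

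One smaller point: you correctly identify that overlap loss between $\ket{\ell(0)}$ and the eigenstate of $\tilde H_{s_i}$ must be controlled, but the paper's proof of \Cref{thm:PEMain} sidesteps this by assuming oracular preparation of $\ket{\ell(t)}$ for each $t$ (equivalently, for each effective Hamiltonian sampled), so no overlap argument is needed there; the overlap/gap analysis appears instead in the state-preparation theorem (\Cref{thm:gauss_state_prep}). Your secondary concern (ii) is thus moot under the theorem's hypotheses.
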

The proof of this theorem is technical and relegated to~\Cref{app:PE_proof}. As a brief sketch, the proof proceeds by using perturbation theory to evaluate the derivatives of the eigenvalues and eigenvectors of the effective Hamiltonian $\tilde{H}_s$, as well as multiple applications of the triangle inequality and similar linearizing approximations.

The above result applies generically to any piecewise analytic Hamiltonian $H(t)$. To give a better intuition for how it could be used, let us focus our attention on a phase estimation protocol.  There are multiple phase estimation procedures that can be employed, but the shared basic idea is that one provides a quantum state $\ket{\psi} = \sum_j c_j(t) \ket{j(t)}$ and performs a series of evolutions of the form $e^{-i H(t) t}$ to this state to yield an estimate of one of the eigenvalues, $\exp(-i\lambda_l(t)t)$. Each eigenvalue is randomly sampled with probability $|c_j(t)|^2$.  We will usually demand that the variance of this estimate is bounded above by $t^2$ and that the expected error is at most $\epsilon$ for such a procedure, but it is also common for the accuracy guarantees to be given in terms of a probability of failure.

\subsubsection{Application: Estimation of Trotter Error}

\begin{figure}[t!]
\centering
\begin{quantikz}
    \lstick{$\ket{0}$}      & \gate{H} & \ctrl{1} & \octrl{1} & \gate{H} & \qw \\
    \lstick{$\openone/2^n$} & \qw      & \gate{U} & \gate{-V} & \qw      & \qw
\end{quantikz}
    \caption{Quantum circuit for computing a normalized Frobenius distance $d(U,V)$ between unitaries $U, V$, where the probability of measuring zero on the first qubit is $d(U,V)^2$. By taking $U$ as a single Trotter step and $V$ as the $s$-dependent product formula of equation~\eqref{eq:s_param_S2k}, polynomial extrapolation can be performed to estimate the true Trotter error.}
    \label{fig:errorObs}
\end{figure}
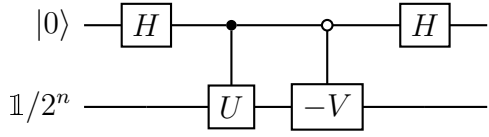

One important capability of our method is the estimation of error in ST formulas.  This is important because existing error bounds are typically not tight, and leading order expansions for the error are prohibitively expensive even for short evolutions~\cite{childs2021theory}.  We address this via a method for computing a distance metric $d$ between two unitaries $U,V$. This metric is related to the Frobenius distance $\|U-V \|_F$, where
\begin{align}
    \|A\|_F := \sqrt{\mathrm{Tr}{A^\dagger A}}
\end{align}
is the Frobenius norm. Specifically, we fix an ST order $k$ and let $\tilde{U}_s$ be as in equation~\eqref{eq:s_param_S2k}. Our method applies to any product formula, but our theoretical analysis restricts us to consider ST formulas. To estimate the Trotter error, we compute
\begin{align} \label{eq:Frobdistdef}
    d(\tilde{U}_1(t),\tilde{U}_s(t)):=\frac{1}{\sqrt{2^n}} \left\|\frac{\tilde{U}_1(t) - \tilde{U}_s(t)}{2}\right\|_F
\end{align}
using the circuit of Figure~\ref{fig:errorObs}. Observe that $\tilde{U}_1 = S_{2k}$ is a single Trotter step, while 
\begin{align}
    \lim_{s\rightarrow 0} \tilde{U}_s(t) = U(t)
\end{align}
is the exact time evolution operator. Therefore, by estimating
\begin{equation} \label{eq:Frob_dist}
d(S_{2k}(t), e^{-i H t}) = \lim_{s\rightarrow 0} d(\tilde{U}_s(t),\tilde{U}_1(t)),    
\end{equation}
we obtain a metric of the Trotter error. This metric $d\equiv d(\tilde{U}_1(t),\tilde{U}_s(t))$ corresponds to a root mean square of the singular values of $S_{2k}(t) - \tilde{U}_s(t)$. That is,
\begin{align}
    d(U,V) = \sqrt{\frac{1}{2^n} \sum_{j=1}^{2^n} (\sigma_j/2)^2}
\end{align}
for singular values $\sigma_j \in [0,2]$ of $U-V$. It has the following relation with the spectral distance.
\begin{align}
    \frac{1}{\sqrt{2^n}}\norm{\frac{U-V}{2}} \leq d(U,V) \leq \norm{\frac{U-V}{2}}
\end{align}
Both the lower and upper bounds are tight generically. The upper bound is tight in the case that all singular values are equal. The normalized Frobenius norm is best interpreted as an average error over all "principal directions." Thus, though not a worst-case error, $d$ still provides a useful characterization of the Trotter error.

Per usual, we consider an extrapolation of the normalized Frobenius distance $d$ to $s=0$.  
\begin{corollary} \label{cor:Frob_est}
Under the assumptions of Theorem~\ref{thm:PEMain}, there exists a quantum algorithm that can compute the quantity $d(S_{2k}(t),e^{-i H t})$ for $H=\sum_{j=1}^m H_j$ with $\epsilon$ error and failure probability (also) $\epsilon$ using a number of operator exponentials $N_{\exp}$ of the $H_j$ satisfying
$$
N_{\exp} \in \tilde{O}\left(\frac{m^2(25/3)^k\max \|H_i\| }{\epsilon} \right).
$$
\end{corollary}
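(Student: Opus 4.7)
The strategy is to estimate the smooth function $f(s) := d(\tilde{U}_1(t), \tilde{U}_s(t))$ at the Chebyshev nodes on a small interval $[-a,a]$ and polynomially extrapolate to $s=0$, where by equation~\eqref{eq:Frob_dist} this equals the target $d(S_{2k}(t), e^{-iHt})$. The circuit in~\Cref{fig:errorObs}, fed the maximally mixed state $\openone/2^n$, outputs zero on the ancilla with probability exactly $f(s)^2$; running amplitude estimation therefore returns an estimate of the amplitude $f(s_i)$ itself (not $f(s_i)^2$) to additive precision $\tilde\epsilon$ using $\tilde{O}(1/\tilde\epsilon)$ controlled calls to $\tilde{U}_{s_i}(t)$ and $\tilde{U}_1(t)$. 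Working directly with $f$ rather than $f^2$ avoids any loss of accuracy through a final square root.

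Next, I would establish smoothness and derivative bounds for $f$. Writing
$$
f(s)^2 = \frac{1}{2} - \frac{1}{2^{n+1}} \operatorname{Re} \operatorname{Tr}\left( e^{i\tilde{H}_1 t}\, e^{-i\tilde{H}_s t}\right),
$$
one reduces the $n$-th $s$-derivative of $f^2$ to a polynomial expression in the $\partial_s^j \tilde{H}_s$ ($j \le n$) via a time-ordered Dyson expansion of $e^{-i\tilde{H}_s t}$, and similarly for $f$ itself via the chain rule with the square root. Plugging in the derivative estimates supplied by~\Cref{lem:Heffderiv} yields a bound of the same exponential-in-$k$/polynomial-in-$n$ form used in the proof of~\Cref{thm:PEMain}, of type $\|f^{(n)}\|_{\infty} \le (Cn)^n$ with $C = O\!\left(k(5/3)^k m \max_i \|H_i\| t\right)$. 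Choosing $a = \Theta(1/C)$ ensures both that the hypothesis~\eqref{eq:lem1_assumpt} of~\Cref{lem:Heffderiv} holds uniformly on $[-a,a]$ and, because $a$ is small, that $f(s)$ stays bounded away from zero on this interval (its value is essentially the nonzero generic Trotter error), so the square root introduces no singularity.

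With these ingredients, I would combine~\Cref{lem:Cheb_error},~\Cref{thm:extrapBd}, and~\Cref{lem:Trotter_step_bound} to complete the cost analysis. \Cref{lem:Cheb_error} bounds the interpolation error at zero by $(Ca/2)^n \le (1/2)^n$, which is driven below $\epsilon/2$ by $n = O(\log 1/\epsilon)$. By~\Cref{thm:extrapBd}, each per-node estimate need only reach precision $\tilde\epsilon = \Theta(\epsilon/\log n)$ with failure probability $\epsilon/n$, so amplitude estimation uses $\tilde{O}(1/\epsilon)$ queries per node. Each query invokes the circuit of~\Cref{fig:errorObs}, which in turn invokes $\tilde{U}_{s_i}(t)$, costing $1/|s_i|$ Trotter sub-steps (plus a single fractional query implemented through the QSVT) of $O(m\cdot 5^k)$ elementary exponentials each. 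Summing over the $n$ Chebyshev nodes using~\Cref{lem:Trotter_step_bound}, $\sum_i 1/|s_i| = \tilde{O}(n/a) = \tilde{O}(Cn)$, and multiplying out, the total exponential count is $\tilde{O}\!\left(m^2 (25/3)^k \max_i \|H_i\| / \epsilon\right)$, exactly as stated, with all remaining logarithmic factors in $\epsilon, m, k, t$ absorbed into $\tilde O$.

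The main obstacle is the derivative-transfer step: passing from $\partial_s^j \tilde{H}_s$ to $\partial_s^n \tilde{U}_s$ (and further to its trace-inner product with the fixed operator $\tilde{U}_1(t)^\dagger$) is nontrivial because of operator non-commutativity, so one cannot differentiate the matrix exponent naively. The cleanest route is a time-ordered Dyson expansion combined with a non-commutative Fa\`a di Bruno bookkeeping over set partitions; while each step is standard, the constants must be tracked sharply so that the final bound preserves the $(25/3)^k$ dependence rather than degrading to a larger exponential in $k$. A secondary, purely technical matter is preparing the maximally mixed input in~\Cref{fig:errorObs}, which is accomplished by a Bell-pair dilation on $2n$ qubits at only a constant overhead that is harmless for the $\tilde O$ bound.
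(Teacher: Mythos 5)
Your proposal is a legitimate and \emph{genuinely different} route from the paper's. The paper reduces the problem to phase estimation of a Grover walk operator $W$ whose eigenphases encode $\sin^{-1}\!\sqrt{p(s)}$, and then invokes \Cref{thm:PEMain} wholesale to handle the interpolation, converting back to the distance at the end via $\hat d = \sin\hat\phi$. You instead apply amplitude estimation to obtain $f(s_i)$ at each node directly, then feed these values into \Cref{lem:Cheb_error}, \Cref{thm:extrapBd}, and \Cref{lem:Trotter_step_bound} to control interpolation error and total Trotter-step count without ever invoking \Cref{thm:PEMain}. Operationally the circuits coincide (amplitude estimation \emph{is} phase estimation of the walk operator), so the difference is purely in the analysis: your version is more self-contained and modular, while the paper's version reuses the eigenvalue-perturbation machinery of \Cref{lem:Lambda} at the cost of a somewhat loose invocation of \Cref{thm:PEMain} on an operator that is not of the Hamiltonian form assumed there.

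The genuine gap in your write-up is the treatment of the square root in the derivative bound. You write $f(s) = \sqrt{g(s)}$ with $g(s) = \tfrac12 - \tfrac{1}{2^{n+1}}\Re\Tr\!\left(e^{i\tilde H_1 t} e^{-i\tilde H_s t}\right)$, and assert that $f$ ``stays bounded away from zero'' so the square root causes no trouble. This is qualitatively true for small $a$ but not quantitatively controlled: the $n$th derivative of $\sqrt{g}$ carries factors of $g^{-1/2}, g^{-3/2}, \dots, g^{-(2n-1)/2}$, and $g(0) = d(S_{2k}(t), e^{-iHt})^2$ is precisely the (squared) Trotter error you are trying to measure, which for smooth Hamiltonians or short $t$ can be arbitrarily small. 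Your final bound would therefore carry inverse powers of $d$ that do not appear in the corollary's statement, so the claim ``exactly as stated'' is not established. To close this, you must either (a) track the dependence on $g(0)$ explicitly and state the result with that caveat, or (b) interpolate $g$ itself and take the square root only at the end, accepting a quadratic precision loss $\epsilon_g \sim \epsilon \cdot d$, or (c) argue that the corollary is implicitly intended for the regime where $d$ is $\Omega(1)$. It is fair to note that the paper's own proof has the same exposure: $\sin^{-1}\!\sqrt{g}$ has the identical $g^{-1/2}$ singularity, and the gap of the walk operator scales as $\sqrt{g(s)}$, so the $(1+\Gamma)$ factor in \Cref{thm:PEMain} should not in general be dropped; the paper's corollary statement silently drops it. So this is a shared soft spot, but since you are explicitly constructing the derivative bound, you are the one who must confront it head on.

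Two smaller points. First, the failure-probability accounting via \Cref{thm:extrapBd} (allocating $\epsilon/n$ per node) actually handles the ``failure probability $\epsilon$'' requirement of the corollary more explicitly than the paper's proof, which only quotes a $2/3$ success probability from \Cref{thm:PEMain}. Second, your derivative-transfer step — Duhamel/Dyson plus Fa\`a di Bruno on $\Tr\!\left(\tilde U_1^\dagger e^{-i\tilde H_s t}\right)$ — parallels exactly what the paper does in the proof of \Cref{lem:interp_error_evals}, so you can cite that lemma's technique rather than re-deriving it, which would also make the $(25/3)^k$ constant-tracking you are worried about automatic.
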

\begin{proof}
Consider the  circuit of Figure~\ref{fig:errorObs}, with $U = \tilde{U}_s$ and $V = \tilde{U}_1$. Let $\mathrm{SELECT}$ refer to the control operations in this LCU-type circuit. We first demonstrate that the probability of measuring the first qubit to be zero is $ \|S_{2k}(t) - \tilde{U}_s(t)\|^2_F/(2^{n+2})$.  This fact follows from the analysis of the LCU lemma~\cite{LCU_2012}; however, that analysis is typically performed for pure states, so we generalize it here. Let $\ketbra{0}{0} \otimes \rho$ be the input state of the circuit.  Then the gate operations proceed as follows.
\begin{align}
    \ketbra{0}{0}\otimes \rho &\mapsto_H \sum_{ij}\frac{\ket{i}{\bra{j}}}{2} \otimes \rho\nonumber\\
    &\mapsto_{\mathrm{SELECT}}  \sum_{ij}\frac{\ket{i}{\bra{j}}}{2} \otimes (\tilde{U}_s(t)^i (-\tilde{U}_1(t))^{1-i})\rho (\tilde{U}_s(t)^j (-\tilde{U}_1(t))^{1-j})^\dagger\nonumber\\
    &\mapsto_{H}  \sum_{ijk\ell} (-1)^{ik+j\ell}\frac{\ket{k}{\bra{\ell}}}{4} \otimes (\tilde{U}_s(t)^i (-\tilde{U}_1(t))^{1-i})\rho (\tilde{U}_s(t)^j (-\tilde{U}_1(t))^{1-j})^\dagger:=\sigma
\end{align}
We then have that the probability of measuring the first qubit to be zero is
\begin{equation}
    \mathrm{Tr}((\ketbra{0}{0} \otimes \rho) \sigma) = \frac{1}{4} \mathrm{Tr}\left((\tilde{U}_s(t) - \tilde{U}_1(t)) \rho (\tilde{U}_s(t) - \tilde{U}_1(t))^\dagger\right).
\end{equation}
Taking $\rho = \openone /2^n$ in particular yields
\begin{align}
     \frac{1}{4} \mathrm{Tr}\left((\tilde{U}_s(t) - \tilde{U}_1(t)) \rho (\tilde{U}_s(t) - \tilde{U}_1(t))^\dagger\right) 
     &= \frac{\mathrm{Tr}\left((\tilde{U}_s(t) - \tilde{U}_1(t)) (\tilde{U}_s(t) - \tilde{U}_1(t))^\dagger\right)}{2^{n+2}} \nonumber\\
     & = \frac{\|\tilde{U}_s(t) - \tilde{U}_1(t)\|_F^2}{2^{n+2}}
\end{align}
Thus, the probability of measuring $0$ on the first qubit gives a normalized Frobenius distance between the two operators.  The cost of doing this is $O(1)$ queries to the underlying ST formulas, each of which boils down to $O(m5^{k})$ operator exponentials.

Using Amplitude Estimation on the target, we can construct an operator $W$ such that the eigenvalues of $W$ within the subspace supporting the initial state are of the form 
\begin{equation} \label{eq:Trotter_error_walk}
\lambda(W) = 
    \exp\left\{\pm i \sin^{-1}\sqrt{\frac{\|\tilde{U}_s(t) - \tilde{U}_1(t)\|_F^2}{2^{n+2}}} \right\}.
\end{equation} We then can invoke Theorem~\ref{thm:PEMain} to show that the number of exponentials needed to learn the extrapolated phase, within error $\epsilon'$ with probability greater than $2/3$, satisfies
\begin{equation}
    N_\mathrm{exp} \in \tilde{O}\left(\frac{m^2(25/3)^k\max \|H_i\| \log^{5/2}(\max_i \|H_i\| / \epsilon')}{\epsilon'} \right).
\end{equation}
The remaining question is how small $\epsilon'$ must be to guarantee that the error in the normalized Frobenius distance is at most $\epsilon$.  Let $\hat{\phi}$ denote the estimate of the phase returned by our protocol, which has error at most $\epsilon'$.  Our estimate $\hat{d}$ of the distance metric is then related to the phase in equation~\eqref{eq:Trotter_error_walk} as
\begin{equation}
    \hat{d} = \sin(\hat{\phi}).
\end{equation}
Thus we have that
\begin{align}
    \abs{\hat{d} - d} = \abs{\sin\hat{\theta} - \sin\theta} \leq \abs{\hat{\theta}-\theta}  \leq \epsilon'.
\end{align}
Choosing $\epsilon' = \epsilon$, then, is sufficient to ensure the desired tolerance as given in the lemma.
\end{proof}

As mentioned, $d$ is not a worst-case simulation error, but it may be a good approximation in typical situations. To get a true upper bound, we could employ the relation $\norm{A} \leq \norm{A}_F$, but estimating $\norm{U-V}_F$ directly from $d$ requires precision exponential in $n$. Moreover, the upper bound would not be very tight unless $d$ were already quite close to $\norm{U-V}$. 

In summary, our estimates should be better when the spread of the eigenvalues in the error operator is not too large. We formalize this in the following theorem.

\begin{corollary}\label{cor:Markov_Frob}
Let $\delta U^2 = |\tilde{U}_s(t) - \tilde{U}_1(t)|^2$, and suppose that 
$$
\sigma^2 (\delta U^2) = \mathrm{Tr}(\delta U^4)/2^n - \mathrm{Tr}(\delta U^2/2^n)^2 \le \xi
$$ 
for some $\xi > 0$.
Then each eigenvalue $\lambda^2$ of $\delta U^2$ satisfies
\begin{equation}
\mathrm{Pr}\left(\abs{\lambda^2 - \frac{\|\tilde{U}_s(t) - \tilde{U}_1(t)\|_F^2}{2^n}} \geq k\sqrt{\xi}\right) \leq \frac{1}{k^2}. \label{eq:PE_Cheb}
\end{equation}
for all $k>0$. Further, even with no such promise about the variance, the following weaker bound holds.
\begin{equation}
\mathrm{Pr}\left(\abs{\lambda^2 - \frac{\|\tilde{U}_s(t) - \tilde{U}_1(t)\|_F^2}{2^n}} \geq k\frac{\|\tilde{U}_s(t) - \tilde{U}_1(t)\|_F^2}{2^n}\right)\leq 1/k
\end{equation}
\end{corollary}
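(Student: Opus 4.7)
The plan is to recognize this corollary as nothing more than Chebyshev and Markov inequalities in disguise, applied to the empirical distribution of eigenvalues of the positive semidefinite operator $\delta U^2$. Concretely, I would regard $\lambda^2$ as a random variable obtained by sampling uniformly from the $2^n$ eigenvalues of $\delta U^2$. Under this distribution the mean is $\mathbb{E}[\lambda^2] = \mathrm{Tr}(\delta U^2)/2^n = \|\tilde{U}_s(t)-\tilde{U}_1(t)\|_F^2/2^n$, and the variance is exactly $\sigma^2(\delta U^2) = \mathrm{Tr}(\delta U^4)/2^n - (\mathrm{Tr}(\delta U^2)/2^n)^2$, so the quantities in the statement have a clean probabilistic meaning. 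The first step is to establish this dictionary.

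For the first bound, the plan is to invoke Chebyshev's inequality directly: for any real random variable $X$ with finite variance, $\Pr(|X-\mathbb{E}[X]| \geq a) \leq \mathrm{Var}(X)/a^2$. Setting $a = k\sqrt{\xi}$ and using the hypothesis $\sigma^2(\delta U^2) \leq \xi$ immediately yields $\Pr(|\lambda^2 - \mathbb{E}[\lambda^2]| \geq k\sqrt{\xi}) \leq \xi/(k^2\xi) = 1/k^2$, which is exactly the claim.

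For the second bound, no variance hypothesis is available, but $\lambda^2$ is non-negative because $\delta U^2$ is positive semidefinite. I would split the tail event into its upper and lower halves, $\{\lambda^2 \geq (1+k)\mu\}$ and $\{\lambda^2 \leq (1-k)\mu\}$ with $\mu = \mathbb{E}[\lambda^2]$. The lower half is controlled for free: if $k \geq 1$ it is empty since $\lambda^2 \geq 0 \geq (1-k)\mu$, while if $k < 1$ the target bound $1/k$ already exceeds one and is trivially true. For the upper half I would apply Markov's inequality, $\Pr(\lambda^2 \geq (1+k)\mu) \leq \mu/((1+k)\mu) = 1/(1+k) \leq 1/k$, combining the two to finish.

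There is really no hard step here; the only mild subtlety I would flag is making the probability space explicit (uniform over eigenvalues rather than over a physical measurement outcome) so that the identifications of mean and variance with trace quantities are unambiguous, and the handling of the $k<1$ regime in the Markov bound where the statement is vacuous but worth acknowledging.
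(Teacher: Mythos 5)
Your proposal is correct and takes essentially the same approach as the paper, which dispatches the corollary in one line as a direct consequence of Chebyshev's and Markov's inequalities; you have simply made explicit the implicit probability space (uniform over the $2^n$ eigenvalues) and worked out the arithmetic. The minor edge-case handling for $k\le 1$ is fine (the bound is vacuous there), and your treatment of the Markov step via the nonnegativity of $\lambda^2$ is exactly what the paper has in mind.
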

\begin{proof}
The result follows from the Chebyshev and Markov inequalities.
\end{proof}
\Cref{cor:Frob_est} and~\Cref{cor:Markov_Frob} show that we can use our interpolation procedure to the estimate largest eigenvalue of the error operator.  In particular, let the probability of an eigenvalue being greater than the estimate be $O(1/2^n)$.  Then the Chebyshev bound~\eqref{eq:PE_Cheb} implies that it suffices to take $k \in O(\sqrt{2^n})$.  Thus, with high probability, all of the eigenvalues for the square of the error operator will be at most $4 \hat{d}^2 + O(\sqrt{\xi 2^n})$.  Thus if ${\xi} \in o(\hat{d}^4/2^{n})$ then the estimate yielded by this procedure will also estimate the spectral norm.

\subsection{Approach 2: Error Analysis through Bernstein Ellipses and Analyticity} 
\label{subsec:Bernstein}
In this section, we propose an alternative analysis for eigenvalue estimation, as well as a specific approach to phase estimation. We first prepare each effective eigenstate through the procedure in~\cite{rendon2022effects}, except using one single qubit the semi-classical QFT, summarized in \Cref{thm:gauss_state_prep}. Then we perform a new Gaussian phase estimation (\Cref{thm:spectral_gaussian_error}) on said eigenstate and then interpolate the estimates. The main result is summarized in~\Cref{thm:gauss_phase_est}.

To analyze the error and complexity of this approach, we use a different formalism that relies on complex analyticity~\cite{trefethen2019approximation}, which allows us to estimate the convergence rate of the interpolation in terms of range of analyticity rather than and not derivatives of the effective Hamiltonian. This is beneficial since the bounds using derivatives might become unmanageable near level crossings (See \Cref{app:PE_proof}). 

\subsubsection{Ancillary Lemmas}

To begin, we need to introduce some notions from complex analysis. For each $\rho > 1$, let $B_\rho \subset \mathbb{C}$ be the Bernstein ellipse, which is an ellipse with foci at $\pm 1$ and semimajor axis $(\rho + \rho^{-1})/2$. The following lemma bounds the Chebyshev interpolation error for analytic functions on $B_\rho.$
\begin{lemma}\label{lem:Berns}
    Let $f(z)\in \mathbb{C}$ be an analytic function on $B_\rho$, and suppose $C\in\mathbb{R}_+$ is an upper bound such that $|f(z)| \le C$, for all $z \in B_\rho$. Then the Chebyshev interpolation error on $[-1,1]$ satisfies
    $$
    \norm{f - P_n f}_{\infty} \leq \frac{4 C \rho^{-n}}{\rho - 1}
    $$
    for each degree $n>0$ of the interpolant through the $n+1$ Chebyshev nodes.
\end{lemma}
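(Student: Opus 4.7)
The plan is to follow the classical proof of Bernstein-type decay rates for Chebyshev interpolation (e.g., Theorem 8.2 in Trefethen's \emph{Approximation Theory and Approximation Practice}) and adapt it to the notation used here. The strategy has three ingredients: a Chebyshev series expansion of $f$, geometric bounds on its coefficients coming from analyticity on $B_\rho$, and the Chebyshev aliasing identity that relates the coefficients of the interpolant $P_n f$ to those of $f$.

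First I would write $f(x) = \sum_{k=0}^\infty a_k T_k(x)$ for $x \in [-1,1]$ and express the coefficients through the Joukowski map $z \mapsto (z + z^{-1})/2$, which sends the circle $|z| = \rho$ bijectively onto $B_\rho$. A standard contour deformation gives
$$a_k = \frac{1}{\pi i} \oint_{|z|=\rho} f\!\left(\tfrac{z + z^{-1}}{2}\right) \frac{dz}{z^{k+1}}$$
(with an extra factor of $1/2$ for $k = 0$). Using $|f| \le C$ on $B_\rho$, $|z^{-k-1}| = \rho^{-k-1}$ on the contour, and contour length $2\pi\rho$, one obtains the geometric bound $|a_k| \le 2C\rho^{-k}$ for $k \ge 1$ (and $|a_0| \le C$).

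Next I would invoke the Chebyshev aliasing identity: the degree-$n$ interpolant $P_n f$ at the $n+1$ Chebyshev nodes admits an expansion $P_n f(x) = \sum_{k=0}^n c_k T_k(x)$ in which each coefficient $c_k$ equals $a_k$ plus an infinite sum over high-index $a_m$ with $m > n$, reflecting the fact that $T_\ell$ and its aliases agree on the Chebyshev grid. The key bookkeeping fact is that every $a_m$ with $m > n$ appears in at most two places in the series representation of $f - P_n f$: once as itself in the tail $\sum_{k > n} a_k T_k$, and once (with opposite sign) inside some $c_k - a_k$ for $k \le n$. Using $|T_k(x)| \le 1$ on $[-1,1]$, triangle inequality, and the geometric coefficient bound, the error telescopes to
$$\norm{f - P_n f}_{\infty} \le 2 \sum_{k=n+1}^\infty |a_k| \le 4C \sum_{k=n+1}^\infty \rho^{-k} = \frac{4C \rho^{-(n+1)}}{1 - \rho^{-1}} = \frac{4C \rho^{-n}}{\rho - 1},$$
which is exactly the claimed bound.

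The main obstacle is a careful justification of the Chebyshev aliasing identity and the bookkeeping that produces the factor of $2$ in bounding each reshuffled coefficient; beyond these, the proof reduces to summing a geometric series. Since both aliasing and the coefficient bound are standard in the Chebyshev interpolation literature, the paper can legitimately appeal to Theorem 8.2 of \cite{trefethen2019approximation} rather than reproduce the derivation in detail.
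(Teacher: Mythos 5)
Your proof is correct and faithfully reconstructs the argument behind Theorem~8.2 of Trefethen's \emph{Approximation Theory and Approximation Practice}, which is exactly what the paper cites without reproducing. The three ingredients you identify --- the geometric decay $|a_k|\le 2C\rho^{-k}$ of Chebyshev coefficients for functions analytic on $B_\rho$, the aliasing identity for the interpolant's coefficients, and the factor-of-two bookkeeping feeding a geometric series --- are precisely the contents of the cited theorem and its supporting lemmas, so there is no gap and no divergence in method.
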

\begin{proof}
    Theorem 8.2 of Ref.~\cite{trefethen2019approximation}.
\end{proof}
\Cref{lem:Berns} shows that the interpolation error shrinks exponentially in $n$. We would like to apply this lemma to analyze the eigenvalues $\lambda(z)$ of the effective Hamiltonian $\tilde{H}_z$, which is a continuation of $\tilde{H}_s$ to the complex plane. However, we need to understand the domain under which $\lambda(z)$ is analytic, i.e. free of level-crossings.

To characterize this domain, we first utilize a result from Bauer and Fike~\cite{bauer_fike} which bounds the shift in the eigenvalues under a shift in the operator.

\begin{lemma}\label{lem:BauerFike}
Let $A$ be a normal matrix with eigenvalues $\{\lambda_i\}$. Then, if $\lambda$ is an eigenvalue of a matrix $B$, there exists an eigenvalue $\lambda_k$ of $A$ such that
\begin{align}
|\lambda - \lambda_k| \leq  \norm{B-A} 
\end{align}
for at least one eigenvalue $\lambda_k$ of $A$, where $\norm{\cdot}$ is the spectral norm.
\end{lemma}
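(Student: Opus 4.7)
My plan is to give the classical Bauer--Fike argument, specialized to the normal (actually unitarily diagonalizable) case stated here, which makes the constant $1$ in front of $\|B-A\|$ rather than a condition number of the eigenvector matrix.

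First I would dispose of the trivial case: if $\lambda$ coincides with some $\lambda_k$, the bound holds with zero on the left. So assume $\lambda \notin \{\lambda_i\}$, in which case $A - \lambda I$ is invertible. Next, because $A$ is normal, the spectral theorem gives $A = U D U^\dagger$ with $U$ unitary and $D = \diag(\lambda_1,\dots,\lambda_N)$. Then $(A-\lambda I)^{-1} = U(D-\lambda I)^{-1} U^\dagger$, and since the spectral norm is unitarily invariant,
\begin{align}
\|(A-\lambda I)^{-1}\| \;=\; \|(D-\lambda I)^{-1}\| \;=\; \frac{1}{\min_k |\lambda - \lambda_k|}.
\end{align}
This is the one place normality is essential; for a general diagonalizable $A$ one would pick up the condition number of the diagonalizing basis.

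The core step is the algebraic factorization
\begin{align}
B - \lambda I \;=\; (A - \lambda I) + (B - A) \;=\; (A - \lambda I)\bigl[\, I + (A - \lambda I)^{-1}(B - A)\,\bigr].
\end{align}
Since $\lambda$ is an eigenvalue of $B$, the left-hand side is singular; the first factor on the right is invertible by assumption, so the bracketed factor must be singular. But a matrix of the form $I + M$ can only be singular if $\|M\| \ge 1$, hence
\begin{align}
1 \;\le\; \|(A - \lambda I)^{-1}(B - A)\| \;\le\; \|(A - \lambda I)^{-1}\|\,\|B - A\| \;=\; \frac{\|B-A\|}{\min_k |\lambda - \lambda_k|}.
\end{align}
Rearranging yields $\min_k |\lambda - \lambda_k| \le \|B-A\|$, which is exactly the claim after choosing $\lambda_k$ to be the minimizer.

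I do not expect any real obstacle here; the only subtle point is justifying $\|(A-\lambda I)^{-1}\| = 1/\min_k |\lambda - \lambda_k|$, which is immediate from unitary diagonalizability plus unitary invariance of the spectral norm. The rest is the standard ``$I + M$ singular implies $\|M\| \ge 1$'' trick applied to a resolvent factorization.
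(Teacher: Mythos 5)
Your proof is correct; this is the standard Bauer--Fike argument specialized to normal $A$. The paper does not actually prove this lemma --- it simply cites Bauer and Fike~\cite{bauer_fike} and uses the statement as a black box --- so there is no paper proof to compare against. Your derivation fills that gap faithfully: the resolvent factorization $B-\lambda I = (A-\lambda I)\bigl[I + (A-\lambda I)^{-1}(B-A)\bigr]$, the observation that $I+M$ singular forces $\|M\|\ge 1$ (Neumann series), and the key identity $\|(A-\lambda I)^{-1}\| = 1/\min_k|\lambda-\lambda_k|$ from unitary diagonalizability are all exactly where the constant $1$ (rather than a Jordan-basis condition number) comes from in the normal case. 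No gaps.
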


In order to determine the radius of analyticity, we assume a specific form for the upper bound on $\|H-\tilde{H}_z\|$, where $\tilde{H}_z$ is the analytic continuation of $\tilde{H}_s$ to the complex plane. Here is the resulting theorem

\begin{lemma}\label{lem:radius_of_analiticity}
Let $\lambda(\tau)$ be an eigenvalue of $\tilde{H}_z$, where $\tau = z t$ for $z\in\mathbb{C}$ and $t\in\mathbb{R}^+$, and $\gamma_0$ be a lower bound on the spectral gap of $\tilde{H}_0 = H$. Let $\alpha, \beta\in\mathbb{R}^+$ be constants such that
\begin{align} \label{eq:Hdiff_bd}
    \| H - \tilde{H}_z \| \leq \frac{\alpha |\tau|^p}{(p+1)!} e^{\beta |\tau|}.
\end{align}
Then $\lambda(\tau)$ is analytic on any origin-centered disc of radius $r$ provided that $r \leq r_{\mathrm{max}},$ with

\begin{align}
    r_\mathrm{max} := \frac{p}{\beta}W_0\left(\frac{\beta}{p} \left(\frac{\gamma_0 (p+1)!}{2 \alpha }\right)^{1/p}\right).
\end{align}
Here $W_0$ is principal branch of the Lambert W function.
\end{lemma}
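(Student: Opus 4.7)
The plan is to combine the Bauer--Fike bound with an elementary gap argument to rule out level crossings, and then invert the resulting transcendental inequality using the Lambert $W$ function.

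First, recall that $\tilde{H}_z$ depends holomorphically on $z$ wherever the principal branch of $\log S_{2k}(zt)$ in~\eqref{eq:def_eff_Ham} is well-defined; the hypothesis~\eqref{eq:Hdiff_bd}, together with the preceding analysis guaranteeing that $\tilde H_s$ extends smoothly off the real axis, keeps us in this regime. Since $H$ is Hermitian and hence normal, \Cref{lem:BauerFike} applies with $A=H$ and $B=\tilde H_z$: every eigenvalue $\mu$ of $\tilde H_z$ lies within $\|H-\tilde H_z\|$ of some eigenvalue of $H$. Consequently, if $\|H-\tilde H_z\|<\gamma_0/2$, then two eigenvalue branches of $\tilde H_z$ tracked from distinct eigenvalues of $H$ remain at least $\gamma_0 - 2\|H-\tilde H_z\| > 0$ apart. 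In particular, the branch $\lambda(\tau)$ cannot collide with any other eigenvalue, and by standard analytic perturbation theory it therefore defines a holomorphic function of $z$ (equivalently, of $\tau = zt$) throughout any origin-centered disc on which the gap condition holds.

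It remains to find the largest disc on which the hypothesis~\eqref{eq:Hdiff_bd} forces $\|H-\tilde H_z\|<\gamma_0/2$. Setting $x=|\tau|$, we seek the largest $x\ge 0$ for which
\[
\frac{\alpha\, x^{p}}{(p+1)!}\, e^{\beta x} \le \frac{\gamma_0}{2}.
\]
Taking the positive $p$-th root of both sides and multiplying by $\beta/p$ recasts this as
\[
\left(\frac{\beta x}{p}\right)\exp\!\left(\frac{\beta x}{p}\right) \le \frac{\beta}{p}\left(\frac{\gamma_0(p+1)!}{2\alpha}\right)^{1/p},
\]
which is of the form $y e^{y}\le c$ with $y=\beta x/p\ge 0$ and $c>0$. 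Since the principal branch $W_0$ inverts $y\mapsto y e^{y}$ monotonically on $[0,\infty)$, applying it to both sides and solving for $x$ yields exactly $x\le r_{\max}$, as claimed.

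The main obstacle is really only bookkeeping. The nontrivial content is the combination of (i) normality of $H$, which is needed to apply Bauer--Fike without a condition-number overhead, and (ii) the observation that no branch points of $\lambda(\tau)$ can arise inside the disc, since the gap condition precludes eigenvalue collisions. Everything else is algebra: isolating $|\tau|$ in the exponential-polynomial inequality, which is precisely what the Lambert $W$ function was designed for. No new estimates beyond~\eqref{eq:Hdiff_bd} and the spectral-gap assumption are required.
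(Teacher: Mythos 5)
Your proposal is correct and follows essentially the same route as the paper: invoke Bauer--Fike (with $A = H$ Hermitian) to show that the eigenvalue branches stay separated whenever $\|H - \tilde{H}_z\| \le \gamma_0/2$, hence no level crossings occur inside the disc, and then invert the inequality $\frac{\alpha |\tau|^p}{(p+1)!} e^{\beta|\tau|} \le \gamma_0/2$ via a $p$-th root and the principal branch of the Lambert $W$ function. The algebraic manipulation to arrive at $r_{\max}$ is correct, and your observation that the principal branch is the right one (since $y \mapsto y e^y$ is monotone increasing on $[0,\infty)$) is a worthwhile clarification that the paper's proof leaves implicit.
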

\begin{proof}
    See \Cref{app:radius_of_analiticity}
\end{proof}

The above lemma gives an upper bound $r_\mathrm{max}$ on the radius of analyticity. From this, assuming $r_\mathrm{max} > 1$, we can calculate the largest possible $B_\rho$ by equating the semimajor axis with the maximum radius
\begin{align}
    (\rho_\mathrm{max} + \rho_\mathrm{max}^{-1})/2 = r_\mathrm{max}
\end{align}
which has solution
\begin{align}
    \rho_\mathrm{max} = r_\mathrm{max} + \sqrt{r_\mathrm{max}^2 -1}.
\end{align}

The following lemmas give a clue as to what $\alpha$ and $\beta$ should look like in terms of commutators between the terms of $H$ and also with respect to its overall spectral norm. However, first, we introduce a bound on the error on the effective total Hamiltonian evolution in terms of the "instantaneous" Hamiltonian error, $\mathscr{E}$, which is defined through the complex-time Schr\"odinger equation. Let $S_p(\tau)$ be a $p$th order product formula approximating $\exp(i H \tau)$, such that
\begin{align}
    \frac{\mathrm{d}}{\mathrm{d}\tau}S_p(\tau) = i(H+\mathscr{E}(\tau))S_p(\tau).
\end{align}
For this purpose, we also introduce the "accumulated" error
\begin{align}
    E(\tau) = \log S_p(\tau) - i \tau H,
\end{align}
where $\log S_p (\tau)$ can be defined through the complex-time Magnus expansion (See~\cite{blanes2009magnus} for compilation of proofs, complex-time extension is straight-forward)
\begin{align}
    \log S_p (\tau) &= \Omega = i\int_{\mathcal{P}} \sum_{n=0}^\infty \frac{B_n}{n!} \mathrm{ad}^n_{\Omega}(H+\mathscr{E}(\tau_1)) \mathrm{d}\tau_1 \cr 
\end{align}
where $\mathcal{P}$ is a path going from $\tau_1=0$ to $\tau_1=\tau$. This, like the real-time Magnus expansion, converges provided $\|\Omega\|\leq \pi$. With this, we obtain an upper bound on $\|E(\tau)\|$ stated through the following lemma:

\begin{lemma}\label{lem:Magnus_error}
Given the error definition
\begin{align*}
    E (\tau) = \log S_p(\tau) - i\tau H 
\end{align*}
for complex $\tau$, we can bound its norm through
\begin{align*}
    \left\|E(\tau)\right\| &\leq \frac{5}{2}\int^{|\tau|}_0  \|\mathscr{E}(s \tau/\abs{\tau})\| \mathrm{d} s,
\end{align*}
provided that $\max_{|\tilde{\tau}|\leq |\tau|}\| E(\tilde{\tau}) \|\leq 15/8$, $|\tau|\leq 1/8$, and $\int^{|\tau|}_0 \|\mathscr{E}(s \tau/\abs{\tau})\|\mathrm{d}s \leq 1/8$.
\end{lemma}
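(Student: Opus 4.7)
The plan is to work directly from the complex-time Magnus series for $\Omega(\tau) = \log S_p(\tau)$ given just before the lemma, isolate the leading $n=0$ Magnus term (which is precisely what produces the $i\tau H$ subtraction in the definition of $E$), and then control the remaining nested-commutator tail by a self-consistent (bootstrap) argument that uses the three smallness hypotheses to absorb all nonleading contributions into a single constant.

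First I would use $B_0 = 1$ to split the series as
\begin{align*}
\Omega(\tau) \;=\; i\tau H \;+\; i\int_{\mathcal{P}} \mathscr{E}(\tau_1)\,d\tau_1 \;+\; \int_{\mathcal{P}} \sum_{n\geq 1}\frac{B_n}{n!}\,\mathrm{ad}^n_{\Omega}\!\bigl(iH+i\mathscr{E}(\tau_1)\bigr)\,d\tau_1,
\end{align*}
so that $E(\tau)$ equals the last two pieces. Parameterize the contour $\mathcal{P}$ by $\tau_1 = s\tau/|\tau|$ with $s\in[0,|\tau|]$, turning each path integral into a real integral in $s$. The first piece is then immediately bounded by $\int_0^{|\tau|}\|\mathscr{E}(s\tau/|\tau|)\|\,ds$, which already accounts for the ``$1$'' in the target coefficient $5/2$; the remaining $3/2$ must come from the tail.

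Next I would estimate the tail using the standard commutator bound $\|\mathrm{ad}_{\Omega}(X)\| \leq 2\|\Omega\|\,\|X\|$, yielding an integrand of size at most $\sum_{n\geq 1}\tfrac{|B_n|}{n!}(2\|\Omega\|)^n\bigl(\|H\|+\|\mathscr{E}\|\bigr)$. The generating-function identity $\sum_{n\geq 0}B_n x^n/n! = x/(e^x-1)$, together with the alternating-sign structure of the Bernoulli numbers, provides a closed-form majorant valid for $2\|\Omega\| < 2\pi$. The crucial bootstrap step is then to bound $\|\Omega(\tau)\|$ using $\Omega = E + i\tau H$ and the unitarity of $S_p$, combined with the hypothesis $\|E\|\leq 15/8$ and $|\tau|\leq 1/8$; together with $\int_0^{|\tau|}\|\mathscr{E}\|\,ds \leq 1/8$, these numerics are precisely what push $2\|\Omega\|$ into the safe convergence region and make the Magnus-tail factor at most $3/2$ times $\int\|\mathscr{E}\|\,ds$. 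Collecting the leading term and the tail then gives the stated $5/2$.

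The main technical obstacle is the bootstrap: the tail bound involves $\|\Omega\|$, but $\|\Omega\|$ is itself bounded in terms of $\|E\|$, so one must verify that the implicit inequality one obtains for $\|E\|$ can be solved explicitly under the three hypotheses rather than merely yielding a tautology. A secondary subtlety is that the Magnus expansion as written involves $\mathrm{ad}_{\Omega}$ with $\Omega$ itself containing the unknown perturbation, so the bound on $\phi(\mathrm{ad}_{\Omega})-1$ has to be applied to $iH + i\mathscr{E}$ uniformly along $\mathcal{P}$; this requires the hypothesis to be stated as a \emph{uniform} bound on $\|E(\tilde\tau)\|$ over $|\tilde\tau|\leq|\tau|$, which is indeed what the lemma assumes. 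Finally, I would check that the chosen branch of $\log S_p$ (the one produced by the Magnus series under the stated smallness assumptions) is consistent with the definition $E(\tau) = \log S_p(\tau) - i\tau H$, so that no $2\pi i$ ambiguity spoils the estimate.
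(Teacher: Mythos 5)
There is a genuine gap. Your plan is to bound the Magnus tail by
\[
\Bigl\|\sum_{n\ge 1}\frac{B_n}{n!}\,\mathrm{ad}^n_{\Omega}(A)\Bigr\| \;\le\; \sum_{n\ge 1}\frac{|B_n|}{n!}\bigl(2\|\Omega\|\bigr)^n\,\bigl(\|H\|+\|\mathscr{E}\|\bigr),
\]
applied at the integrand level, and then argue that the resulting integral is at most $\tfrac{3}{2}\int_0^{|\tau|}\|\mathscr{E}\|\,ds$ under the smallness hypotheses. This cannot work, because the $n=1$ term already contributes $\|\Omega\|\cdot(\|H\|+\|\mathscr{E}\|)$, and $\|\Omega\|\ge |\tau|\,\|H\|-\|E\|$ in general, so you are left with a piece of size $\sim|\tau|\,\|H\|^2$ in the integrand that has no factor of $\|\mathscr{E}\|$ in it. Integrating over $s\in[0,|\tau|]$ turns this into an $O(|\tau|^2\|H\|^2)$ constant, which cannot be absorbed into a bound proportional to $\int_0^{|\tau|}\|\mathscr{E}\|\,ds$. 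The hypotheses ($|\tau|\le 1/8$, $\|E\|\le 15/8$, $\int\|\mathscr{E}\|\,ds\le 1/8$) shrink this constant but do not make it a multiple of $\int\|\mathscr{E}\|\,ds$, so the bootstrap you describe does not close.

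The paper's proof avoids this by \emph{not} estimating $\mathrm{ad}_\Omega(A)$ with the crude $2\|\Omega\|\,\|A\|$ bound at $n=1$. Instead, it factors out one explicit commutator, writing $\mathrm{ad}^n_\Omega(A)=\mathrm{ad}^{n-1}_\Omega\bigl([\Omega,A]\bigr)$ and only then applying $\|\mathrm{ad}_\Omega\|\le 2\|\Omega\|$ to the remaining $n-1$ adjoints. The point is that $[\Omega,A]=[E+i\tau H,\;iH+i\mathscr{E}]$ and the dominant piece $[i\tau H, iH]$ vanishes identically, leaving
\[
\|[\Omega,A]\|\le 2\|E\|\bigl(\|\mathscr{E}\|+\|H\|\bigr)+2|\tau|\,\|H\|\,\|\mathscr{E}\|,
\]
in which every term carries a factor of $\|E\|$ or of $\|\mathscr{E}\|$. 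Substituting this in, taking a maximum over the path, and solving the resulting implicit inequality for $\max_{|\tau_2|\le|\tau|}\|E(\tau_2)\|$ under the stated numerical hypotheses (together with the implicit normalization $\|H\|\le 1$) is what yields the coefficient $5/2$. To repair your argument you would need to incorporate this cancellation in $[\Omega,A]$; without it, the coefficient $3/2$ for the tail is not attainable and in fact the tail is not bounded by any multiple of $\int\|\mathscr{E}\|\,ds$ at all.

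One smaller point worth flagging: the paper's own derivation also quietly uses $\|H\|\le 1$ in the final numerical step (to get the denominator $1-2|\tau|\|H\|-2\int\|\mathscr{E}\|\,ds\ge 1/2$), which is a normalization convention enforced elsewhere (\Cref{thm:gauss_state_prep}) but not stated in the lemma. You correctly identify the uniform-in-$\tilde\tau$ requirement and the branch issue for $\log S_p$, which are both handled as you expect.
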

\begin{proof}
    See \Cref{app:Magnus_error}
\end{proof}

Now, we bound the norm for $\mathscr{E}(\tau)$ in \Cref{app:H_error_bound} and make use of $\norm{H - \tilde{H}_z} \leq  \left\|E(\tau) \right\|/|\tau|$.
\begin{lemma}\label{lem:H_error_bound}
Let $\tilde{H}_z$ be the effective Hamiltonian associated with a complex-time $p$th order product formula $S_p(\tau)$ , where $\tau = z t$ with $z\in\mathbb{C}$ and $t\in\mathbb{R}_+$, which is analytically continued to an open neighborhood containing $[-1,1]$. The norm of operator error can upper bounded as
\begin{align*}
\norm{H - \tilde{H}_z} \leq \frac{5}{2}\sum_{(\upsilon,m)} \acomm\big(a_{(\Upsilon,m)}H_{\pi_{\Upsilon}(m)},\ldots,a_{(\upsilon,m+1)}H_{\pi_{\upsilon}(m+1)},a_{(\upsilon,m)}H_{\pi_{\upsilon}(m)}\big) \cr
\cdot\frac{\abs{\tau}^p}{(p+1)!} e^{2|\tau| \, \sum^{m}_{j} \left\| H_{j} \right\|}
\end{align*}
provided that $|\tau|\norm{H - \tilde{H}_z}\leq15/8$, $|\tau|\leq 1/8$, and
\begin{align*}
\sum_{(\upsilon,m)} \acomm\big(a_{(\Upsilon,m)}H_{\pi_{\Upsilon}(m)},\ldots,a_{(\upsilon,m+1)}H_{\pi_{\upsilon}(m+1)},a_{(\upsilon,m)}H_{\pi_{\upsilon}(m)}\big)\frac{\abs{\tau}^p}{(p+1)!} e^{2|\tau| \, \sum^{m}_{j} \left\| H_{j} \right\|} \leq \frac{1}{20}.
\end{align*}
Here, $\acomm\big(A_s,\ldots,A_1,B\big):=\sum_{q_1+\cdots+q_s=p}\binom{p}{q_1\ \cdots\ q_s}\norm{\ad_{A_s}^{q_s}\cdots\ad_{A_1}^{q_1}(B)}$.
\end{lemma}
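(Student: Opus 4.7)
The plan is to reduce everything to the instantaneous Hamiltonian error $\mathscr{E}(\tau)$ via Lemma~\ref{lem:Magnus_error}, then bound $\|\mathscr{E}(\tau)\|$ by a term-by-term expansion of the product formula using the Hadamard lemma. Since $\|H - \tilde{H}_z\| \leq \|E(\tau)\|/|\tau|$ and Lemma~\ref{lem:Magnus_error} gives $\|E(\tau)\| \leq (5/2)\int_0^{|\tau|}\|\mathscr{E}(s\tau/|\tau|)\|\,ds$, the task reduces to obtaining a pointwise bound on $\|\mathscr{E}(\tau)\|$ that scales as $|\tau|^p$ times the stated commutator sum times the exponential factor.

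First I would derive an explicit formula for $\mathscr{E}(\tau)$ by writing the product formula as $S_p(\tau) = \prod_{(\upsilon,m)} e^{i a_{(\upsilon,m)}\tau H_{\pi_\upsilon(m)}}$ and computing $(\partial_\tau S_p)(\tau)\, S_p(\tau)^{-1} = i(H+\mathscr{E}(\tau))$ via the product rule. Each term has the form $i a_{(\upsilon,m)} H_{\pi_\upsilon(m)}$ conjugated by the outer sub-factors, and each conjugation $e^X Y e^{-X}$ is expanded using the Hadamard formula $\sum_{q\ge 0}\ad_X^q(Y)/q!$. Collecting by total order in $\tau$, and using that $S_p$ is a $p$th-order formula so contributions of order less than $p$ cancel against $iH$, one finds that $\mathscr{E}(\tau)$ begins at order $\tau^p$, with leading coefficient a sum over sub-factors $(\upsilon,m)$ and over multi-indices $(q_1,\ldots,q_s)$ with $q_1+\cdots+q_s = p$ counting how many $\ad$'s each outer sub-factor contributes. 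This is precisely what is packaged by the $\acomm$ quantity in the lemma statement, with the multinomial coefficient tracking the different distributions.

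Next I would bound the tail of the Hadamard expansions (orders larger than $p$). In the complex-time setting the conjugating factors are no longer unitary, but one still has $\|e^X Y e^{-X}\| \le e^{2\|X\|}\|Y\|$. Summing this over the outer sub-factors and using $\sum_{(\upsilon,m)} |a_{(\upsilon,m)}|\|H_{\pi_\upsilon(m)}\|\le \sum_j\|H_j\|$ at each level of nesting produces the $e^{2|\tau|\sum_j\|H_j\|}$ envelope. Combined with the leading $|\tau|^p/p!$ from the order-$p$ term, this yields a pointwise bound on $\|\mathscr{E}(\tau)\|$ of the stated commutator-scaling form. The smallness assumptions in the lemma statement ensure the Hadamard series converges and that the sub-leading $|\tau|^{p+1}$ and higher corrections are absorbed into the exponential envelope.

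Finally I would plug this bound into Lemma~\ref{lem:Magnus_error}. Integrating $s^p/p!$ over $[0,|\tau|]$ yields $|\tau|^{p+1}/(p+1)!$, the factor of $5/2$ is inherited, and dividing by $|\tau|$ to pass from $\|E(\tau)\|$ to $\|H-\tilde{H}_z\|$ produces exactly the right-hand side of the lemma. The hypotheses $|\tau|\le 1/8$, $|\tau|\,\|H-\tilde{H}_z\|\le 15/8$, and the $\le 1/20$ bound on the commutator sum are precisely those needed to invoke Lemma~\ref{lem:Magnus_error}, with the $1/20$ serving to upper-bound $\int\|\mathscr{E}\|\,ds$ by $1/8$ after accounting for the constants. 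The hard step will be the first one: reassembling the Hadamard combinatorics into the precise $\acomm$ expression at leading order, which is the complex-time analog of the commutator-scaling Trotter error analysis (\emph{\`a la} Childs--Su--Tran--Wang--Zhu). Care is required because unitarity of intermediate factors, central in the real-time argument, is replaced by the weaker $e^{2\|X\|}$ norm bound, which must be applied to the higher-order tail only and not to the leading order-$p$ commutator term itself.
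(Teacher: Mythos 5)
Your proposal follows the same route as the paper's proof: reduce to $\mathscr{E}(\tau)$ via \Cref{lem:Magnus_error}, expand $\mathscr{E}(\tau)$ by differentiating the product formula, extract commutator scaling from the nested conjugations, bound the conjugating exponentials to get the $e^{2|\tau|\sum_j\|H_j\|}$ factor, and finally integrate and divide by $|\tau|$. The bookkeeping with $\acomm$ and the claim that the first $p$ orders cancel are exactly what the paper does, and you correctly identify the Childs--Su--Tran--Wang--Zhu machinery as the template.

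The one place where your description doesn't quite match the actual argument is the ``leading order-$p$ term plus Hadamard tail absorbed into the exponential envelope'' framing. If you genuinely expand each conjugation as the full Hadamard series $\sum_{q\ge 0}\ad_X^q(Y)/q!$ and then try to sum the orders $>p$ separately, you will not easily recover a bound of the stated factorized form, because the higher-order terms carry \emph{more} nested $\ad$'s with their own combinatorics, not a clean multiplicative envelope on the order-$p$ term. What the paper actually uses is the Taylor expansion of $e^{i\tau A_s}\cdots e^{i\tau A_1} B e^{-i\tau A_1}\cdots e^{-i\tau A_s}$ \emph{with integral remainder}: the first $p$ Taylor coefficients $C_0,\ldots,C_{p-1}$ cancel against $iH$, and the remainder is a single integral whose integrand has \emph{exactly} $p$ nested $\ad$'s sandwiched between residual conjugating exponentials $e^{\pm i\tau_2 A_k}$ and $e^{\pm i\tau A_j}$. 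The exponential envelope comes solely from bounding those residual conjugating exponentials by $\|e^{i\tau A}\|\|e^{-i\tau A}\|\le e^{2|\tau|\|A\|}$; there is no separate tail to absorb. Your last paragraph shows you sense this issue, but the fix is to avoid the full Hadamard expansion altogether and go directly to the $p$-th order Taylor remainder. Once stated this way, the rest of your outline --- the estimation lemma along a straight contour, $\int_0^{|\tau|} s^p/p!\,ds = |\tau|^{p+1}/(p+1)!$, the factor of $5/2$ from \Cref{lem:Magnus_error}, and division by $|\tau|$ --- matches the paper's proof exactly.
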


While the above theorem is a useful theoretical bound, a problem arises because bounding the commutators is often impractical computationally.  In such cases, $\acomm$ can be upper bounded through the triangle inequality and submultiplicativity of the norm.
\begin{align}
    \acomm\big(A_s,\ldots,A_1,B\big) \leq \| B \| 2^p (\sum^s_{j=1} \| A_j \|)^p.
\end{align}

With these set of lemmas at our disposal, we can now provide a cost estimate for state preparation for all the ground states of $\tilde{H}_{s_k}$.

\subsubsection{State Preparation Cost}

\begin{theorem}[State preparation cost]\label{thm:gauss_state_prep}
Let $H=\sum_{j=1}^m H_j$ satisfy $\sum^{m}_{j=1}\| H_j \| \leq 1$, and let $\gamma \in \mathbb{R}_+$ be a lower bound on the spectral gap of $H$. Let $s_1, \dots ,s_n$ be the $n$th degree Chebyshev nodes on $[-1,1]$ for even $n$. Then there exists an algorithm to prepare the ground state of $\tilde{H}_{s_k}$ to precision $\epsilon$ using controlled queries of
$
    \tilde{U}_{s_k}'=\left(S_{p} \left(t s_k \right) \right)^{e'_k}, 
$
where $t\in\Theta(\gamma^{1/p})$ and $e'_k = \sgn{s_k}\left\lceil {s_1}/{\left| s_k\right|} \right\rceil$, with a total gate count
\begin{align*}
    C_{\rm prep} \in O\left(\frac{\log(1/\epsilon)\log\log(1/\epsilon)}{\gamma^{\;\;1 + 1/p}} \right)
\end{align*}
and a single auxiliary qubit. 
\end{theorem}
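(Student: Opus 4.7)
The plan is to adapt the ground-state preparation procedure of~\cite{rendon2022effects}, which uses iterative quantum phase estimation (QPE) implemented via the semi-classical QFT and therefore requires only a single ancilla qubit. Applied to the unitary $\tilde{U}_{s_k}'$, such a QPE measures an eigenphase of $\tilde{H}_{s_k}$ scaled by $t s_k e'_k$; postselecting on the outcome lying in the ground-state bin then projects the input state onto the ground eigenspace of $\tilde{H}_{s_k}$, and a small number of repetitions boosts the success probability to $1-\epsilon$.

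The specific choice $e'_k = \sgn(s_k)\lceil s_1/|s_k|\rceil$ is engineered so that $|t s_k e'_k|$ agrees with $t s_1 = \Theta(\gamma^{1/p})$ up to an additive discrepancy bounded by $t|s_k|$. I would first use this fact together with \Cref{lem:H_error_bound} to conclude that $\|H - \tilde{H}_{s_k}\| = O((t|s_k|)^p / (p+1)!) = O(\gamma)$ for an appropriately small constant in $t = \Theta(\gamma^{1/p})$; Bauer--Fike (\Cref{lem:BauerFike}) then guarantees that the spectral gap of $\tilde{H}_{s_k}$ remains $\Omega(\gamma)$. Thus the eigenphases of $\tilde{U}_{s_k}'$ are separated by at least $\Omega(\gamma \cdot t s_1) = \Omega(\gamma^{1+1/p})$, which is the phase resolution that QPE must achieve.

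Resolving phases at tolerance $\delta = \Theta(\gamma^{1+1/p})$ with the iterative semi-classical-QFT procedure requires $O(1/\delta)$ controlled applications of the base unitary $\tilde{U}_{s_k}'$, arranged over $O(\log(1/\delta))$ adaptive rounds. Together with the heralding/repetition overhead of $O(\log(1/\epsilon)\log\log(1/\epsilon))$ needed to attain overall fidelity $1-\epsilon$ (as analyzed in~\cite{rendon2022effects}), this yields a total query count of $O\!\left(\log(1/\epsilon)\log\log(1/\epsilon)/\gamma^{1+1/p}\right)$, matching the claimed bound. The main obstacle will be carefully handling the ceiling in $e'_k$: the per-query phase picks up a $k$-dependent deterministic offset $\lambda_i(t s_k e'_k - t s_1)$ of size $O(t|s_k|)$, but because this offset is known a priori and is $O(\gamma^{1/p}) \ll 1$, it can be either absorbed into the QPE bin definition or classically post-corrected, and the associated Trotter error contribution from one residual step is already $O(\gamma)$ and hence subsumed by the gap-preservation bound above.
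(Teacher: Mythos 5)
Your gap-preservation step is essentially the paper's: you invoke \Cref{lem:H_error_bound} (or similar) to bound $\|H-\tilde{H}_{s_k}\|=O(t^p)=O(\gamma)$ and then \Cref{lem:BauerFike} to conclude the gap of $\tilde{H}_{s_k}$ stays $\Omega(\gamma)$, from which $t=\Theta(\gamma^{1/p})$ gives the eigenphase gap $\Omega(\gamma^{1+1/p})$ of $\tilde{U}'_{s_k}$ and hence the $\gamma^{-(1+1/p)}$ factor. That part is sound and agrees with the paper.

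The gap is in how you obtain the $\log(1/\epsilon)\log\log(1/\epsilon)$ factor, and in what is actually being counted. You attribute this factor to a ``heralding/repetition overhead'' in the preparation routine of~\cite{rendon2022effects}; that reference gives a $O(\log(1/\epsilon_{\mathrm{state}}))$ overhead for state fidelity, but not a $\log\log$ factor, and attaching an unexplained $\log\log$ there is not justified. In the paper's proof the factor has a completely different origin: $C_{\rm prep}$ is the \emph{total} cost over all $n$ interpolation nodes, so the relevant quantity is $\|e'\|_1 = \sum_k |e'_k|$, the aggregate number of Trotter-step applications. By \Cref{lem:Trotter_step_bound} this is $O(n\log n)$, and by the Bernstein-ellipse analysis (\Cref{lem:radius_of_analiticity} combined with \Cref{lem:Berns}) the number of nodes is $n=O(\log(1/\epsilon))$; multiplying and dividing by $t\gamma$ with $t=\Theta(\gamma^{1/p})$ produces $O(\log(1/\epsilon)\log\log(1/\epsilon)/\gamma^{1+1/p})$. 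Your argument never fixes $n$, never sums over nodes, and (more fundamentally) never accounts for the fact that a single controlled query to $\tilde{U}'_{s_k}$ already costs $|e'_k|\sim 1/|s_k|$ applications of $S_p$, which is $\Theta(n)$ for the innermost Chebyshev node. Counting only ``applications of $\tilde{U}'_{s_k}$'' therefore underestimates the gate count per node and cannot reproduce the stated bound without the $\|e'\|_1$ bookkeeping; and conversely, without the Bernstein-ellipse choice of $n$ the $\log(1/\epsilon)$ has no principled source. You would also need to note that the paper (and the theorem) treat $\epsilon$ as the target interpolation error controlling $n$, not an independent state-preparation fidelity parameter.
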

\begin{proof}
    See \Cref{app:proof_state_prep}.
\end{proof}

Now that the cost of state preparation is covered, we assume an exact state preparation for our next gate count upper bound for eigenvalue estimation. However, first, we will introduce a new phase estimation algorithm as one of our methods for eigenvalue estimation. 

\subsubsection{Interlude: Gaussian Quantum Phase Estimation}\label{subsubsec:gqpea}

\begin{figure*}[t!]
\scalebox{0.8}{
\begin{quantikz}
 \lstick{$\ket{0^{\tp (q-m)}}$} &\qwbundle[alternate]{} &\qwbundle[alternate]{} & \qwbundle[alternate]{} & \qwbundle[alternate]{}& \qwbundle[alternate]{} & \qwbundle[alternate]{} & \qwbundle[alternate]{} & \qwbundle[alternate]{} &  \gate[wires=6,nwires=4]{Z(a,b)}\qwbundle[alternate]{} & \gate[wires=6,nwires=4]{QFT}\qwbundle[alternate]{}\slice{$\sim\ket{p^{(q,\mu=\theta_0,\sigma_f)}_{G}} $} & \meter{}\qwbundle[alternate]{} & \rstick[wires=6]{$z$}
 \\
 \lstick[wires=5]{$\ket{p^{(m,\mu=0,\sigma)}_{G}} $} & \qw & \qw &  \qw & \qw & \qw & \qw\ldots &\qw & \ctrl{5}  & \qw & \qw & \meter{} & 
 \\
  & \qw &\qw &\qw& \qw&\qw & \qw\ldots & \ctrl{4}   & \qw & \qw & \qw &  \meter{} &
 \\
  & \vdots & \vdots &\vdots& \vdots & \vdots & \vdots &\vdots&\vdots&\vdots&\vdots& \vdots
 \\
  & \qw &\qw &\qw& \qw&\ctrl{2}  & \qw\ldots & \qw   & \qw & \qw & \qw &  \meter{} &
 \\
  & \qw & \qw&\qw& \ctrl{1} & \qw  & \qw\ldots & \qw  & \qw & \qw &  \qw & \meter{} &
 \\
 \lstick{$\ket{\psi}$} & \qwbundle[alternate]{} & \qwbundle[alternate]{} & \qwbundle[alternate]{} & \gate{\tilde{U}_{s_k}^{'2^0}}\qwbundle[alternate]{}& \gate{\tilde{U}_{s_k}^{'2^1}}\qwbundle[alternate]{} & \qwbundle[alternate]{}\ldots & \gate{\tilde{U}_{s_k}^{'2^{m-2}}}\qwbundle[alternate]{} & \gate{\tilde{U}_{s_k}^{'2^{m-1}}}\qwbundle[alternate]{} & \qwbundle[alternate]{} & \qwbundle[alternate]{} & \qwbundle[alternate]{} &
\end{quantikz}
}
\caption{Circuit to implement a Gaussian $m$-qubit phase estimation algorithm with $(m-q)$-qubits for spectral interpolation. The $\tilde{U}'_{s_k}$ operator is evolved by an effective Hamiltonian $\tilde{H}_{s_k}$, with a Trotter step size $ts_{k}$ (See \Cref{thm:gauss_state_prep}) and corresponding evolution time $T'_k= e'_k\times dt  = \Theta (t s_1)$.}\label{fig:gaussian_qpea}
\end{figure*}
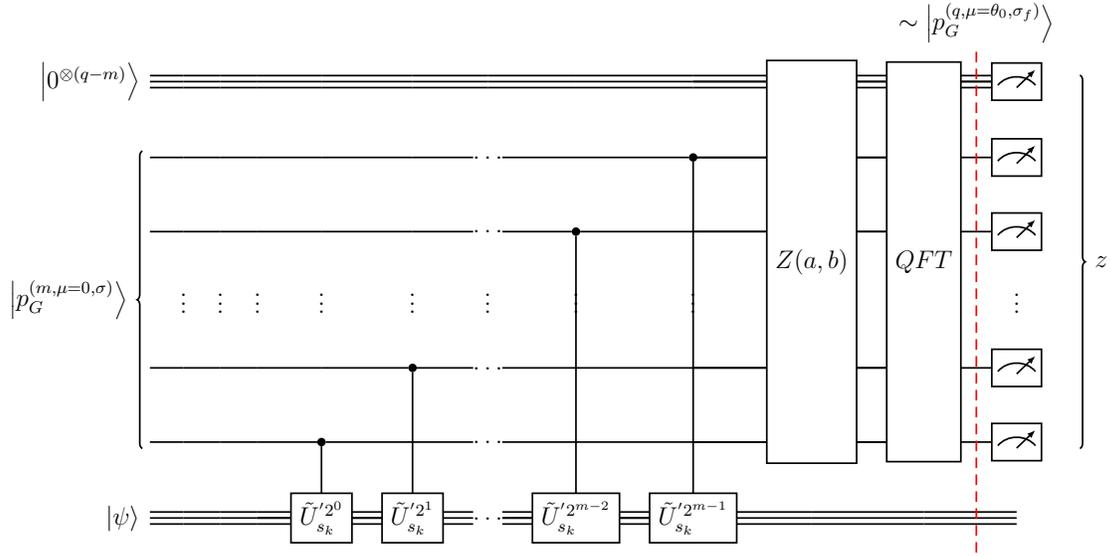

The first step in the phase estimation protocol involves preparing a Gaussian distribution in the ancillary register of $m$ qubits. Let $p(w; \sigma)$ be the probability density function for a Gaussian distribution with zero  mean ($\mu=0$) and variance $\sigma^2$. 
\begin{align}
    p (w;\sigma) := \frac{1}{\sigma\sqrt{2\pi}} e^{-w^2/(2\sigma^2)}
\end{align}
Define
\begin{align}
\ket{p_G(\sigma, T, m)} := \sum^{2^{m-1}-1}_{x=-(2^{m-1}-1)} \sqrt{\frac{p(x\,T;\sigma)}{\N(\sigma, T,m)}}\ket{x}
\end{align}
where $T>0$ plays the role of a sampling rate, and $\N(\sigma, T,m) > 0$ is the normalization constant for the discrete, truncated Gaussian.
\begin{align}\label{eq:gqpea_normalization_def}
    \mathcal{N}(\sigma,T,m) := \sum^{2^{m-1}-1}_{x=-(2^{m-1}-1)} p(nT;\sigma).
\end{align}

An exact preparation of $\ket{p_G}$ can be achieved through the methods proposed in Ref.~\cite{Klco_2020}. However, within a target error, we can prepare a coarser Gaussian distribution and then perform an upsampling through QFT, zero-padding, and then QFT$^{-1}$. This is an established method for upsampling/interpolation of discrete signals in classical discrete signal processing (DSP), and was introduced for quantum distribution preparation in Ref~\cite{garcia2021quantum}.
In \Cref{fig:gaussian_qpea} we illustrate the circuit used for this method. The operator $Z$ plays the role of zero-padding ubiquitous to classical DSP for interpolation in the conjugate space. The errors introduced by truncation and finite sample rate in the time domain are estimated in \Cref{app:gaussian_sample_error}, and can be summarized with the following theorem.

\begin{theorem}\label{thm:spectral_gaussian_error}
Let $q, m$ be positive integers such that $q\ge m$. Let $T, \sigma >0$ be chosen such that $\sigma/T = \Theta(\sqrt{2^m})$, and define the Fourier conjugates $F:=1/(2^q T)$ and $\sigma_f:=1/(4\pi\sigma)$. Finally, let
\begin{align}
    X(f) := \mathcal{F}\{\sqrt{p(\cdot \; ;\sigma)}\}(f) := \int_\mathbb{R} \sqrt{p(t;\sigma)}  e^{-2\pi i  t f} dt = \sqrt{\sqrt{2\pi}\sigma_f} p (f; \sigma_f)
\end{align}
denote the Fourier transform of the (root of the) Gaussian $p$. Then the spectral norm error of the prepared wave function in Fourier domain is
\begin{align}
     &\left\Vert \sum^{2^q}_k \frac{X(kF)}{\sqrt{\mathcal{N}(\sigma_f,F,q)}}  \ket{k} - \mathrm{QFT}\ket{p_G(\sigma, T, m)} \right\Vert \in  O\left(\frac{2^{m/2}}{e^{\Omega(2^m)}}\right).
\end{align}
\end{theorem}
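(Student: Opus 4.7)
My plan is to decompose the total error into three physically distinct contributions --- time-domain truncation, frequency-domain aliasing, and normalization discrepancies between discrete and continuous Gaussians --- and show that each is suppressed by $e^{-\Omega(2^m)}$ under the hypothesis $\sigma/T = \Theta(\sqrt{2^m})$. This scaling simultaneously places the time cutoff $2^{m-1}T = \Theta(\sqrt{2^m})\sigma$ far out in the Gaussian tail, and forces $\sigma_f T = \Theta(1/\sqrt{2^m})$ so that the Fourier-domain Gaussian is negligibly small at the Nyquist frequency $1/(2T)$.

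First, I would unfold the circuit explicitly: applying $Z(a,b)$ to $\ket{p_G(\sigma,T,m)}\otimes\ket{0^{q-m}}$ produces a zero-padded length-$2^q$ amplitude sequence supported on $|n|<2^{m-1}$, so the subsequent $\mathrm{QFT}$ yields $Y_k = (2^q\,\mathcal{N}(\sigma,T,m))^{-1/2}\sum_{|n|<2^{m-1}}\sqrt{p(nT;\sigma)}\,e^{-2\pi i kn/2^q}$ for $k=0,\ldots,2^q-1$. I would then invoke the Poisson summation formula on the \emph{untruncated} sum, using the identity $TF = 1/2^q$, to rewrite its value as $T^{-1}\sum_{l\in\mathbb{Z}}X(kF - l/T)$. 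This identity isolates the target $X(kF)$ (the $l=0$ term) together with aliasing copies shifted by integer multiples of $1/T$, leaving a truncation remainder involving only the excluded $|n|\ge 2^{m-1}$ samples.

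Next I would bound each of the three pieces in turn. The truncation tail is controlled by the standard Gaussian estimate $\sum_{|n|\ge 2^{m-1}}\sqrt{p(nT;\sigma)} \leq e^{-\Omega(2^m)}$. The aliasing contributions $X(kF-l/T)$ for $l\ne 0$ have arguments of magnitude $\ge 1/(2T)$, so with respect to the Fourier width $\sigma_f$ each is bounded by $e^{-\Omega(1/(T\sigma_f)^2)} = e^{-\Omega(2^m)}$ and summed over $l$ via geometric Gaussian decay. For the normalization constants, a Riemann-sum argument combined with the same tail bound gives $\mathcal{N}(\sigma,T,m) = T^{-1}(1+e^{-\Omega(2^m)})$ and $\mathcal{N}(\sigma_f,F,q) = F^{-1}(1+e^{-\Omega(2^m)})$. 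Substituting these into the ratio $Y_k/\tilde X_k$, where $\tilde X_k := X(kF)/\sqrt{\mathcal{N}(\sigma_f,F,q)}$ is the target amplitude, all leading prefactors cancel through $TF = 1/2^q$, leaving $|Y_k - \tilde X_k|$ bounded by $|\tilde X_k|e^{-\Omega(2^m)}$ together with an additive truncation contribution of the same order.

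Finally, I would collect these per-component estimates into an $\ell_2$ bound over the $2^q$ amplitudes. The effective support of the Fourier-domain Gaussian in the $k$-register has width $\Theta(\sigma_f/F) = \Theta(2^{q-m/2})$ bins, and assembling the additive truncation remainder together with the multiplicative correction on each amplitude yields the claimed $O(2^{m/2}/e^{\Omega(2^m)})$ scaling, with the $2^{m/2}$ prefactor arising from this bin-count versus per-amplitude-magnitude trade-off. The main obstacle I anticipate is the careful bookkeeping of square-root amplitudes: because $\sqrt{p}$ is not itself a probability density, the relevant normalization is furnished by Plancherel rather than probability conservation, and one must apply Poisson summation in an $L^2$ sense with uniform-in-$k$ control of the truncation remainder. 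A secondary technical concern is the precise matching of $\mathcal{N}(\sigma,T,m)$ with $\mathcal{N}(\sigma_f,F,q)$ so that exponentially small sub-leading corrections do not spoil the cancellation of the leading factors.
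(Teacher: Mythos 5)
Your proposal follows the same three-way decomposition as the paper's Appendix C: time-domain truncation of the Gaussian tail, frequency-domain aliasing via the Poisson-summation/DTFT identity $X_{1/T}(f) = T^{-1}\sum_l X(f-l/T)$, and a renormalization mismatch between the discrete norms $\mathcal{N}(\sigma,T,m)$ and $\mathcal{N}(\sigma_f,F,q)$, each shown to be $e^{-\Omega(2^m)}$-suppressed under $\sigma/T=\Theta(\sqrt{2^m})$. The one place where you should be careful is the origin of the $2^{m/2}$ prefactor: in the paper it emerges because the per-amplitude renormalization error is $O(2^{(m-q)/2}/e^{\Omega(2^m)})$ and dominates the $O(2^{m/4}/2^{q/2})$-prefactored truncation and aliasing contributions after $\ell_2$-aggregation over $2^q$ bins, rather than from a bin-count-versus-amplitude-magnitude trade-off as you sketch; working out that accounting explicitly would close the only loose end in your outline.
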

Using these errors in the approximate state constructed in Fourier domain, we can perform a sequence of unitaries controlled on the Gaussian window.  This approach is similar in spirit to the Kaiser-window approach taken in~\cite{gorecki2020pi}. However, here we use a Gaussian window, which by inverting the conjugate spaces labels in \Cref{thm:spectral_gaussian_error}, we know we can prepare efficiently with a cost $O(\mathrm{polylog}(1/\varepsilon)/\sigma)$, where $\varepsilon$ is the vector error on the window state.

\subsubsection{Phase Estimation Cost}

Via the above theorem, we can now bound the cost for extrapolating eigenvalues obtained using Gaussian phase estimation. Through \Cref{thm:extrapBd} and \Cref{lem:extrapGaussBd}, we estimate the effects of uncertainty propagation that interpolation has on the final interpolant for Gaussian and one-qubit phase estimation. The result goes as follows:

\begin{theorem}[Phase estimation cost] \label{thm:gauss_phase_est}
The gate count of estimating the eigenvalues of $\tilde{H}_{s_k}$ using Gaussian phase estimation is 
\begin{align*}
    C_\mathrm{est, gpe} \in O\left(\frac{\log(1/\epsilon)}{\sigma_{P}}\right),
\end{align*}
where $\sigma_P$ is the standard deviation on the interpolated observable at $s=0$. The number of auxiliary qubits required is $O\left(\log (1/\sigma_P)\right)$. Alternatively, using a single-ancillary-qubit approach,
\begin{align*}
    C_\mathrm{est,1-qubit} \in O\left(\frac{\log(1/\epsilon)\log\log(1/\epsilon)}{\epsilon}\right),
\end{align*}
where $\epsilon>0$ is such that $\max_s|\lambda(s)-\polynom_{n-1}\tilde{\lambda}(s)|\leq \epsilon$ and $\polynom_{n-1} \tilde{\lambda}$ is the (unique) $(n-1)$-degree polynomial interpolating the data $\tilde{\lambda}_i$ at the $n$ interpolation points $s_i$. 
\end{theorem}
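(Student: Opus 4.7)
The overall plan is to prove each of the two cost bounds separately by pairing the stability of the Chebyshev-based extrapolation procedure (from \Cref{thm:extrapBd} for pointwise bounds and \Cref{lem:extrapGaussBd} for variance propagation) with the known scaling of the corresponding phase-estimation primitive; in both cases the task reduces to showing that the required per-node precision is only logarithmically smaller than the final target, so that the per-node phase-estimation cost drives the overall bound.

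For the Gaussian-QPE part, I would begin with \Cref{thm:spectral_gaussian_error}, which guarantees that the circuit of \Cref{fig:gaussian_qpea} produces a measurement distribution that is approximately Gaussian in frequency domain, centered at the true eigenphase with standard deviation $\sigma_f = 1/(4\pi\sigma)$. Since the per-query evolution length $T'_k = \Theta(t s_1)$ is fixed across nodes by \Cref{thm:gauss_state_prep}, this translates into an approximately Gaussian estimate of $\lambda(s_k)$ with variance $O(1/\sigma^2)$; \Cref{lem:extrapGaussBd} then bounds the interpolant variance by twice the largest nodewise variance, so it suffices to take $\sigma = \Omega(1/\sigma_P)$. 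The dominant cost is the $\sum_{j=0}^{m-1}2^j = \Theta(2^m)$ controlled applications of $\tilde{U}_{s_k}'$ in the QPE ladder, together with the $\mathrm{polylog}(1/\epsilon)$ overhead for preparing the Gaussian window state (noted after \Cref{thm:spectral_gaussian_error}). Choosing $2^m = \Theta(1/\sigma_P)$ then yields $O(\log(1/\epsilon)/\sigma_P)$ gates and the claimed $m = O(\log(1/\sigma_P))$ ancillas.

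For the single-ancilla bound, I would invoke \Cref{thm:extrapBd} with target precision $\epsilon$ and per-node failure probability $\delta/n$, so that each per-node estimate must be accurate to $\epsilon' := \epsilon/(\tfrac{2}{\pi}\log(n+1)+1)$. A semiclassical-QFT iterative phase estimation with a single ancilla and Heisenberg-limited scheduling (along the lines of the procedure used in~\cite{rendon2022effects}) attains precision $\epsilon'$ with $O(\log(n/\delta)/\epsilon')$ controlled applications of the unitary, with a $\log\log(1/\epsilon')$ factor arising from repeating the shortest round to boost confidence. The number of nodes $n$ itself scales only polylogarithmically in $1/\epsilon$ thanks to the exponential Chebyshev convergence on the Bernstein ellipse from \Cref{lem:Berns}, with the radius of analyticity controlled by \Cref{lem:radius_of_analiticity}. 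Substituting these scalings collapses the per-node cost to $O(\log(1/\epsilon)\log\log(1/\epsilon)/\epsilon)$, matching the claim.

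The main obstacle I anticipate is the bookkeeping needed to reconcile three independent error budgets --- interpolation error on $[-1,1]$ from \Cref{lem:Berns}, window-preparation error from \Cref{thm:spectral_gaussian_error}, and per-node phase-estimation error --- so that each contributes at most a constant fraction of the final $\epsilon$ (or $\sigma_P$) without inflating the quoted asymptotic costs. A related subtlety is verifying that the $1/|s_k|$ growth of the Trotter-step count per query of $\tilde{U}_{s_k}'$ does not leak into the per-node scaling; this should follow because the cost estimate is driven by the total evolution length $T'_k = \Theta(t s_1)$, which is constant across nodes by construction in \Cref{thm:gauss_state_prep}, rather than by the raw Trotter-step count at each individual node.
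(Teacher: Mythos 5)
Your single-ancilla analysis tracks the paper's in spirit (both invoke the logarithmic Lebesgue constant to show per-node precision need only be a $\log n$ factor tighter than $\epsilon$, with $n$ polylogarithmic by Bernstein-ellipse convergence), but your Gaussian-QPE analysis contains a gap that hides where the costs actually come from, and it diverges substantively from the paper's argument.

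Two concrete issues. First, your ``related subtlety'' remark is not correct: a constant per-query evolution length $T'_k = \Theta(t s_1)$ does \emph{not} prevent the Trotter-step count from leaking into the per-node gate count. Each query to $\tilde{U}'_{s_k} = (S_p(t s_k))^{e'_k}$ compiles to $e'_k = \lceil s_1/|s_k| \rceil = \Theta(1/|s_k|)$ product-formula steps, so the per-node gate cost carries a $1/|s_k|$ factor regardless of the fact that $T'_k$ is nearly uniform. This $1/|s_k|$ dependence is exactly why the paper's cost functional is $L_0 = \sum_{k=1}^{n/2} \bigl(\sigma_k \cos\tfrac{2k-1}{2n}\pi\bigr)^{-1}$ rather than $\sum_k \sigma_k^{-1}$. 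Second, if you then take uniform per-node standard deviations $\sigma_k = \sigma_P/\sqrt{2}$ via \Cref{lem:extrapGaussBd}, as you propose, summing the $1/|s_k|$ factors through \Cref{lem:Trotter_step_bound} gives $O(n\log n/\sigma_P)$ rather than the theorem's $O(n/\sigma_P)$; the extra $\log n = O(\log\log(1/\epsilon))$ factor is not absorbed by the stated bound. The paper closes this gap with a Lagrange-multiplier optimization that allocates looser precision to the nodes nearest $s=0$ (where $1/|s_k|$ is largest), using the exact interpolant-variance formula $\sigma_R^2 = \sum_k c_k^2 \sigma_k^2$ with $c_k = d_k(0)$ from \Cref{lem:c_norm} rather than the cruder max-bound in \Cref{lem:extrapGaussBd}; this saves the $\log n$ factor and also produces the $\log(1/\epsilon)$ in the final bound as $n$, not (as you attribute it) from Gaussian-window-preparation overhead. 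Your proposal, as written, also never explicitly sums over the $n$ nodes in the GPE case, so the role of $n$ in the final cost is left implicit. The fix is to replace the uniform-$\sigma_k$ argument with the constrained optimization over $\{\sigma_k\}$ and to track the $1/|s_k|$ Trotter-step weight explicitly in the per-node cost.
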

The bound for $C_\mathrm{est, gpe}$ can also be cast in terms of a confidence interval, $\epsilon= w \sigma_P $, around the mean, which introduces an error rate that decreases super-exponentially with $w$.

\section{Interpolation for Expectation Values}
\label{sec:expvals}
We now consider the application of Chebyshev interpolation to estimate expectation values, a fundamental task in quantum computation. The setting is as follows: given a quantum state $\rho$ and observable $O$, the expectation value is given by $\langle O \rangle = \Tr{\rho O}$. We evolve our system according to a $2k$-th order ST formula $\tilde{U}_s$ given by~\eqref{eq:s_param_S2k}. The time evolved expectation values of interest is captured by the function
\begin{align} \label{eq:evals}
    f(s) := \frac{\Tr{\rho O_s(t)}}{\norm{O}}
\end{align}
where $O_s(t)$ is given by equation~\eqref{eq:setup_expvals}. We've normalized the expectation values by $\norm{O}$ because the relative error is a useful and natural metric, and also the normalized operators may be block encoded for amplitude estimation. Alternatively, we simply restrict our attention to normalized observables with $\norm{O} = 1$. The interpolation algorithm we propose can be summarized as follows.
\begin{enumerate}
    \item Given Hamiltonian $H$, simulation time $t$, and tolerance $\epsilon$ for the estimate of $\langle O(t)\rangle/\norm{O}$, choose the appropriate interpolation interval $[-a,a]$ and an even number $n$ of Chebyshev nodes. We neglect the cost of this step. The error analysis we will perform subsequently will inform the choices of $a$ and $n$.
    \item Compute estimates $\tilde{y}_i$ of the expectation values $\langle O_{s_i}(t) \rangle$ for each $s_i$ with $i = 1, \dots, n/2$, to an accuracy depending on $\epsilon$ and $n$. We will assume this step is done with Iterative Quantum Amplitude Estimation (IQAE)~\cite{IQAE}, a recent approach to amplitude estimation that exhibits low quantum overhead. Our metric of cost is the number of $H_j$ exponentials executed on a quantum circuit, where $H = \sum_{j} H_j$. Note that by symmetry, we need not compute $\tilde{y}_i$ for $i > n/2$. We have $f(s_i) = f(s_{n-i+1})$ for all $i \in \{1,\dots,n\}$.
    \item Perform the polynomial fit $\tilde{P}_{n-1} f$ through the points $(s_i, \tilde{y}_i)$ using a Chebyshev expansion~\eqref{eq:O_expans_opt}. Note that $\tilde{P}_{n-1} f$ will automatically be even. This fit is well-conditioned, and we neglect the cost of this step.
    \item Evaluate the $\tilde{P}_{n-1} f(0)$ to be our estimate of $\langle O(t)\rangle$.
\end{enumerate}
To summarize, one performs amplitude estimation to acquire the time evolved expectation value at each Chebyshev node, then performs a polynomial interpolation of the data. The estimate is the value at $s = 0$.

Given an even set of Chebyshev nodes $\{s_1, \dots, s_n\}$, and making use of Lemma~\ref{lem:Cheb_error}, the interpolation error $E_{n-1}$ assuming perfect data points is given by
\begin{align}
    \abs{E_{n-1} (0)} \leq \frac{\abs{\Tr{\rho \,\partial_s^n O_s (t)}}}{\norm{O} n!} \prod_{i=1}^{n} \abs{s_i} \leq \max_{s \in [-a,a]} \frac{\norm{\partial_s^n O_s (t)}}{\norm{O}} \left(\frac{a}{2n}\right)^n.
\end{align}
With a suitable bound on $\partial_s^n O(t)$, we can provide an upper bound on the interpolation error at $s=0$. This bound is provided by the following lemma. In what follows, it will be helpful to define the parameter
\begin{align}
    c:=k (5/3)^k m \max_{l\in[1,m]} \norm{H_l} t
\end{align}
for ease of notation. This parameter is proportional to the the Hamiltonian size and the "total Trotter time," meaning the sum of all the forward and backward time steps, in absolute value, for a $2k$th ST formula.
\begin{lemma}[Error extrapolation for time-evolved observables] \label{lem:interp_error_evals}
    Under the conditions of~\Cref{lem:Heffderiv} ($ c a \leq \pi/20$), the following bounds holds on the Trotterized evolution $O_s (t)$ with step parameter $s \in (0,a]$:
    \begin{enumerate}
        \item for $c > n$ we have that
        $$\frac{\norm{\partial_s^n O_s (t)}}{\norm{O}} < \left(c\sqrt{e^3(1+\sqrt{8/\pi}e^2)}\right)^{2n}$$
    which gives an interpolation error 
    $$
        \abs{E_{n-1}(0)} < \left(129\frac{c^2 a}{n}\right)^n.
    $$
    \item For $c \le n$, we have
    $$
        \frac{\norm{\partial_s^n O_s(t)}}{\norm{O}} \leq \sqrt{\frac{2n}{\pi}} \left(\frac{e^4 c}{2}\right)^n n! e^{4ce^2\sqrt{2/\pi}}
    $$
    giving an interpolation error
    $$
        \abs{E_{n-1}(0)} \leq 2 \sqrt2 n \left(6ca\right)^n e^{24c}.
    $$
    \end{enumerate}
\end{lemma}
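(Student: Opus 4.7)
The plan is to bound $\|\partial_s^n O_s(t)\|$ by reducing it via the Leibniz rule to derivatives of the approximate evolution operator $V_s:=\tilde U_s(t)=e^{-it\tilde H_s}$, control those derivatives using the power series for the exponential together with \Cref{lem:Heffderiv}, and substitute the resulting bound into \Cref{lem:Cheb_error}. Writing $O_s(t)=V_s^\dagger O V_s$, Leibniz immediately gives
\[
\bigl\|\partial_s^n O_s(t)\bigr\| \;\le\; \|O\|\sum_{k=0}^n\binom{n}{k}\,\|\partial_s^k V_s\|\,\|\partial_s^{n-k} V_s\|,
\]
so the whole problem reduces to bounding $\|\partial_s^k V_s\|$ for each $0\le k\le n$.

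To control $\|\partial_s^k V_s\|$, I would expand $V_s=\sum_{j\ge 0}(-it)^j\tilde H_s^j/j!$ and differentiate term by term. The noncommutative Leibniz rule writes $\partial_s^k\tilde H_s^j$ as a multinomial sum over compositions $k_1+\cdots+k_j=k$, and each factor $\partial_s^{k_\ell}\tilde H_s$ is controlled by \Cref{lem:Heffderiv} through $\|\partial_s^{k_\ell}\tilde H_s\|\le 2t^{-1}k_\ell^{k_\ell}(e^2 c)^{k_\ell+1}$, where $c:=k(5/3)^k m\max_l\|H_l\|t$. The Stirling-type identity $\binom{k}{k_1,\dots,k_j}\prod_\ell k_\ell^{k_\ell}\le e^k k!$ collapses the multinomial sum cleanly and is the single key combinatorial step.

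The two cases in the lemma arise from whether $c$ or $n$ controls the dominant term of the resulting series. For $c\le n$ the cleanest route is a Cauchy-integral bound in complex $s$: on a disc $|w|\le r\sim 1/c$ about $s$, \Cref{lem:Heffderiv} together with the Taylor series for $\tilde H$ gives $\|\tilde H_{s+w}\|=O(c/t)$, hence $\|V_{s+w}\|\le e^{O(c)}$, so $\|\partial_s^k V_s\|\le k!\,r^{-k}e^{O(c)}$; Leibniz then produces the Case 2 bound with its $n!(e^4c/2)^n$ scaling. For $c>n$ a Cauchy-type argument is too wasteful because it always costs a full $n!$, so I would instead keep the explicit Taylor expansion and use that its peak contribution (near $j\sim c$) yields geometric growth $\|\partial_s^k V_s\|\lesssim (\mathrm{const}\cdot c)^{2k}$; Leibniz then gives the Case 1 scaling $\sim c^{2n}$.

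Finally, each bound is substituted into \Cref{lem:Cheb_error}, which introduces the $(a/(2n))^n$ factor. In Case 1, $c^{2n}/(2n)^n=(c^2/(2n))^n$ immediately yields the stated $(129\,c^2a/n)^n$ form after collecting numerical constants. In Case 2, Stirling absorbs $n!/(2n)^n$ into $(6ca)^n$ up to a $\sqrt{n}$ prefactor, leaving $e^{O(c)}$ as the exponential tail of the stated bound. I expect the main obstacle to be the bookkeeping that underlies Case 1: the Cauchy shortcut is not tight enough to recover the $c^{2n}$ scaling, so one must sum carefully over Taylor order $j$ and composition indices $k_1,\dots,k_j$, exchanging the order of summation to isolate the peak contribution of the series without losing a spurious $n!$.
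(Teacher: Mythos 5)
Your Leibniz reduction of $\partial_s^n O_s(t)$ to products of $\|\partial_s^k V_s\|$ with $V_s:=e^{-it\tilde H_s}$ is exactly the paper's first step (its Eq.~\eqref{eq:1st_pass_expvals}). Where you diverge is in bounding $\|\partial_s^k V_s\|$: the paper uses Fa\`a di Bruno via complete Bell polynomials, invoking a result that the operator derivative norm is dominated by the commuting Bell-polynomial expression
\[
\|\partial_s^k V_s\|\le Y_k\bigl(t\|\partial_s\tilde H_s\|,\dots,t\|\partial_s^k\tilde H_s\|\bigr),
\]
crucially with \emph{no} factor of $e^{t\|\tilde H_s\|}$, because the undifferentiated exponential has unit norm (Hermitian generator). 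A sharp Bell/Touchard polynomial estimate then gives $\|\partial_s^k V_s\|\le(e^3ck/2)^k(1+\sqrt{8/\pi}\,e^2c/k)^k$, from which both cases of the lemma follow by elementary manipulations of the binomial sum.

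Your Case~2 route (Cauchy integral on a disc of radius $\sim 1/c$) is a genuinely different and viable argument, but it has an unflagged step: you must control $\sup_{|w|\le r}\|V_{s+w}\|$ for complex $w$, and the naive bound $e^{t\|\tilde H_{s+w}\|}$ would give $e^{O(1)}$ rather than $e^{O(c)}$ only if you exploit Hermiticity. Concretely you need the log-norm argument $\|V_{s+w}\|\le e^{t\|\tilde H_{s+w}-\tilde H_s\|}$ (the $-it\tilde H_s$ piece contributes log-norm zero), with $\|\tilde H_{s+w}-\tilde H_s\|$ then summed from the derivative bounds; this is a real but fixable omission. Once supplied, Cauchy recovers the $n!\,(O(c))^n e^{O(c)}$ scaling and Case~2 works.

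Your Case~1 route, however, has a structural gap. Expanding $V_s=\sum_j(-it\tilde H_s)^j/j!$ and applying the triangle inequality destroys the unitarity-enforced cancellation: the absolute series $\sum_j t^j\|\tilde H_s\|^j/j!=e^{t\|\tilde H_s\|}$ is $e^{\Theta(c)}$ even though $\|V_s\|=1$. Carrying this through the composition/multinomial sum (your $\binom{k}{k_1,\dots,k_j}\prod_\ell k_\ell^{k_\ell}\le e^k k!$ estimate is correct, but the sum over $j$ of $(2e^2c)^j/j!$ terms remains) leaves a spurious $e^{\Theta(c)}$ prefactor on $\|\partial_s^k V_s\|$. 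For $c>n$ this factor dominates the target $(\mathrm{const}\cdot c)^{2n}$ bound and cannot be absorbed; identifying the ``peak near $j\sim c$'' does not help because the peak value is precisely what produces the exponential. The paper avoids this by never expanding the exponential --- the Bell polynomial (equivalently, iterated Duhamel) keeps a norm-one propagator intact at every stage. Note that your own Cauchy tool \emph{does} in fact handle Case~1, contrary to what you assert: taking a $k$-dependent radius $r\sim k/(e^3c^2)$ (which is $<1/(e^3c)$ for $k<c$, so remains inside the analyticity disc) gives $k!/r^k\le (e^3c^2)^k$ and $\sup\|V_{s+w}\|\le e^{O(k)}$, hence $\|\partial_s^k V_s\|\le (O(c^2))^k$ with no $e^{O(c)}$, matching the paper's Case~1 scaling. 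So the simplest repair is to drop the Taylor expansion for $c>n$ and use Cauchy with a scaling radius; alternatively, adopt the paper's Bell-polynomial bound which handles both regimes from one formula.
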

The proof of this lemma is a tedious exercise in repeated in the combinatorics of large derivatives and the triangle inequality, and is left to~\Cref{app:expvals_proof}. Note that once the derivative bound holds, the interpolation error bound follows immediately from~\Cref{lem:Cheb_error}. 

One motivation for these bounds is deriving asymptotic expressions for the algorithmic complexity. The following theorem gives an asymptotic query complexity for the number $N_\mathrm{exp}$ of Trotter exponentials $\exp(-iH_j \tau)$.

\begin{theorem} \label{thm:expvals_cost}
    Let $O(t) = U^\dagger (t) O U (t)$ be a time-evolved observable under a Hamiltonian $H = \sum_{l = 1}^m H_l$ on $n$ qubits, so that $U(t) = e^{-i H t}$. Suppose there exists a $\gamma \in \mathbb{R}_+$ such that $O/\gamma$ can be block encoded via a unitary $U_\mathrm{enc}$ by a state $\ket{G}$ on a set of $L$ auxiliary qubits. Let $\rho$ be a quantum state on $n$ qubits, and suppose $\gamma/\norm{O} \in O(1)$. Then, the number of exponentials $N_\mathrm{exp}$ required to estimate $\Tr{\rho O (t)}/\norm{O}$ to precision $\epsilon$ with confidence $1-\delta$ using a $2k$ order Suzuki Trotter formula satisfies 
    $$
        N_\mathrm{exp} \in \tilde{O}\left(c \max\{c, \log(1/\epsilon)\} \epsilon^{-1} \log(1/\delta) \right).
    $$
    Here, $\tilde{O}$ is big-$O$ with multiplicative terms suppressed which are logarithmically smaller in $1/\epsilon$ and $c$. Moreover, the number of auxiliary qubits needed is $O(L)$.
\end{theorem}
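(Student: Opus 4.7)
The plan is to split the total error in the estimate $\tilde{P}_{n-1} f(0)$ of $\Tr{\rho O(t)}/\norm{O}$ into (i) the pure interpolation error $\abs{E_{n-1}(0)} = \abs{f(0) - P_{n-1} f(0)}$ coming from replacing $f$ by its Chebyshev interpolant, and (ii) the propagation of the statistical error in each $\tilde{y}_i \approx f(s_i)$ through the interpolant. I would demand each contribution be at most $\epsilon/2$, tune $n$, $a$, and the per-node precision separately to enforce this, then count exponentials.

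First, I would use Lemma~\ref{lem:interp_error_evals} to fix $n$ and $a$. In the regime $c > \log(1/\epsilon)$, case 1 applies: choose $n = \lceil \log_2(4/\epsilon)\rceil$ and $a = n/(258 c^2)$, so that $(129 c^2 a/n)^n = 2^{-n} \leq \epsilon/4$. In the regime $c \leq \log(1/\epsilon)$, case 2 applies: choose $a = 1/(12ec)$ and $n = \Theta(c + \log(1/\epsilon))$ large enough that $2\sqrt{2}\, n\, (6ca)^n e^{24c} \leq \epsilon/2$. Both assignments satisfy $ca \leq \pi/20$, so Lemma~\ref{lem:Heffderiv} applies, and in both $n/a = \tilde{O}(c \max\{c, \log(1/\epsilon)\})$, which will drive the total Trotter-step count.

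Next, for the data precision, I would invoke Theorem~\ref{thm:extrapBd}. It suffices that
\begin{align*}
\abs{\tilde{y}_i - f(s_i)} \leq \frac{\epsilon/2}{(2/\pi)\log(n+1) + 1} =: \epsilon' = \tilde\Omega(\epsilon)
\end{align*}
with probability at least $1 - \delta/n$ for each $i$, so a union bound gives the full estimate within $\epsilon$ with probability at least $1 - \delta$. I would build a block encoding of $O_{s_i}(t)/\gamma$ by conjugating $U_\mathrm{enc}$ with $\tilde{U}_{s_i}(t)$ and $\tilde{U}_{s_i}(t)^\dagger$ on the system register; since $\gamma/\norm{O} \in O(1)$ by hypothesis, the required accuracy in the block-encoded quantity is still $\Theta(\epsilon')$. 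Combining IQAE~\cite{IQAE} with a standard Hadamard-test register to convert the signed expectation value into an amplitude, each $\tilde{y}_i$ can be obtained with $O(\log(n/\delta)/\epsilon')$ queries to the block encoding and only $O(L)$ auxiliary qubits from the block encoding (plus $O(\log(1/\epsilon))$ internal IQAE qubits).

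Finally, I would count exponentials. Each query to the block encoding at node $s_i$ consumes $O(1/\abs{s_i})$ Suzuki--Trotter steps, each a bounded (in the parameters tracked by $\tilde{O}$) number of $H_j$ exponentials. Summing across the $n$ nodes and applying Lemma~\ref{lem:Trotter_step_bound},
\begin{align*}
N_\mathrm{exp} = \tilde{O}\!\left(\frac{\log(n/\delta)}{\epsilon}\sum_{i=1}^n \frac{1}{\abs{s_i}}\right) = \tilde{O}\!\left(\frac{n \log n}{a\,\epsilon}\log(1/\delta)\right) = \tilde{O}\!\left(\frac{c \max\{c, \log(1/\epsilon)\}}{\epsilon}\log(1/\delta)\right),
\end{align*}
which is the claimed bound, with auxiliary-qubit count $O(L)$.

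I expect the main obstacle to be the bookkeeping needed to stitch the two regimes of Lemma~\ref{lem:interp_error_evals} into a single asymptotic expression while respecting the hypothesis $ca \leq \pi/20$ of Lemma~\ref{lem:Heffderiv}: case 2 is immediate since $a = \Theta(1/c)$, but case 1 requires checking that $\log(1/\epsilon)/c^2 \lesssim 1$, which follows from the case assumption $c > \log(1/\epsilon)$. A secondary subtlety is absorbing the $O(m\cdot 5^k)$ exponentials per Trotter step into $\tilde O$ by viewing them as polynomial factors in the quantities already captured by $c = k(5/3)^k m \max_l \norm{H_l} t$; this is bookkeeping rather than a conceptual difficulty.
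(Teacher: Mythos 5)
Your argument mirrors the paper's proof in all essential respects: both split the total error into the pure interpolation error (bounded by $\epsilon/2$) and the propagated data error (bounded via Theorem~\ref{thm:extrapBd} and the Lebesgue constant), both account for each data point via IQAE plus fractional queries, and both count exponentials through Lemma~\ref{lem:Trotter_step_bound}, arriving at the same intermediate expression $\tilde O\bigl(n/(a\epsilon)\log(1/\delta)\bigr)$. Where you deviate is in choosing $n$ and $a$: you split into two regimes and invoke both cases of Lemma~\ref{lem:interp_error_evals}, whereas the paper simply forces $n\ge\lceil c\rceil$ so that only the second bound is ever active, and picks $a=1/(12 e^{24} c)$. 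Your large-$c$ branch (case~1) is essentially the alternative discussed in the Remark following the paper's proof, where the authors note it reproduces the same $c^2$ scaling and therefore does not improve the result; your derivation confirms that. Two small bookkeeping caveats in your version: for case~1 of Lemma~\ref{lem:interp_error_evals} to apply you need $c>n$, but with $n=\lceil\log_2(4/\epsilon)\rceil$ and your regime boundary $c>\log(1/\epsilon)$ in natural log, $c>n$ can fail when $\log(1/\epsilon)<c<\lceil\log_2(4/\epsilon)\rceil$; the fix is to set the regime boundary at $c>n$ itself. Also, your $a=1/(12ec)$ gives $6ca=1/(2e)$, so controlling the residual $e^{24c}$ in case~2 requires roughly $n\gtrsim 15c$ rather than merely $n\ge c$, whereas the paper's choice $a=1/(12e^{24}c)$ makes $(6ca)^n e^{24c}\le 2^{-n}$ immediate for $n\ge c$. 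Neither affects the stated asymptotics.
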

We give a sketch of the proof. Given a choice of interval $[-a,a]$ and (even) number of interpolation points $n$, we have from~\Cref{lem:Trotter_step_bound} that the number of exponentials to perform evolutions for all Chebyshev nodes goes as
\begin{align} \label{eq:Nexp_na}
    O\left(\frac{n \log n}{a}\right).
\end{align}
However, this is not the total cost since these circuits need to be repeated to perform the appropriate measurement protocols. Since $O$ can be block encoded, the expectation value can be obtained via an amplitude estimation protocol. By the well-conditioning of Lemma~\ref{thm:extrapBd}, each data point needs to be within $\epsilon$ of the exact Trotter value, up to a logarithmic factor in $n$. This robustness is why our result maintains a $\widetilde{O}(\epsilon^{-1})$ scaling. 

In our proof, we assume the IQAE protocol is used, requiring only a single qubit overhead. The fractional queries for the noninteger timestep also require $O(1)$ overhead, meaning the total overhead is $O(L)$ due to the block encoding. To relate $n$ and $a$ to the required precision $\epsilon$, simulation time $t$ and Hamiltonian $H$,~\Cref{lem:interp_error_evals} can be used. Thus, we can relate $N_{\mathrm{exp}}$ to these basic parameters. We carry out the formal proof in Appendix~\ref{app:expvals_proof}. 

As advertised, we see there is a "near-Heisenberg" scaling of $1/\epsilon$, up to logarithmic factors. However, there is an unsavory quadratic scaling in the simulation time in cases without high accuracy demands. We believe this can be improved, because our approach us forces us to have $n$ scale linearly in $T$. We believe this is overly pessimistic, and a $\log(\norm{H} T)$ scaling is more likely. Finally, our results suggest the best performance for using low order formulas, since our bounds are strictly worse for increasing ST order $k$.

\section{Numerical Experiments}
\label{sec:numerics}
In the previous sections, we presented theorems, with proofs in the appendices, that interpolation of Trotter data can lead to higher accuracies for eigenvalues and expectation values than Trotter alone.  This section provides numerical evidence backing up our analytic claims, not only showing that improved scaling is possible, but also that high-order Trotter formulas need not always provide better error scaling when used in conjunction with interpolation. Specifically, we demonstrate this improved scaling for phase estimation using interpolation for second-order and fourth-order ST formulas. For our demonstration we consider the transverse Ising model of two spins.
\begin{align}
H=-J\left( Z\otimes Z +g (X\otimes I  + I\otimes X)\right)
\end{align}
This is a minimal example with just two non-commuting terms, yielding a nonzero Trotter error. Our aim will be to estimate the ground state energy of the effective Hamiltonian $\tilde{H}_s$ in the limit $s \rightarrow 0$, and seek numerical evidence of the improved performance relative to low-order Trotter formulas that Theorems~\ref{thm:gauss_state_prep} and~\ref{thm:gauss_phase_est} suggest.

In order to avoid aliasing with our Fourier spectral methods, we must satisfy the bound
\begin{align}
\norm{\tilde{H}_{s_k}}\frac{t}{2\pi}  \abs{e_k' s_k} + \Delta_\text{pad} \leq 1,
\end{align}
where
\begin{align}
e'_k = \sgn{s_k}\left\lceil \frac{s_1}{\left| s_k\right|} \right\rceil.
\end{align}
The padding $\Delta_\text{pad}$ is there to suppress the probability leakage wrapping around the boundary. For simplicity, define a dimensionless effective Hamiltonian
\begin{align}
\tilde{h}_s := \tilde{H}_s \frac{t e_s' s}{2\pi},
\end{align}
with $e_s' = \sgn{s}\left\lceil s_1/\left| s\right|\right\rceil$, such that
\begin{align}\label{eq:new_spec_bound_Fourier}
\norm{\tilde{h}_{s_k}}+\Delta_{\text{pad}} \leq 1.
\end{align}
This way, the spectrum of $\tilde{h}_{s_k}$ lies within the domain $[-1/2+\Delta_{\text{pad}}/2,1/2-\Delta_{\text{pad}}/2]$ and now is in line with the conventions of Fourier spectral methods in~\cite{rendon2022effects}.   
With this, we now define the normalized Hamiltonian
\begin{align}
    h = \tilde{h}_0.
\end{align}
Here $t$ is sufficiently small as to fulfill \Cref{eq:new_spec_bound_Fourier}. We must also make sure that $s$ is chosen according to the constraints of~\Cref{subsec:Bernstein} such that there are no level crossings throughout the interpolation interval.

We now describe the results.~\Cref{fig:extrapol_second_order} displays the results of Chebyshev interpolation with second order Trotter formulas using various (even) numbers of points.  For each $\tilde{H}'_{s_k}$, we numerically simulate the ground state preparation protocol to a state error, $\epsilon_{\text{state}}$. This is efficient because the algorithm scales $O(\log 1/\epsilon_\text{state})$. We then simulate the Gaussian phase estimation on these states. Moreover, we have used zero-padding as described in the \Cref{subsubsec:gqpea} in order to upsample the spectrum and be able to ignore digitization errors. The spectral upsampling through zero-padding is efficient since the cost scales like $O(q^2)$, where $q$ is the number of total qubits after padding. The blue bands in these plots show an uncertainty equivalent to $0.01\times\sigma_P(s)$. We have down-scaled the statistical uncertainties by a factor of a 100 such that the interpolation error changes are noticeable when we change the number of nodes.

Next, in~\Cref{fig:truncation_error}, we show the exact systematic error from our simulations as well as the upper bounds of the Bernstein ellipse analysis of Lemmas~\ref{lem:H_error_bound},~\ref{lem:radius_of_analiticity}, and~\ref{lem:Berns}. We see from this data that interpolation indeed improves the quality of the estimate of the energy as anticipated, even when relatively low-precision estimates are used at $s = 1$. As discussed previously, only positive values of $s_k$ are computed due to reflection symmetry.

Finally, in~\Cref{fig:Trotter_vs_extrapol} we performed ground state energy estimation using second and fourth order formulas without interpolation, in order to compare the costs when plotted against the error. From the plots, we observe that beyond $n=2$ the extrapolation methods already outperform using second-order product formulas alone.  Specifically, we see clear indications from this data that the bias in the interpolated error (for large $1/\epsilon$) scales logarithmically for these Gaussian phase estimation experiments. This agrees with our expectations from Theorem~\ref{thm:gauss_phase_est} wherein the systematic errors from interpolating phase estimation experiments are expected to scale as $\mathrm {polylog}(1/\epsilon)$. In contrast, the bias from a fixed experiment is expected to be $O(1/\epsilon^{3/2})$ and $O(1/\epsilon^{5/4})$ respectively.  

It is worth noting that this does not violate the Heisenberg limit as here we are only showing the interpolation error, with data computed to machine precision. The exponential reduction in the uncertainty with the number of interpolation points considered is independently shown in \Cref{fig:truncation_error}.  Note that for this data we do not see an improvement from transitioning to higher-order formulas, which is consistent with our prior expectations.
These results show that extrapolation tends to outperform transitioning to a higher-order product formula does not necessarily lead to an improved scaling for the problem of phase estimation.

\begin{figure*}
\includegraphics[width=0.5\textwidth]{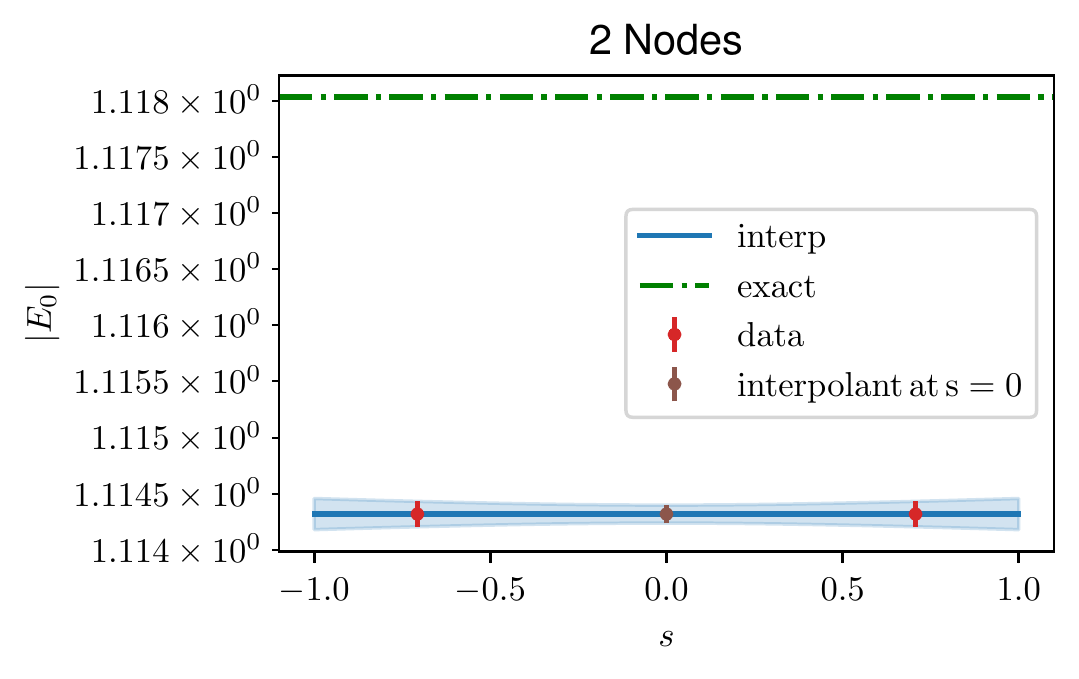}   
\includegraphics[width=0.5\textwidth]{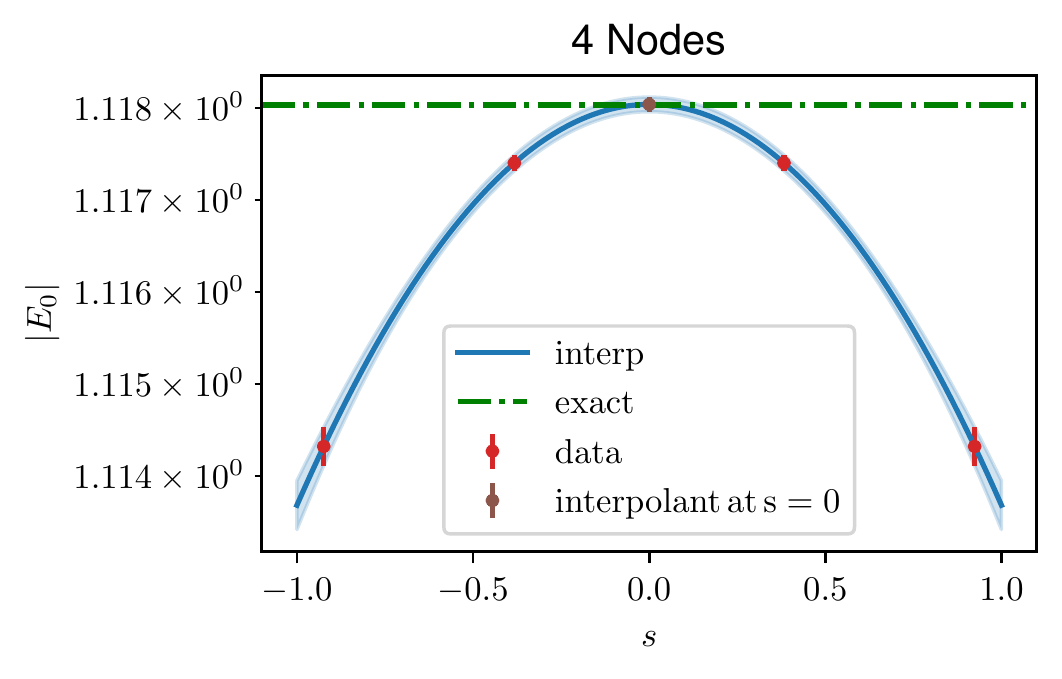}
\includegraphics[width=0.5\textwidth]{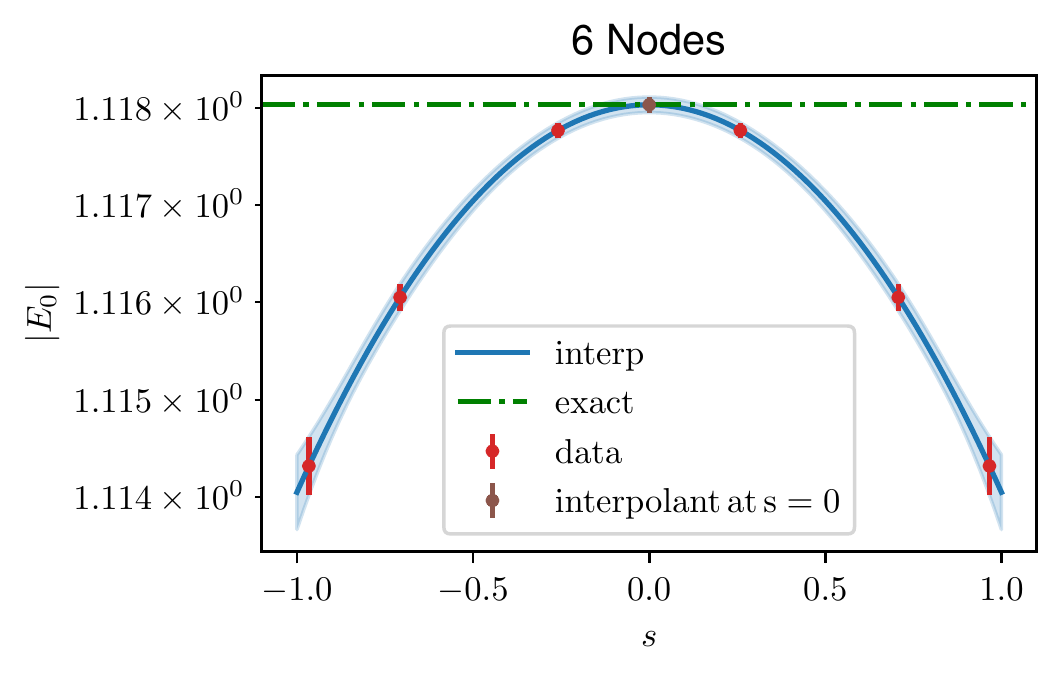}  
\includegraphics[width=0.5\textwidth]{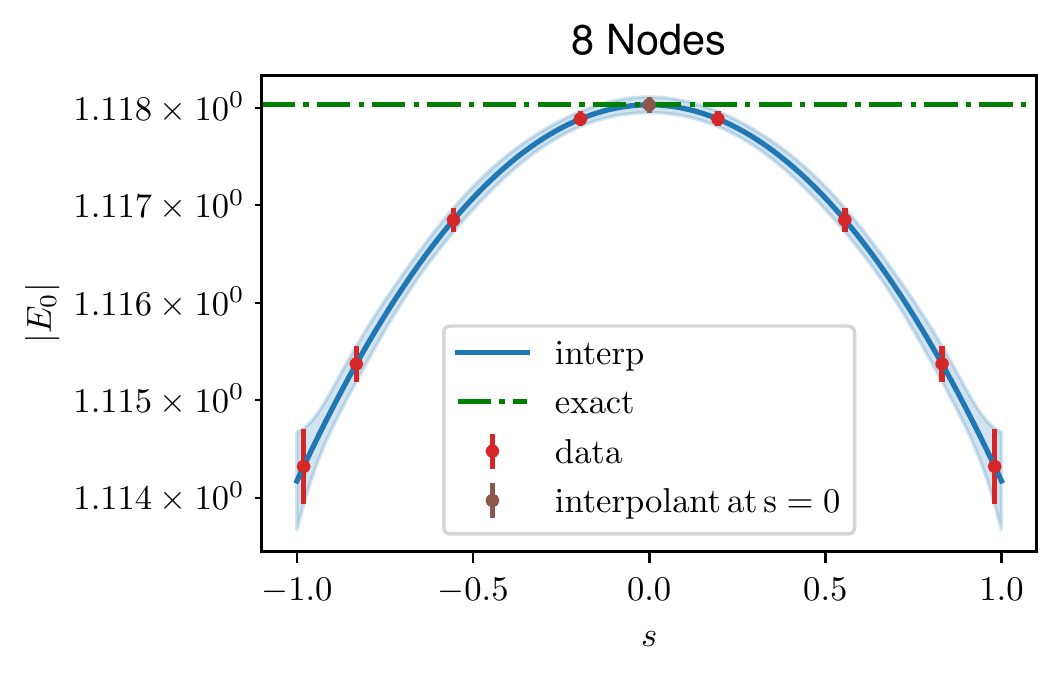}
\caption{Estimation of the ground state energy of the two-spin transverse Ising model using interpolation with Gaussian phase estimation. The four plots, left to right and top to bottom, have $n = 2, 4, 6, 8$ interpolation points. In all, the second order Trotter formula was used for simulation. The statistical uncertainties on the data and the interpolants shown here are equivalent to $0.01$ times their standard deviation. They are scaled this way so that interpolation error changes are noticeable. \label{fig:extrapol_second_order}}
\end{figure*}

\begin{figure*}
\centering
\includegraphics[width=0.45\textwidth]{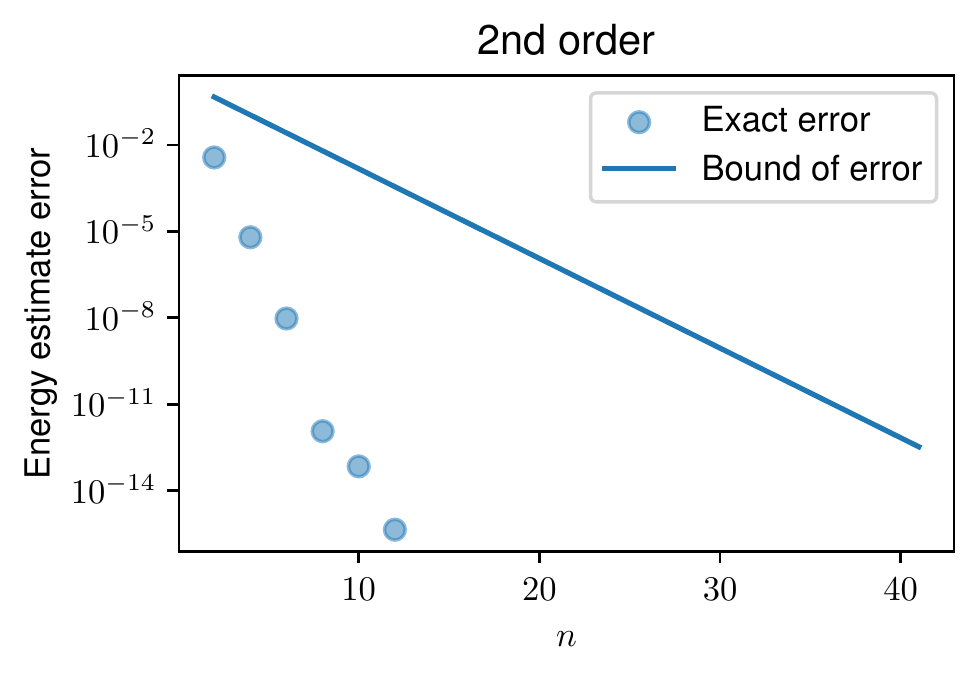}
\includegraphics[width=0.45\textwidth]{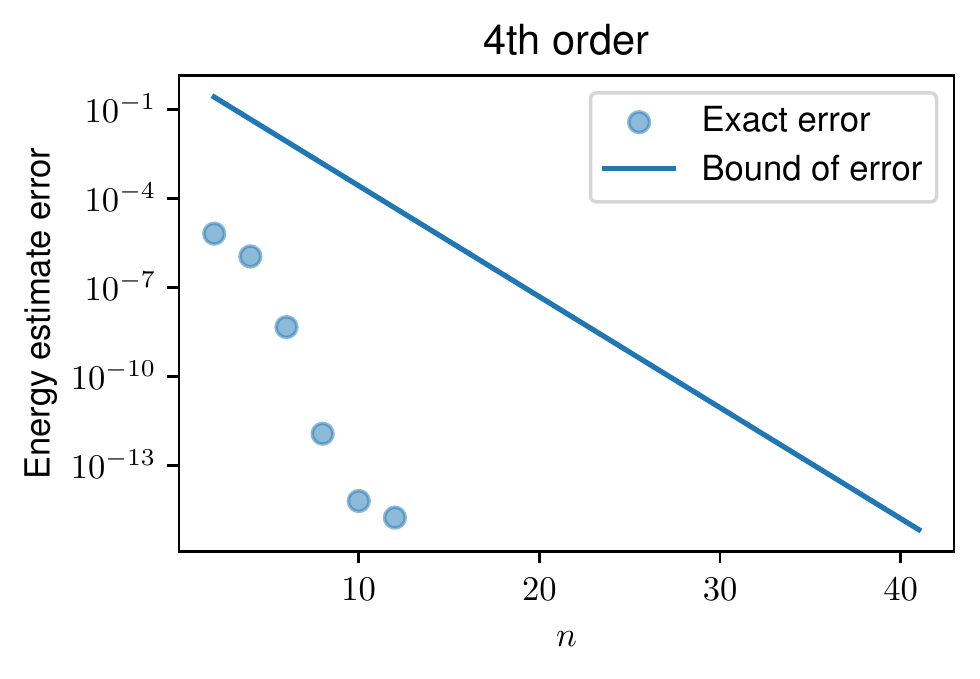}
\caption{The systematic error coming from truncating the polynomial expansion. In blue we have the exact result for the error and in orange are the upper bounds for the error. On the left, we used a second-order formula and on the right a fourth-order one. As we can see from this log-log plot, the convergence is exponential as expected.\label{fig:truncation_error}}
\end{figure*}
\begin{figure*}
\centering
\includegraphics[width=0.45\textwidth]{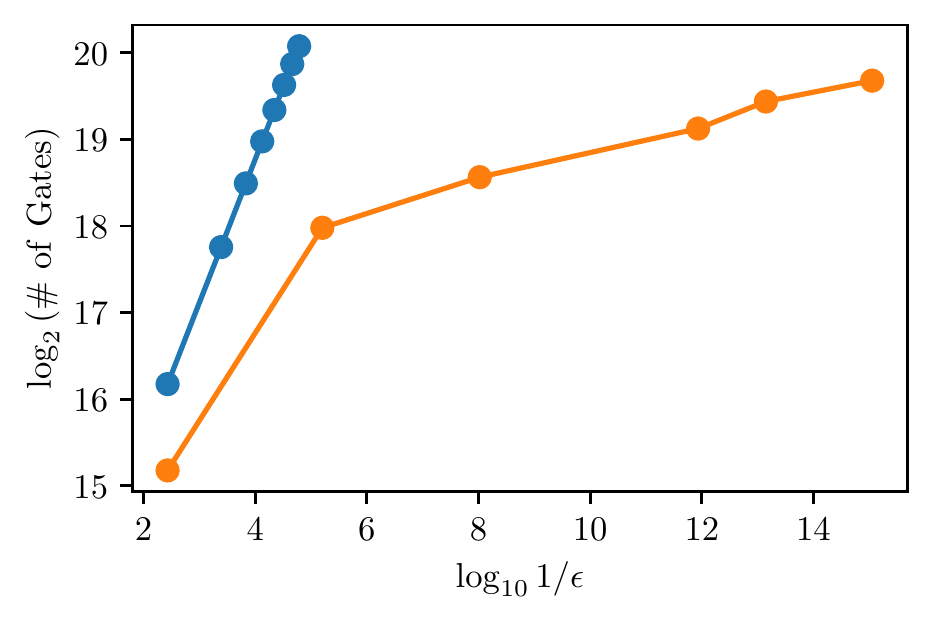}
\includegraphics[width=0.45\textwidth]{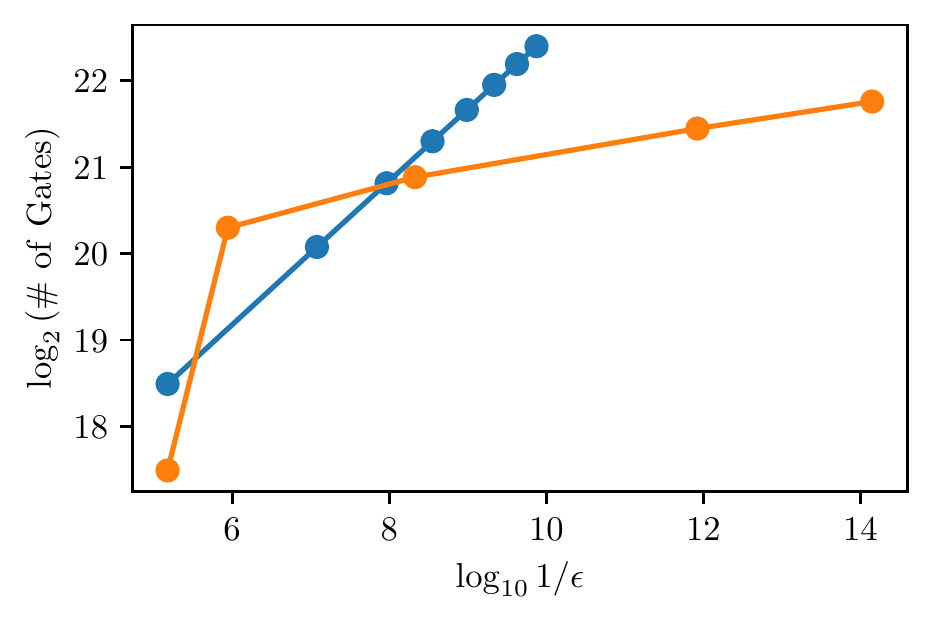}
\caption{The gate cost from using second- (left) and fourth-order (right) formulas for ground state estimation plotted against the systematic error on the eigenvalue estimation. In blue, we have the cost of using single long-depth circuits using product formulas alone. In orange, we have the gate cost of using multiple shallower circuits and then using the extrapolation methods presented in this work.\label{fig:Trotter_vs_extrapol}}
\end{figure*}

\section{Conclusion}
\label{sec:conclusion}
In this work, we present a framework for achieving improved accuracy of Trotter simulations using polynomial interpolation. We apply our framework to the problems of eigenvalue and expectation value estimation, and derive query complexities assuming the Trotter data is computed quantumly. We find that our methods achieve a Heisenberg scaling of $\tilde{O}(1/\epsilon)$ up to polylogarithmic factors. This is an improvement relative to a single Trotter calculation, which would yield $\tilde{O}(1/\epsilon^{1 + 1/p})$ for order $p$ formulas. For the low-order formulas used in practice, where $p$ is not so large, this improvement may be significant.

An interesting feature of polynomial interpolation in this context is that it provides a uniform approximation of the function $f(s)$ on the interpolation interval, thereby giving a characterization of the Trotter error for various step sizes. In contrast, similar techniques such as Richardson extrapolation only attempt to estimate $f(0)$. Whether this global characterization of the Trotter error can be usefully employed is a question we leave open.

There are several areas in which improvements in our analysis can likely be made. First, in most of our claims, we resort to fractional queries in order to produce step sizes at the values of the Chebyshev nodes. We suspect this machinery is not needed in practice, and that improved analysis, coupled with mathematical results on the robustness of the Lesbesgue constant to node perturbations, will show that interpolating at nearby points will be sufficient. Second, the quadratic dependence in simulation time that arises for expectation value interpolation in certain regimes is unsavory. The root cause of this in the analysis appears to be a linear relation between the simulation time and number of interpolation nodes, which seems overly cautious. Numerical studies beyond our own, as well as implementations on quantum hardware, may reveal a rosier picture than our naive bounds suggest. Third, our work does not consider the effects of imperfect operations. We can reasonably expect quantum operational errors to be more severe for the foreseeable future. In this case, the most salient effect is additional error on the data values that, following any mitigation procedures, cannot be arbitrarily diminished. We should expect that this will limit the interpolation accuracy to a certain threshold given roughly by the product $\epsilon_\mathrm{phys} \times \Lambda$, where $\epsilon_\mathrm{phys}$ is the typical error from imperfect operations and $\Lambda$ is the Lesbesgue constant. Finally, our results do not show commutator scaling, though the method itself almost certainly does. For Hamiltonians with almost commuting terms, this would manifest in the data being nearly flat, resulting in an approximately straight line fit to the correct answer. Our difficulty in showing this expected relationship arises from the messy expressions for high order derivatives of the effective Hamiltonian, but this should not be a fundamental issue.

Our work shows that polynomial interpolation can be a powerful supplement to quantum subroutines using low-order Trotter simulations. Given that our results ameliorate the worst aspects of low order formulas, we believe that investigation of highly efficient low-order formulas for simulation is likely to be an increasingly important topic.  Improvements that arise from the superior constant factors that low order formulas can provide may lead to a new generation of more efficient algorithms for simulating chemistry and physics using quantum computers.

\section{Acknowledgements}
\label{sec:acknowledgements}
We thank Christoph Lehner, Taku Izubuchi and Yuta Kikuchi for the insightful discussions. We also thank the reviewers for improving this article with specific feedback. During the inception of this work G. R. was supported through Brookhaven National Laboratory (BNL), the Laboratory Directed Research and Development (LDRD) program No. 21-043, by Program Development Fund No. NPP PD 19-025, and by the US DOE Office of Nuclear Physics, under the Contract No. DE-SC0012704 (BNL). During later additions to the manuscript, G.R. and N.W. were supported by the US DOE National Quantum Information Science Research Centers, Co-design Center for Quantum Advantage (C2QA) under contract number DE-SC0012704. G.R. was also supported and employed by Zapata Computing, Inc. through part of the writing of the manuscript. J.W. is supported by the NSF Graduate Research Fellowship and U.S. Department of Energy grant No. DE-SC0023658.

\FloatBarrier

\bibliographystyle{JHEP}
\bibliography{bibliography.bib}

\FloatBarrier

\appendix

\section{Proof of \Cref{lem:Heffderiv}: derivatives of effective Hamiltonian}
\label{app:setup_proofs}
\begin{proof}
Recall the definition of the effective Hamiltonian
\begin{align}
    \tilde{H}_s := \frac{i}{st}\log S_{2k} (st)
\end{align}
for $s \in \mathbb{R}\setminus\{0\}$, with $\tilde{H}_0 := \lim_{s\rightarrow 0} \tilde{H}_s = H$.
We will understand $\log S_{2k}(st)$ through a power series expansion about the identity.
\begin{align}\label{eq:logS_expansion}
    \log S_{2k} (st) = \sum_{j=0}^{\infty} \frac{(-1)^j}{j+1} (S_{2k}(st) - \openone)^{j+1}
\end{align}
This series converges precisely when
\begin{align}
    \norm{S_{2k}(st) - \openone} \leq 1.
\end{align}
Using the fundamental theorem of calculus, we can derive a suitable condition for convergence as a neighborhood about $s = 0$. The condition above implies
\begin{align}
    \norm{\int_0^{st} S_{2k}(x) dx} \leq 1
\end{align}
which is satisfied provided
\begin{align}
    \abs{st} \max_{x\in[0,st]}\norm{\frac{d}{dx} S_{2k}(x)} \leq 1.
\end{align}
Writing out
$S_{2k}(x) = \prod_{l = 1}^{N_k} \exp(-i H_{j_l} \tau_l x)$
where $H_{j_l}$ is some Hamiltonian piece $H_j$ indexed by $l$, the derivative can be upper bounded as
\begin{align}
\begin{aligned}
    \max_{x\in[0,st]}\norm{\frac{d}{dx} S_{2k}(x)} &\leq \sum_{l=1}^{N_k} \norm{H_{j_l}} \abs{\tau_l} \\
    &\leq \max_{j} \norm{H_j} \norm{\tau}_1
\end{aligned}
\end{align}
where $\tau = (\tau_l)_{l=1}^{N_k}$ is the vector of ST coefficients, and in going to the second line we used a H\"{o}lder inequality. We have $\norm{\tau_l}_1 \leq N_k \max_l \abs{\tau_l}$, and from Appendix A of~\cite{wiebe2010higher} we have
\begin{align}
    \max_l \abs{\tau_l} \leq 2k/3^{k}.
\end{align}
Thus, the requirement for convergence of the logarithm becomes
\begin{align} \label{eq:log_converge}
    \frac{4}{3} k (5/3)^{k-1} m \abs{st} \max_j \norm{H_j} \leq 1
\end{align}
where we used the expression $N_k = (2m)5^{k-1}$ for the number of ST exponentials. 

We now assume $s$ is within the symmetric interval defined by~\eqref{eq:log_converge}, such that~\eqref{eq:logS_expansion} is convergent. Since $\log S_{2k} (0) = 0$, $s=0$ is a zero of order at least one. We want to absorb the diverging $1/s$ term and better understanding the leading dependence in $s$. To facilitate this, we write
\begin{align} \label{eq:expand_H}
    \tilde{H}_s = -\frac{1}{it}\sum_{j=0}^\infty \frac{(-1)^j}{j+1} s^j \Delta S_{2k}(st)^{j+1}
\end{align}
where we defined 
\begin{align}
    \Delta S_{2k}(st) := \frac{S_{2k}(st) - \openone}{s}.
\end{align}
Note that $\Delta S_{2k}$ is analytic in $s$, and is a finite difference around $s = 0,$ such that 
\begin{align}
    \lim_{s\rightarrow 0} \Delta S_{2k} (st)= -i H t.
\end{align}
Through the series expansion~\eqref{eq:expand_H} we may bound derivatives of $\tilde{H}_s$ via bounds on derivatives of $\Delta S_{2k}$.
We first obtain a power series of $\Delta S_{2k}$ by Taylor expanding every term in the product formula $S_{2k}$. Regrouping in powers of $st$, the result is
\begin{align}
    \Delta S_{2k} (st) = \sum_{j = 1}^\infty \frac{s^{j-1} (-it)^j}{j!} \sum_J \binom{j}{j_1 \dots j_{N_k}} \prod_{l=1}^{N_k} (H_l \tau_l)^{j_l}
\end{align}
where the parenthetical symbol is the multinomial coefficient, and the sum $\sum_J$ is over all values of $J = (j_1, \dots, j_{N_k})$ such that $\sum_k j_k = j$. The derivatives with respect to $s$ are now easy to compute. Using the fact that 
\begin{align}
    \partial_s^n s^{j-1} = \frac{(j-1)!}{(j-1-n)!} s^{j-n-1}
\end{align}
for $j > n$ (and zero otherwise), we have
\begin{align} 
\begin{aligned}
    \partial_s^{n} \Delta S_{2k} (st) &= \sum_{j=n+1}^\infty \frac{s^{j-n-1}(-it)^j }{j!} \frac{(j-1)!}{(j-1-n)!}\sum_J \binom{j}{j_1 \dots j_{N_k}} \prod_{l=1}^{N_k} (H_l \tau_l)^{j_l} \\
    \norm{\partial_s^{n} \Delta S_{2k} (st) } &\leq \sum_{j=n+1}^\infty\frac{t^j}{(j-n-1)!} s^{j-n-1} (\tau_\mathrm{max} N_k \Lambda )^j \\
\end{aligned}
\end{align}
where $\Lambda := \max_j \norm{H_j}$ and $\tau_\mathrm{max} = \max_l \abs{\tau_l}$. Factoring out powers of $n+1$ and reindexing, we are left with the following bound on derivatives of $\Delta S_{2k}$.
\begin{align} \label{eq:Delta_S_deriv}
    \norm{\partial_s^n \Delta S_{2k} (st)} \leq (\tau_\mathrm{max} N_k \Lambda t)^{n+1} e^{s \tau_\mathrm{max} N_k \Lambda t}
\end{align}
This expression is quite elegant; it is as if we were taking $n+1$ derivatives of the exponential $e^{cs}$ with 
\begin{align} \label{eq:cdef}
\begin{aligned}
    c &:= \tau_\mathrm{max} N_k \Lambda t \\
    &\leq k(5/3)^k m \Lambda t
\end{aligned}
\end{align}
Factors of $c$ will occur frequently in what follows, so we find it convenient to adopt this symbol as shorthand.

We return to bounding the derivatives of powers of $\Delta S_{2k} (st)$ as in equation~\eqref{eq:expand_H}.
\begin{align}
    \partial_s^n \left[\Delta S_{2k}(st)^{j+1}\right]
\end{align}
We reduce this to the previous case by performing a multinomial expansion.
\begin{align}
    \partial_s^{n} \Delta S_{2k} (st)^{j+1} = \sum_N \binom{n}{n_0\dots n_j} \prod_{l=0}^{j} \partial_s^{n_l} \Delta S_{2k}(st)
\end{align}
As usual, the capital letter $N$ denotes the set of all nonnegative indices $n_0, \dots, n_j$ summing to $n$. Applying the triangle inequality and submultiplicativity, and employing the bound~\eqref{eq:Delta_S_deriv},
\begin{align}
\begin{aligned}
    \norm{\partial_s^{n} \Delta S_{2k} (st)^{j+1}} &\leq  \sum_N \binom{n}{n_0\dots n_j} \prod_{l=0}^j \norm{\partial_s^{n_l} \Delta S_{2k}(st)} \\
    &\leq \sum_N \binom{n}{n_0\dots n_j} \prod_{l=0}^j  c^{n_l + 1} e^{cs} \\
    &=  e^{(j+1)cs} c^{n + j + 1} \sum_N \binom{n}{n_0\dots n_j},
\end{aligned}
\end{align}
where we've used the sum property of the $n_l$ where appropriate. The remaining sum over the multinomial coefficient is given by $(j+1)^n$. Hence, 
\begin{align} \label{eq:deriv_S_powers}
    \norm{\partial_s^{n} \Delta S_{2k} (st)^{j+1}} &\leq  ((j+1)c)^n (c e^{cs})^{j+1}
\end{align}
Notice that, when $j = 0$, this is consistent with equation~\eqref{eq:Delta_S_deriv}. 

With result~\eqref{eq:deriv_S_powers} in hand, we return to the power series~\eqref{eq:expand_H}. Differentiating term by term
\begin{align}
    \partial_s^n \tilde{H}_s = -\frac{1}{it}\sum_{j=0}^{\infty} \frac{(-1)^j}{j+1} \partial_s^n\left(s^j \Delta S_{2k}(st)^{j+1}\right)
\end{align}
and performing a binomial expansion for each term
\begin{align} \label{eq:binom_sjS2k}
    \partial_s^n\left(s^j \Delta S_{2k}(st)^{j+1}\right) = \sum_{q=0}^n \binom{n}{q} \left(\partial_s^{q} s^j\right) \left(\partial_s^{n-q} \Delta S_{2k} (st)^{j+1}\right)
\end{align}
will allow us to apply our previous results. It will be helpful to consider two cases separately: $j \leq n$ and $j > n$. These regimes are somewhat qualitatively different, since the derivatives of $s^j$ may or may not vanish depending on the number of derivatives. Focusing on the case $j \leq n$, we have
\begin{align} \label{eq:binom}
    \partial_s^n\left(s^j \Delta S_{2k}(st)^{j+1}\right) = \sum_{q=0}^j \binom{n}{q} \frac{j!}{(j-q)!} s^{j-q} \left(\partial_s^{n-q} \Delta S_{2k} (st)^{j+1}\right).
\end{align}
Note that the sum runs only to $j$, not $n$. Taking a triangle inequality upper bound using~\eqref{eq:deriv_S_powers}, we may upper bound~\eqref{eq:binom} as
\begin{align}
\begin{aligned}
    &\sum_{q=0}^j \binom{n}{q} \frac{j!}{(j-q)!} s^{j-q}  ((j+1) c)^{n-q} (c e^{cs})^{j+1} \\
    &= (ce^{cs})^{j+1} \sum_{q=0}^j \binom{j}{q} \frac{n!}{(n-q)!} s^{j-q}((j+1)c)^{n-q}
\end{aligned}
\end{align}
where we have factored out terms not involving $q$ from the sum, and manipulated the factorials for reasons which will be seen presently. Taking the upper bound $n!/(n-q)! < n^q$, and factoring out $n-j$ powers of $(j+1) c$, we may upper bound the above expression by
\begin{align}
\begin{aligned}
    &(c e^{cs})^{j+1} ((j+1) c)^{n-j} \sum_{q=0}^j \binom{j}{q} n^q ((j+1) cs)^{j-q} \\
    &= (c e^{cs})^{j+1} ((j+1) c)^{n-j} (n + (j+1)cs)^j
\end{aligned}
\end{align}
Thus, with some minor polishing, we may express the bound on~\eqref{eq:binom_sjS2k} for $j \leq n$ as
\begin{align}\label{eq:j_smaller_n_result}
    \norm{\partial_s^n\left(s^j \Delta S_{2k}(st)^{j+1}\right)} \leq e^{(j+1)cs} c^{n+1} (j+1)^n \left(\frac{n}{j+1} + cs\right)^j.
\end{align}

Now let's move on to the $j > n$ case. Here, there are not enough derivatives to kill off the $s^j$ term, so the binomial sum in~\eqref{eq:binom} will run from $q = 0$ to $n$. 
\begin{align} 
    \partial_s^n\left(s^j \Delta S_{2k}(st)^{j+1}\right) = \sum_{q=0}^n \binom{n}{q} \frac{j!}{(j-q)!} s^{j-q} \left(\partial_s^{n-q} \Delta S_{2k} (st)^{j+1}\right)
\end{align}
Similar to before, we use the bound~\eqref{eq:deriv_S_powers}, to obtain
\begin{align}
\begin{aligned}
    \norm{\partial_s^n\left(s^j \Delta S_{2k}(st)^{j+1}\right)} &\leq \sum_{q=0}^n \binom{n}{q} \frac{j!}{(j-q)!} s^{j-q} ((j+1)c)^{n-q} (ce^{cs})^{j+1}\\
    &= (c e^{cs})^{j+1} s^{j-n} \sum_{q=0}^n \binom{n}{q} \frac{j!}{(j-q)!} ((j+1) cs)^{n-q}. \\
\end{aligned}
\end{align}
Taking $j!/(j-q)! < j^q$, a simpler upper bound is given by
\begin{align}
    (c e^{cs})^{j+1} s^{j-n} \sum_{q=0}^n \binom{n}{q} j^q ((j+1) cs)^{n-q}
    &= (c e^{cs})^{j+1} s^{j-n} (j + (j+1) cs)^n.
\end{align}
With some minor rearrangements, this gives the following upper bound for $j > n$.
\begin{align} \label{eq:j_larger_n_result}
    \norm{\partial_s^n\left(s^j \Delta S_{2k}(st)^{j+1}\right)} \leq e^{(j+1) cs} c^{n+1} (j+1)^n (cs)^{j-n} \left(\frac{j}{j+1} + cs\right)^n
\end{align}

With the bounds~\eqref{eq:j_smaller_n_result} and~\eqref{eq:j_larger_n_result}, we can return to bounding $\partial_s^n \tilde{H}_s$. Still separating the two cases $j \leq n$ and $j > n$, we can write
\begin{align}
\begin{aligned}
    \norm{\partial_s^n \tilde{H}_s} t &\leq \sum_{j=0}^n \frac{1}{j+1} \norm{\partial_s^n \left(s^j \Delta S_{2k}(st)^{j+1}\right)} + \sum_{j=n+1}^\infty \frac{1}{j+1} \norm{\partial_s^n \left(s^j \Delta S_{2k}(st)^{j+1}\right)} \\
    &= B_l + B_h
\end{aligned}
\end{align}
where $B_l$ and $B_h$ refer to bounds on the ``low'' and ``high'' parts of the series. Employing the bounds from equations~\eqref{eq:j_smaller_n_result} and~\eqref{eq:j_larger_n_result}, we have
\begin{align} \label{eq:B_low}
\begin{aligned}
    B_l &\leq \sum_{j=0}^{n} \frac{1}{j+1} e^{(j+1) cs}c^{n+1}(j+1)^n \left(\frac{n}{j+1} + cs\right)^j \\
    &= c^{n+1} \sum_{j=0}^n e^{(j+1) cs} (j+1)^{n-1} \left(cs + \frac{n}{j+1}\right)^j
\end{aligned}
\end{align}
and
\begin{align} \label{eq:B_high}
\begin{aligned}
    B_h &\leq \sum_{j=n+1}^\infty \frac{1}{j+1} e^{(j+1) cs} c^{n+1} (j+1)^n (cs)^{j-n}\left(\frac{j}{j+1}+ cs\right)^n \\
    &\leq c^{n+1} \sum_{j=n+1}^\infty e^{(j+1) cs} (j+1)^{n-1} (cs)^{j-n} \left(1 + cs\right)^n \\
    &= c^{n+1} \left(1 + cs\right)^n \sum_{j=n+1}^\infty e^{(j+1) cs} (j+1)^{n-1} (cs)^{j-n}.
\end{aligned}
\end{align}
Let's begin by simplifying the bound on $B_l$. We will at this point make the assumption that $s$ is sufficiently small such that $cs<1$. This will necessarily factor into the cost later. This simplification yields
\begin{align}
\begin{aligned}
    B_l &\leq c^{n+1} \sum_{j=0}^n e^{j+1} (j+1)^{n-1} \left(1 + \frac{n}{j+1}\right)^j \\
    &\leq c^{n+1} \sum_{j=0}^n e^{j+1} (j+1)^{n-1} e^n \\
    &\leq \frac{1}{e}(e^2 c)^{n+1} \sum_{j=1}^{n+1} j^{n-1} 
\end{aligned}
\end{align}
The remaining sum can be bounded by $(n+1)^n$, hence,
\begin{align}
    B_l \leq \frac{(e^2 (n+1) c)^{n+1}}{e(n+1)} \leq (e^2 c)^{n+1} n^n,
\end{align}
where the definition that $0^0 = 1$ handles the edge case. Let's turn our attention to $B_h$. We will start by reindexing so that the series begins at $j=0$ in~\eqref{eq:B_high}.
\begin{align} \label{eq:B_high_in_S}
    B_h &\leq c^{n+1} (1+cs)^n \sum_{j=0}^\infty e^{(j+n+2) cs} (j+n+2)^{n-1} (cs)^{j+1} \\
    &=(ce^{cs})^{n+1} (1+cs)^n \sum_{j=0}^\infty (cs e^{cs})^{j+1} (j+n+2)^{n-1}
\end{align}
The series converges if and only if
\begin{align}\label{eq:theorem_condition}
    cse^{cs} < 1.
\end{align}
This condition is slightly stronger than the condition~\eqref{eq:log_converge} that we need for convergence of the logarithm~\eqref{eq:logS_expansion}, and is equivalent to $cs < W(1) \approx 0.567$, where $W$ is the principal brach of the Lambert W function. Returning to~\eqref{eq:B_high_in_S}, we have the bound
\begin{align}
    (j+n+2)^{n-1} = (n+2)^{n-1} \left(1 + \frac{j}{n+2}\right)^{n-1} \leq (n+2)^{n-1} e^j.
\end{align}
Thus, we have
\begin{align}
\begin{aligned}
    B_h &\leq (c e^{cs})^{n+1} (1+ cs)^n (n+2)^{n-1} \sum_{j=0}^\infty (e cs e^{cs})^j \\
    &= (c e^{cs})^{n+1} (1+ cs)^n (n+2)^{n-1} \frac{1}{1- e cs e^{cs}}.
\end{aligned}
\end{align}
To be concrete, let's take $ecse^{cs}< 1/2$, which is implied by $cs<\pi/20$. Coupled with the inequality in~\eqref{eq:cdef}, this condition can be met provided that
\begin{align}
    k (5/3)^k m \Lambda st \leq \pi/20,
\end{align}
which is exactly the assumption of~\Cref{lem:Heffderiv}. This allows us to upper bound $B_h$ further as
\begin{align}
    B_h \leq 2e^{\pi/20} (cs)^{n+1} (3e^{\pi/20}/2)^n (n+2)^{n-1}\le 4(cs)^{n+1} (9/5)^{n} (n+2)^{n-1}.
\end{align}
Since $(n+2)^{n-1} \leq e^2 n^n /2$ (using $0^0 := 1$ for the edge case $n= 0$), we have
\begin{align}
    B_h \leq 2e^2 (cs)^{n+1}\left(9/5\right)^n n^n.
\end{align}
Altogether, using $s\le 1$
\begin{align} \label{eq:H_deriv_finalbound}
\begin{aligned}
    \norm{\partial_s^n \tilde{H}_s} t &\leq n^n (e^2 c)^{n+1} \left(1 + 2 (9/5e^2)^{n} \right) \\
    &\leq 2n^n (e^2 c)^{n+1} 
\end{aligned}
\end{align}
The final result then follows from substituting for $c$ and noting that the duration of each time step is at most $2k/3^{k-1}$ using the results of~\cite{wiebe2011simulating}. 
\end{proof}

\section{Proof of~\Cref{thm:PEMain}}
\label{app:PE_proof}
In order to use the result of Lemma~\ref{lem:Heffderiv} in the proof of Theorem~\ref{thm:PEMain}, we need to re-express the derivatives of the eigenvalues in terms of the effective Hamiltonian $\tilde{H}_t$ (we use $t$ rather than $s$ due to the extrapolation parameter taking the role of time).  The following lemma provides such a bound on the derivatives under the assumption that the Hamiltonian is gapped.
\begin{lemma}\label{lem:Lambda}
Let $\tilde{H}_t:\mathbb{R}\mapsto \mathbb{C}^{2^n\times 2^n}$ be a Hamiltonian on $n$ qubits with $p$ continuous derivatives. Suppose  $\tilde{H}_t$ has a minimum eigenvalue gap
$$
\gamma(t):= \min_{k\ne j}|\lambda_k(t) - \lambda_j(t)|
$$ 
where $\lambda_j(t)$ and $\lambda_k(t)$ are eigenvalues of $\tilde{H}_t$. Define
$$
\Lambda_\gamma:= \max_p\sup_t\left(\|\partial_t^p \tilde{H}_t\|^{1/(p+1)},\left( \frac{\|\partial_t^p \tilde{H}_t\|}{\gamma(t)} \right)^{1/p} \right),
$$
which has dimensions of $1/t$. Then for all eigenvalues $\lambda_k(t)$
$$
|\partial_t^p \lambda_k(t)| \le 8^{p-1}(p-1)!\Lambda_\gamma^{p+1}.
$$
\end{lemma}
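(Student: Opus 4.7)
The plan is to induct on $p$, using Rayleigh--Schr\"odinger perturbation theory to express $\partial_t^{\,p}\lambda_k(t)$ in terms of derivatives of $\tilde{H}_t$ and of the associated eigenvector $|k(t)\rangle$, then to bound each factor by appealing to the two pieces in the definition of $\Lambda_\gamma$. The base case $p=1$ is the Hellmann--Feynman identity $\partial_t \lambda_k = \langle k|\partial_t \tilde{H}_t|k\rangle$, giving $|\partial_t \lambda_k|\le\|\partial_t \tilde{H}_t\|\le \Lambda_\gamma^{\,2}$, which matches the stated bound with $8^{0}\cdot 0! = 1$.

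For the inductive step, I would apply the Leibniz rule $p-1$ times to the Hellmann--Feynman formula, working in the parallel-transport gauge $\langle k|\partial_t k\rangle = 0$ (available because the gap assumption guarantees nondegeneracy), to obtain
\[
\partial_t^{\,p}\lambda_k \;=\; \sum_{a+b+c=p-1}\binom{p-1}{a,b,c}\bigl\langle \partial_t^{\,a} k\,\big|\,\partial_t^{\,b+1}\tilde{H}_t\,\big|\,\partial_t^{\,c} k\bigr\rangle.
\]
The middle factor is controlled by $\|\partial_t^{\,b+1}\tilde{H}_t\|\le \Lambda_\gamma^{\,b+2}$ from the first bound encoded in $\Lambda_\gamma$. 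For the flanking eigenvector derivatives, I would differentiate the first-order formula $\partial_t |k\rangle = \sum_{j\ne k}\frac{\langle j|\partial_t \tilde{H}_t|k\rangle}{\lambda_k-\lambda_j}|j\rangle$ recursively. Each differentiation either raises the order of the Hamiltonian derivative in the numerator, produces a $\partial_t \lambda$ in a denominator (bounded via the outer induction hypothesis together with a $1/\gamma$), or generates another layer of eigenvector derivative. Using the gap-scaled bound $\|\partial_t^{\,q}\tilde{H}_t\|\le \gamma(t)\Lambda_\gamma^{\,q}$ that is also implicit in the definition of $\Lambda_\gamma$, every appearance of $1/\gamma$ is paired with a compensating $\gamma$ from the corresponding Hamiltonian-norm factor, so the result collapses to an estimate $\|\partial_t^{\,n}|k\rangle\|\le a_n \Lambda_\gamma^{\,n}$ for a purely combinatorial sequence $a_n$ depending only on $n$.

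The main obstacle will be controlling the combinatorics so that the constants assemble into exactly $8^{\,p-1}(p-1)!$. The recursion for $a_n$ generated by differentiating the first-order formula has a Fa\`a di Bruno flavor: $a_n$ is dominated by sums over compositions of $n-1$, weighted by multinomial coefficients and products of lower $a_j$. I would prove by a secondary induction that the ansatz $a_n \le C\cdot 8^{\,n-1}(n-1)!$ closes under this recursion for a small absolute constant $C$, or equivalently majorize the exponential generating function $\sum_n a_n z^n/n!$ via its fixed-point equation. Substituting back into the Leibniz sum above yields
\[
|\partial_t^{\,p}\lambda_k|\le \Lambda_\gamma^{\,p+1}\sum_{a+b+c=p-1}\binom{p-1}{a,b,c}a_a\,a_c,
\]
and the remaining task is to majorize this sum by $8^{\,p-1}(p-1)!$. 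The factorial arises because the multinomial weight $(p-1)!/(a!\,b!\,c!)$ balances against the $(a-1)!(c-1)!$ hidden inside $a_a a_c$, while the geometric factor $8$ absorbs the slack from the three branches of the eigenvector recursion. I expect this combinatorial majorization, rather than the perturbation-theory setup itself, to be the step where the specific constant $8$ is pinned down.
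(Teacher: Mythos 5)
Your proposal follows the same general route as the paper -- Rayleigh--Schr\"odinger perturbation theory applied to the effective Hamiltonian, then a combinatorial count to produce the factor $8^{p-1}(p-1)!$ -- but the bookkeeping is organized differently, and that difference is where the proposal runs into trouble.

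The paper does \emph{not} bound eigenvector derivatives $\|\partial_t^n\ket{k}\|$ as intermediate objects. Instead it differentiates Hellmann--Feynman repeatedly and fully expands the result into a sum of scalar terms, each of the form $\bra{x_1}\partial_t^{y_1}\tilde{H}_t\ket{x_2}\prod_{\mu\geq 2}\bra{x_{2\mu-1}}\partial_t^{y_\mu}\tilde{H}_t\ket{x_{2\mu}}/(\lambda_{x_{2\mu}}-\lambda_{x_{2\mu-1}})$. Because there are always exactly $\ell$ Hamiltonian factors and $\ell-1$ inverse gaps with $\sum_\mu y_\mu = q$, each term is bounded by $\Lambda_\gamma^{q+1}$ after pairing each $1/\gamma$ with one $\|\partial_t^{y_\mu}\tilde{H}_t\|$ and leaving exactly one unpaired. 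The combinatorial count is then done by tracking two scalar quantities on the expanded expression: a \emph{degree} (number of differentiable factors, which rises by at most $4$ per derivative since differentiating an eigenvector or an inverse gap introduces at most four new factors) and a \emph{number of terms} (which satisfies $N_{\mathrm{terms}}(\partial_t B) \leq 2\,\mathrm{deg}(B)\,N_{\mathrm{terms}}(B)$). The recursion $N_{\mathrm{terms}}(\partial_t^{q+1}\lambda)\leq 2(4q)\cdot 8^{q-1}(q-1)! = 8^q q!$ then delivers the constant directly; there is no separate inner recursion to solve.

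There are two concrete gaps in your version. First, when a derivative hits $1/(\lambda_k-\lambda_j)$ you propose to bound the resulting $\partial_t\lambda$'s via ``the outer induction hypothesis together with a $1/\gamma$.'' But the induction hypothesis gives $|\partial_t^q\lambda_k|\lesssim \Lambda_\gamma^{q+1}$, not $\gamma\Lambda_\gamma^q$, so the extra $\gamma^{-1}$ from the squared denominator is \emph{not} compensated. The only way the $\gamma$'s cancel is to re-expand $\partial_t\lambda_k = \bra{k}\partial_t\tilde{H}_t\ket{k}$ (and likewise for higher derivatives) so that each $1/\gamma$ meets a bare $\|\partial_t^{y}\tilde{H}_t\|$, and once you do that you are no longer bounding $\partial_t^q\lambda$ as a black box -- you are performing the paper's full expansion. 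The same re-expansion is needed to justify $\|\partial_t^n\ket{k}\|\leq a_n\Lambda_\gamma^n$: the Bessel inequality that gives $\|\partial_t\ket{k}\|\leq\|\partial_t\tilde{H}_t\|/\gamma$ does not propagate mechanically to higher $n$, and a genuine inductive argument on the expanded form is required. Second, the combinatorial majorization of the Leibniz sum $\sum_{a+b+c=p-1}\binom{p-1}{a,b,c}a_a a_c$ and the closure of the recursion for $a_n$ are both left as conjectures; the paper's degree-and-term count sidesteps this sub-recursion entirely, which is precisely what makes the constant $8$ fall out cleanly. Your intuition that the constant lives in the combinatorics is right, but the proposed route to pin it down as stated would not close.
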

\begin{proof}
The essence of this proof is to examine the derivatives of $\lambda_k(t)$ by expressing them in terms of derivatives of $\tilde{H}_t$ via perturbation theory. Using the bound~\Cref{lem:Heffderiv} on $\tilde{H}_t$ lets us bound the eigenvalues in turn. Let $\ket{k(t)}$ be the eigenvector for eigenvalue $\lambda_k(t)$ and assume that the eigenvectors' phases are chosen such that $\braket{k(t)}{\dot{k}(t)}=0$ for all $t$. Such a choice is always possible.
Standard results from perturbation theory~\cite{horn2012matrix} show  that
\begin{align} \label{eq:eigenderiv}
\begin{aligned}
    \partial_t \lambda_k(t) &= \bra{k(t)} (\partial_t \tilde{H}_t) \ket{k(t)} \\
    \partial_t \ket{k(t)} &= \sum_{j\ne k} \ket{j(t)} \frac{ \bra{j(t)} \partial_t \tilde{H}_t \ket{k(t)}}{\lambda_k(t)-\lambda_j(t)}.
\end{aligned}
\end{align}
Further, taking the Euclidean norm of the 2nd equation, 
\begin{equation} \label{eq:jeffbd}
    \|\partial_t \ket{k(t)}\|^2 \le \sum_{j\ne k} \left|\frac{ \bra{j(t)} \partial_t \tilde{H}_t \ket{k(t)}}{\lambda_k(t)-\lambda_j(t)}\right|^2\le \frac{\|\partial_t \tilde{H}_t\|^2}{\gamma^2(t)}.
\end{equation}
where $\gamma(t)$ is the minimum eigenvalue gap of the effective Hamiltonian at time $t$.

We now consider second- and higher-order derivatives of $\lambda_k(t)$. By repeatedly taking derivatives of $\lambda_k(t)$ in the first line of~\eqref{eq:eigenderiv}, we obtain an expression of the general form
\begin{equation}
    \partial_t^q \lambda_k(t) = \sum_{\ell=1}^q\sum_{x,y} c_{\vec{x},\vec{y},\ell}  \bra{x_{1}(t)} \partial_t^{y_\mu} \tilde{H}_t \ket{x_{2}(t)}\prod_{\mu=2}^\ell\frac{\bra{x_{2\mu-1}(t)} \partial_t^{y_\mu} \tilde{H}_t \ket{x_{2\mu}(t)}}{(\lambda_{x_{2\mu}}(t)-\lambda_{x_{2\mu-1}}(t))}.
\end{equation}
Note that there is a pattern to this series.  If there are $m$ $H^{(p)}$ (for $p\ge 0$) present in the expansion there must be $m-1$ powers of the gaps present in the terms.  Before going into more detail let us define the ``degree'' of a term to be the number of differentiable factors (eigenvectors, Hamiltonians or inverse gaps) present in a product.  For example $\bra{x(t)}\partial_t \tilde{H}_t \ket{x(t)}$ is degree $3$.

We have that if $A$ is a term of degree ${\mathrm{deg}}(A)$.  Then it follows from~\eqref{eq:eigenderiv} that 
\begin{equation}
    {\mathrm{deg}}(\partial_t A) \le {\mathrm{deg}}(A)+4,
\end{equation}
since from the product rule the derivative of the product of the factors is the sum of the distributed sum of the derivatives of the factors and that if $\ket{\lambda_k}$ or $(\lambda_j -\lambda_k)^{-1}$ are differentiated then the degree increases by at most $4$.  While the derivative of the effective Hamiltonian does not increase the degree, the increase in the degree is still at most $4$ in this case thus in the worst case scenario the degree is increased by $4$.  As the degree ${\mathrm{deg}}(\partial_t \lambda_k(t)) = 3\le 4$, it is straight forward to see that 
\begin{equation}
    {\mathrm{deg}}(\partial_t^q \lambda_k(t)) = 4q\label{eq:degeq}
\end{equation}

Next, let us assume that $B$ is a sum of $N_{\mathrm{terms}}(B)$ products of the above factors.  We then have from~\eqref{eq:eigenderiv} that
\begin{equation}
    N_{\mathrm{terms}}(\partial_t B) \le 2N_{\mathrm{terms}}(B) {\mathrm{deg}}(B).\label{eq:ntermseq}
\end{equation}
Next we will show that the number of terms present in $\partial_t^q \lambda_k(t)$ obeys 
\begin{equation}
    N_{\mathrm{terms}}(\partial_t^q \lambda_k(t)) \le 8^{q-1}(q-1)!.
\end{equation}
We prove this by induction.  The number of terms present when $q=1$ is $1$.  This demonstrates the base case.  Assume that for some value  $q'$ that the induction hypothesis holds.  Then from~\eqref{eq:degeq} and~\eqref{eq:ntermseq} that
\begin{equation}
    N_{\mathrm{terms}}(\partial_t^{q'+1} \lambda_k(t))\le 2(8^{q'-1}(q'-1)!)(4q') = 8^{q'} q'!
\end{equation}
which proves the relation by induction.  

We then see that if any factor is differentiated $\nu$ times then its norm is bounded above by $\Lambda_\gamma^{\nu+1}$ where
\begin{equation}\label{eq:lambdaDef}
    \Lambda_\gamma:= \max_p\left(\max\left\{\|\partial_t^p \tilde{H}_t\|^{1/(p+1)},\left( \frac{\|\partial_t^p \tilde{H}_t\|}{\gamma(t)} \right)^{1/p}\right\} \right).
\end{equation}
We therefore have that since the maximum value of the derivative is the number of terms multiplied by the maximum norm of the product of all the factors that
\begin{equation}
    |\partial_t^q \lambda_k(t)| \le 8^{q-1} (q-1)! \Lambda_\gamma^{q+1}.
\end{equation}
\end{proof}

\begin{proof}[Proof of Theorem~\ref{thm:PEMain}]
First, from the Baker Campbell Hausdorff formula, 
$$
S_{2k}(t) = \exp\left(-iHt+ O(t^{2k+1})\right)
$$ 
and thus $\log(S_{2k}(t))/t \in O(1)$ as $t\rightarrow 0$. Thus, an eigenvalue $E_{\mathrm{eff}}(t)$ exists for $\tilde{H}_t$ for all $t>0$, demonstrating the first claim.

Next, since $\tilde{H}_t$ has $p$ continuous derivatives, $\partial_t^p \lambda_k(t)$ exists and is bounded above by $8^{p-1} (p-1)! \Lambda_\gamma ^{p+1}$ because
\begin{equation}
 \partial_t \tilde{H}_t = \frac{1}{t}\partial_s \tilde{H}_{(st)}.
\end{equation}
The value of $\Lambda_\gamma$ can then be bounded, using Lemma~\ref{lem:Heffderiv}, by
\begin{align}
    \Lambda_\gamma &\le \left(2t^{-n-1}n^n(e^2k(5/3)^{k-1}m\max_{l\in [1,m]} \|H_l\| t)^{n+1}\right)^{1/(n+1)}   \nonumber\\
    &\qquad+\max_p\left(\frac{2t^{-p-1}p^p(e^2k(5/3)^{k-1}m\max_{l\in [1,m]} \|H_l\| t)^{p+1}}{\gamma} \right)^{1/p}\nonumber\\
    &\le 4nke^2(5/3)^{k-1}  m\max_{l\in [1,m]} \|H_l\|)\left(1+ \max_p\left( \frac{p}{n}\right)\left(\frac{e^2k(5/3)^{k-1}m\max_{l\in [1,m]} \|H_l\|}{\gamma}\right)^{1/p} \right)\nonumber\\
    &:=4nke^2(5/3)^{k-1}  m\max_{l\in [1,m]} \|H_l\|)(1+\Gamma).
\end{align}
This implies that
\begin{equation}
    |\partial_t^q \lambda_k(t)| \le 8^{q-1} (q-1)! \Lambda_\gamma^{q+1} \le  \frac{q!q^{q}}{64}\left(32ke^2(5/3)^{k-1}  m\max_{l\in [1,m]} \|H_l\|)(1+\Gamma)\right)^{q+1}.
\end{equation}
For all $t\in [-a,a]$, we have from Lemma~\ref{lem:Cheb_error} that the interpolation error satisfies the following upper bound.
\begin{equation}
    \lvert\lambda_k(0) - P(0)\rvert \le \frac{n!}{64}\left(32ke^2(5/3)^{k-1}  m\max_{l\in [1,m]} \|H_l\|)(1+\Gamma) \right)^{n+1} \left(\frac{a}{2} \right)^n
\end{equation}
We wish for this error to be at most $\epsilon_\mathrm{int}$. Using the inequality above, we can solve for a value of $a$ such that the error is within $\epsilon_\mathrm{int}$ for all $t\in [-a,a]$. Doing so, we find the following value of $a$. 
\begin{equation} \label{eq:a_val}
    a= \frac{2(64\epsilon_\mathrm{int})^{1/n}}{(n!)^{1/n}}\left(\frac{1}{32ke^2(5/3)^{k-1}  m\max_{l\in [1,m]} \|H_l\|)(1+\Gamma)} \right)^{1+1/n}
\end{equation}
We now count the number of exponentials needed for the degree $n$ polynomial interpolation (where $n$ is even) using~\eqref{eq:a_val}. Using the fact that the error for time $-t$ is equal to the error for time $t$, that the $j^\mathrm{th}$ Chebyshev node $s_j$ scales like $a/j$, and that the cost of phase estimation within error $\epsilon_\mathrm{PE}$ with failure probability $\delta_\mathrm{PE}$ scales as $O(\log(1/\delta_\mathrm{PE})/\epsilon_\mathrm{PE})$, the number of exponentials scales as
\begin{align} \label{eq:Nexp_inter}
    N_\mathrm{exp}&\in 
    O\left(\sum_{j=1}^{n/2}\frac{m5^{k-1}}{ja}\right) \nonumber\\
    &= O\left(\frac{n \log(n)m5^{k-1}\log(1/\delta_\mathrm{PE})}{a\epsilon_\mathrm{PE}}\right)\nonumber\\
    &=O\left(\frac{n\log(n) \log(1/\delta_\mathrm{PE}) (n!)^{1/n}m^{2+1/n} (5/3)^{2k+k/n}(\max \|H_i\| (1+\Gamma) k )^{1+1/n}}{\epsilon_\mathrm{int}^{1/n}\epsilon_\mathrm{PE}} \right)\nonumber\\
    &=\widetilde{O}\left(\frac{n^2 m^{2+1/n} \log(1/\delta_\mathrm{PE}) 5^{k}(5/3)^{k+k/n}(\max \|H_i\| (1+\Gamma) k )^{1+1/n}}{\epsilon_\mathrm{int}^{1/n}\epsilon_\mathrm{PE}} \right).
\end{align}
There are two competing tendencies in the cost. The number of operator exponentials increases polynomially with $n$, but the scaling with the error tolerance improves exponentially with $n$.  Setting these two equal to each other to estimate the optimal scaling yields
\begin{align} \label{eq:match_n}
    n^2 = \left(\frac{mk(5/3)^k\max_i \|H_i\|(1+\Gamma)}{\epsilon_{\rm int}}\right)^{1/n}.
\end{align}
Under the assumption that $\epsilon_{\rm int} \le 5/3$, the solution is of the form
\begin{align}
\begin{aligned}
    n &= \frac{\ln\left(mk(5/3)^k\max_i \|H_i\|(1+\Gamma)\epsilon_\mathrm{int}^{-1} \right)}{2{\rm LambertW}\left(\ln\left( mk(5/3)^k\max_i \|H_i\|(1+\Gamma)\epsilon_\mathrm{int}^{-1}\right)/2 \right)} \\
    &\in O\left( \log\left(mk(5/3)^k\max_i \|H_i\|(1+\Gamma)\epsilon_\mathrm{int}^{-1} \right)\right).
\end{aligned}
\end{align}
Finally, we have from Theorem~\ref{thm:extrapBd} that the error in the interpolated eigenvalue is within $\epsilon$ provided that the error in the data satisfies
\begin{equation}
    \left(\frac{2}{\pi}\log(n+1) + 1\right) \max_i|\lambda_k(s_i) - \tilde{\lambda}_k(s_i)| \le \left(\frac{2}{\pi}\log(n+1) + 1\right)\epsilon_\mathrm{PE}\le \epsilon. 
\end{equation}
Taking $\epsilon_\mathrm{PE}$ to satisfy this, the number of exponentials scales via equation~\eqref{eq:Nexp_inter} as
\begin{align}
    N_{\exp}\in \widetilde{O}\left(\frac{m^2 k (25/3)^k \max\|H_i\| (1+\Gamma)}{\epsilon} \right),
\end{align}
where we've also implemented~\eqref{eq:match_n}. By the union bound, the total probability of failure is at most $1/3$.

\end{proof}

\section{Proof of \Cref{thm:spectral_gaussian_error} \label{app:gaussian_sample_error}}
Per the statement of the theorem, we would like to estimate the error between the Discrete Fourier Transform (DFT) and corresponding samples of the Fourier transform of the continuous Gaussian distribution. We split this analysis in three parts: calculating time-domain truncation errors, estimating frequency-domain truncation errors (non-zero sampling rate), and finally calculate the error coming from renormalization.~\Cref{fig:error_schematic} gives an schematic representation of the analysis, including the resulting big-$O$ bounds.

\begin{figure}[H]
\centering
\resizebox{\columnwidth}{!}{
    \begin{tikzpicture}[->,>=stealth',shorten >=1pt,auto, node distance=6cm,thick, main node/.style={circle,draw,font=\Large\bfseries, minimum size=7em}]
    
      \node[main node] (1) {$\frac{X_{1/T}\left(\frac{k}{2^qT}\right)}{\sqrt{2^{q}\N(\sigma,T,m)}}$};
      \node[main node] (4)  [right=4.5cm of 1] {$\frac{X\left(\frac{k}{2^qT}\right)}{T\sqrt{2^q\N(\sigma,T,m)}}$};  
      \node[main node] (2) [below right=1.5cm and 0.3cm of 4] {$\frac{X\left(\frac{k}{2^qT}\right)}{\sqrt{\N(\sigma_f,F,q)}}$};
      \node[main node] (3) [below left=1.5cm and 0.3cm of 1] {$X[k]$};

      \path
        (1) edge node[above,font=\large] {$\epsilon_{\mathrm{alias}}\in O\left(\frac{2^{m/4}}{2^{q/2}e^{\Omega(2^m)}}\right)$} (4)
        (3) edge node[above left,font=\large] {$\epsilon_{\mathrm{trunc}}\in O\left(\frac{2^{m/4}}{2^{q/2}e^{\Omega(2^m)}}\right)$} (1)
        (4) edge node[above right,font=\large] {$\epsilon_{\rm renorm}\in O\left(\frac{2^{(m-q)/2}}{\exp\left(\Omega(2^m)\right)}\right)$} (2)
        (3) edge node[below,font=\large] {$\epsilon_{\rm total}\leq \epsilon_{\rm trunc} + \epsilon_{\rm alias} + \epsilon_{\rm renorm}\in O\left(\frac{\sqrt{2^{m}}}{2^{q/2} e^{\Omega(2^m)}}\right)$} (2);      
    \end{tikzpicture}
}
\caption{Schematic representation of the error estimation between the DFT of $x[n]$, $X[k]$, and the re-normalized samples of the Fourier transform $X\left(\frac{k}{2^q T}\right)$. At the edges of the diagram, we have noted the additive error between the expressions at the nodes.}
\label{fig:error_schematic}
\end{figure}
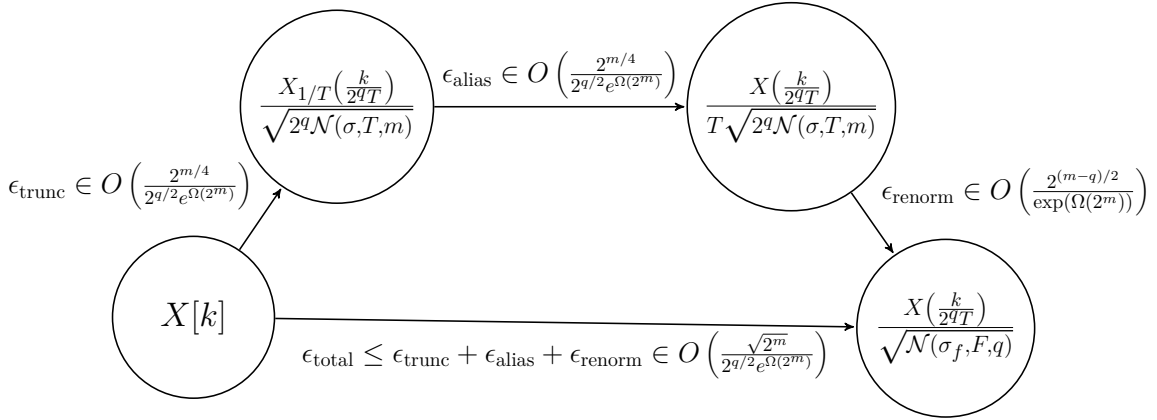

For reference, here are some useful definitions which we will use in the proof to follow.
\begin{align}
    p(w;\sigma) &:= \frac{1}{\sigma \sqrt{2\pi}}e^{-w^2/(2\sigma^2)}\\
    x(w) &:= \sqrt{p(w;\sigma)}\\
    X(f) &:= \mathcal{F}\{x\}(f) = \int_{\mathbb{R}} x(t)  e^{-2\pi i t f} dt \\
    X_{1/T}\left(\frac{k}{2^qT}\right) &:= \sum_{n=-\infty}^\infty x(nT)\cdot e^{-i 2\pi k n/2^q} \quad \quad k = -2^{q-1}, \dots, 2^{q-1}-1 \\
    \N(\sigma,T,m) &:= \sum^{2^{m-1}-1}_{n=-(2^{m-1}-1)} \lvert x(nT) \rvert^2 \\
    x[n] &:= \begin{cases}
              \frac{1}{\N(\sigma,T,m)^{1/2}}x(n T) & n = -2^{m-1}-1, \dots, 2^{m-1}-1 \\
              0  & n = -2^{m-1}
            \end{cases} \label{eq:normalized_xn}\\
    X[k] &:= \frac1{2^{q/2}}\sum^{2^{m-1}-1}_{n=-(2^{m-1}-1)} x[n]\cdot e^{-i 2\pi kn/2^q} \label{eq:DFT}
\end{align}

It will also be convenient to define the Fourier dual sampling rates and widths, respectively.
\begin{align}
    F := (2^q T)^{-1} \label{eq:F_def}\\
    \sigma_f := (4\pi\sigma)^{-1} \label{eq:sigmaf_def}
\end{align}

\subsection{Truncation Error}

The error between the scaled Discrete-time Fourier Transform (DTFT)
\begin{align}
    \frac{X_{1/T}\left(\frac{k}{2^qT}\right)}{\sqrt{2^{q}\N(\sigma,T,m)}} \label{eq:truncb}
\end{align}
and the untruncated DFT $X[k]$ of \Cref{eq:DFT} can be estimated to be

\begin{align} 
    \epsilon_\mathrm{trunc} &=\frac{1}{2^{q/2}\N(\sigma,T,m)^{1/2}}\left(\sum^{\infty}_{n=2^{m-1}} \sqrt{p(nT;\sigma)} +\sum_{n=-\infty}^{-2^{m-1}}\sqrt{p(nT;\sigma)} \right)\cr
    &= \frac{2\tilde{\epsilon}_\mathrm{trunc}}{2^{q/2}\N(\sigma,T,m)^{1/2}}, \label{eq:first_eps_trunc}
\end{align}
where
\begin{align}
    \tilde{\epsilon}_\mathrm{trunc} := \sum_{n = 2^{m-1}}^\infty \sqrt{p(nT;\sigma)}.
\end{align}
We can upper bound $\tilde{\epsilon}_\mathrm{trunc}$ using a right-Riemann sum approximation of the error function,
\begin{align*}
    \sum^{\infty}_{n=2^{m}/2} \sqrt{p(nT;\sigma)} &\leq \sqrt[4]{\frac{\pi }{2}} \frac{\sqrt{\sigma }}{T} \erfc\left(\frac{T(2^{m}/2-1)}{2 \sigma }\right)
\end{align*}
and together with the simple bound~\cite{chiani2003new}
\begin{align}
    \erfc{x} \leq e^{-x^2} \qquad (x > 0)
\end{align}
we have
\begin{align} \label{eq:tilde_trunc_final}
    \tilde{\epsilon}_\mathrm{trunc}\leq \sqrt[4]{\frac{\pi }{2}} \frac{\sqrt{\sigma }}{T} \exp\left\{-\left(\frac{T(2^{m-1}-1)}{2 \sigma }\right)^2\right\}.
\end{align}
Finally, we can express the truncation error $\epsilon_\mathrm{trunc}$, using~\eqref{eq:first_eps_trunc} and the fact that $\N(\sigma,T,m) \in O(1/T)$.
\begin{align}
    \epsilon_{\mathrm{trunc}}\le \frac{2^{3/4} \pi^{1/4} \frac{\sqrt{\sigma }}{T} e^{-\left(\frac{T(2^{m-1}-1)}{2 \sigma }\right)^2}}{2^{q/2}\N(\sigma,T,m)^{1/2}} \in O\left( \sqrt{\frac{\sigma}{2^q T}} e^{-\Omega(2^m/(\sigma/T))} \right)
\end{align}
along

\subsection{Aliasing error}

Now that we have estimated the error on the DTFT by time-domain truncation, we would also like to estimate the aliasing error, that is, the difference between the DTFT and the Fourier transform over a period $1/T$. This is coming from having a finite sampling rate of a function that is not bounded in the frequency domain. In order to estimate the aliasing error, we must look at the Fourier transform of a Gaussian distribution, which is another Gaussian.
\begin{align}
X(f) 
    &=\sqrt{p(f;\sigma_f)}
\end{align}
Now, the DTFT can be expressed in terms of a Fourier transform in the following way
\begin{align}
X_{1/T}(f)= \frac{1}{T} \sum^{\infty}_{k=-\infty} X ( f - k/T ).
\end{align}
Therefore, we define the aliasing error as
\begin{align}
\tilde{\epsilon}_\mathrm{alias} := X_{1/T}(f) - \frac{1}{T}X(f)= \frac{1}{T}\sum^{-1}_{k=-\infty} X ( f - k/T )+\frac{1}{T}\sum^{\infty}_{k=1} X ( f - k/T ).
\end{align}
The aliasing error has a critical point at $f=0$ and its second derivative is strictly positive throughout $f=[-1/2T,1/2T]$ and thus we know that the error is largest at the boundaries of the DTFT. That is, at $f=-1/2T,1/2T$. The error at these two points is expected to be the same so we just choose $f=1/2T$ in order to bound the error from above. That means

\begin{align}
\tilde{\epsilon}_{\mathrm{alias}}&\leq\frac{1}{T}\sum^{-1}_{k=-\infty} X ( 1/2T - k/T )+\frac{1}{T}\sum^{\infty}_{k=1} X (1/2T - k/T ) \cr
&\leq \frac{1}{T}\sum^{\infty}_{k=-\infty} X ( 1/2T - k/T ) \cr
&= \pi^{1/4}2^{3/4}\sqrt{\sigma}\frac{1}{T}  \theta_{2}(e^{-(2 \sigma \pi/T)^2}) \cr
&= \pi^{1/4}2^{3/4}\sqrt{\sigma}\frac{2}{T}  e^{-( \pi \sigma/ T)^2} \sum_{n=0}^\infty (e^{-4(2\pi \sigma/ T)^2})^{n(n+1)}, 
\end{align}
where $\theta_2(q) \equiv \theta_2(0,q)$ is the 2nd Jacobi theta function~\cite{borwein1987pi}. For $e^{-( 2\pi \sigma/ T)^2}\leq 1$,
\begin{align}
    \tilde{\epsilon}_{\mathrm{alias}}&\leq \pi^{1/4}2^{3/4}\sqrt{\sigma}\frac{1}{T}\cdot \left( 2  e^{-\pi^2( \sigma/ T)^2} \sum_{n=0}^{\infty} (e^{-4\pi^2( \sigma/ T)^2})^{n} \right ) \cr
    &= \pi^{1/4}2^{3/4}\sqrt{\sigma}\frac{1}{T}\cdot \left( 2  e^{-\pi^2( \sigma/ T)^2} \frac{1}{1-e^{-4\pi^2( \sigma/ T)^2}} \right )
\end{align}
For $e^{-4\pi^2( \sigma/ T)^2} \leq 1/2$
\begin{align}
    \tilde{\epsilon}_{\mathrm{alias}}&\leq \pi^{1/4}2^{3/4}\sqrt{\sigma}\frac{1}{T}\cdot \left( 2  e^{-\pi^2( \sigma/ T)^2} \frac{1}{1-e^{-4\pi^2( \sigma/ T)^2}} \right ) \cr
    & \leq 4\pi^{1/4}2^{3/4}\sqrt{\sigma}\frac{1}{T}\cdot e^{-\pi^2( \sigma/ T)^2}\label{eq:aliasbd}
\end{align}
Therefore, the error $\epsilon_\mathrm{alias}$ between 
\begin{align}
\frac{X_{1/T}\left(k /(2^q T)\right)}{2^{q/2}\N(\sigma,T,m)^{1/2}}
\end{align}
and 
\begin{align}\label{eq:unormalized_ft}
&\frac{X\left(k/(2^q T)\right)}{T2^{q/2}\N(\sigma,T,m)^{1/2}}
\end{align}
can be estimated using~\eqref{eq:aliasbd} to be
\begin{align}
\epsilon_{\mathrm{alias}}=\frac{\tilde{\epsilon}_{\mathrm{alias}}}{2^{q/2}\N(\sigma,T,m)^{1/2}} \le \left( \frac{4\pi^{1/4}2^{3/4}\sqrt{\sigma}\frac{1}{T}\cdot e^{-\pi^2( \sigma/ T)^2}}{2^{q/2}\N(\sigma,T,m)^{1/2}}\right).
\end{align}
Thus, with a choice of 
\begin{align}
\frac{\sigma}{T}\sim \sqrt{2^m},
\end{align}
%
%
both sources of error can be bounded with
\begin{align}
\epsilon_{\mathrm{trunc}}=O\left(\frac{2^{m/4}}{2^{q/2}e^{\Omega(2^m)}}\right)\cr
\epsilon_{\mathrm{alias}}=O\left(\frac{2^{m/4}}{2^{q/2}e^{\Omega(2^m)}}\right).
\end{align}
This means that after the DFT, we know that we will have in the register the amplitude
\begin{align}
\sqrt{\frac{F}{T\N(\sigma,T,m)}}X\left(kF\right)+\epsilon_\mathrm{DFT},
\end{align}
where $\epsilon_\mathrm{DFT}$ is the total error from both truncation as well as aliasing.  We then bound the error by summing~\eqref{eq:aliasbd} and~\eqref{eq:truncb} to find
\begin{align}\label{eq:DFTbd}
    \epsilon_\mathrm{DFT} &\leq \frac{(8\pi )^{1/4}\sqrt{\sigma}}{T\sqrt{2^q \mathcal{N}(\sigma,T,m)}} e^{-\left(\frac{T(2^{m-1}-1)}{2 \sigma }\right)^2}   + 4 \frac{(8\pi )^{1/4}\sqrt{\sigma}}{T\sqrt{2^q \mathcal{N}(\sigma,T,m)}} e^{-\pi^2( \sigma/ T)^2} \nonumber\\ 
    &\le  2 \max\left(\epsilon_{\mathrm{alias}}, \epsilon_{\mathrm{trunc}}\right) 
    \in O\left(\frac{2^{m/4}}{2^{q/2}e^{\Omega(2^m)}}\right).
\end{align}

\subsection{Renormalization error and ultimate additive error}

We know that the signal after DFT has to be normalized, we also know that it is locally $\epsilon_\mathrm{DFT}$ close to ~\Cref{eq:unormalized_ft}, which is not necessarily normalized. Therefore, for the norm of samples of ~\Cref{eq:unormalized_ft} we obtain the following upper bound on normalization ratios

\begin{align}\label{eq:renorm_error}
\frac{F}{T\N(\sigma,T,m)}\sum_k \left|X\left(kF\right)\right|^2 &= \frac{F\N(\sigma_f,F,q)}{T\N(\sigma,T,m)}\cr
&=  1 - 2\left(\frac{F}{T \N(\sigma,T,m)}\right)^{1/2}\Re\left(\vec{\epsilon}_{DFT}\cdot \vec{X}\right) +  \| \vec{\epsilon}_\mathrm{DFT} \|^2 \cr 
& \leq 1 +  \| \vec{\epsilon}_\mathrm{DFT} \|^2 \cr 
& \leq 1 +  \| \vec{\epsilon}_\mathrm{DFT} \| \cr 
& \leq 1  + \epsilon_\mathrm{DFT} 2^{q/2},
\end{align}
where we have used the assumption $\|\vec{\epsilon}_{DFT}\|\leq1$. Similarly, the lower bound is
\begin{align}
\begin{aligned}
    \frac{F}{T\N(\sigma,T,m)}\sum_k \left|X\left(\frac{k}{2^qT}\right)\right|^2 &=\frac{F\N(\sigma_f,F,q)}{T\N(\sigma,T,m)}\\
    &\geq  1 - 2\left(\frac{F}{T \N(\sigma,T,m)}\right)^{1/2}\Re\left(\|\vec{\epsilon}_\mathrm{DFT}\| \|\vec{X}\|\right) \\
    &\geq 1 - 2 \epsilon_\mathrm{DFT} 2^{q/2} \left(\frac{F\N(\sigma_f,F,q)}{T \N(\sigma,T,m)}\right)^{1/2} \\
    &\geq 1 - 2 \epsilon_\mathrm{DFT} 2^{q/2} \left(1 + \|\vec{\epsilon}_\mathrm{DFT}\|^2 \right) \\
    &\geq 1 - 4 \epsilon_\mathrm{DFT} 2^{q/2}.
\end{aligned}
\end{align}
Thus, if we take the full inequality
\begin{align}
    &1 - 4 \epsilon_\mathrm{DFT} 2^{q/2} \leq \left(\frac{F\N(\sigma_f,F,q)}{T\N(\sigma,T,m)}\right)\leq 1 + \epsilon_\mathrm{DFT}2^{q/2}
\end{align}
and take the square root, we obtain,
\begin{align}
    &1 - 2\epsilon_\mathrm{DFT} 2^{q/2} \leq \left(\frac{F\N(\sigma_f,F,q)}{T\N(\sigma,T,m)}\right)^{1/2} \leq 1 + \epsilon_\mathrm{DFT}2^{q/2}, \cr
\end{align} 
where we have used $1+\sqrt{x}\leq 1 + x$ and $1-\sqrt{x}\geq 1-x/2$ for $x\leq 1$. We now divide every side by $\N(\sigma_f,F,q)$ to obtain
\begin{align}
    &1/\N(\sigma_f,F,q)^{1/2} - 2\epsilon_\mathrm{DFT} 2^{q/2} / \N(\sigma_f,F,q) \leq \left(\frac{F}{T\N(\sigma,T,m)}\right)^{1/2} \leq\cr &1/\N(\sigma_f,F,q)^{1/2} + \epsilon_\mathrm{DFT}\sqrt{2^{q}}/\N(\sigma_f,F,q)^{1/2}. \cr
\end{align}
Subtract $1/\N(\sigma_f,F,q)$, which gives
\begin{align}
\begin{aligned}
    &- 2\epsilon_\mathrm{DFT} \sqrt{\frac{2^q }{ \N(\sigma_f,F,q) }} \leq \sqrt{\frac{F}{T\N(\sigma,T,m)}} -\sqrt{\frac{1}{\N(\sigma_f,F,q)}} \cr 
    &\leq  \epsilon_\mathrm{DFT}\sqrt{\frac{2^{q}}{\N(\sigma_f,F,q)}}.
\end{aligned}
\end{align}
We multiply these inequalities by $\max |X(k)|$
\begin{align}
    &- 2\epsilon_\mathrm{DFT} \sqrt{\frac{2^q }{ \N(\sigma_f,F,q) }}\left(\frac{1}{2\pi\sigma_f^{2} } \right)^{1/4} \cr 
    &\leq \left(\sqrt{\frac{F}{T\N(\sigma,T,m)}} -\sqrt{\frac{1}{\N(\sigma_f,F,q)}}\right) \max\left| X(k)\right| \cr
    &\leq  \epsilon_\mathrm{DFT}\sqrt{\frac{2^q }{ \N(\sigma_f,F,q) }} \left(\frac{1}{2\pi\sigma_f^{2} } \right)^{1/4}.
\end{align}
We take the absolute value to have a two-sided inequality, and use the definitions~\eqref{eq:F_def} and~\eqref{eq:sigmaf_def}.
\begin{align}
    &\left|\sqrt{\frac{F}{T\N(\sigma,T,m)}} -\sqrt{\frac{1}{\N(\sigma_f,F,q)}}\right| \max\left| X(k)\right|  \cr 
    &\leq 2\epsilon_\mathrm{DFT} \sqrt{\frac{1}{ T F\N(\sigma_f,F,q) }} \left(\frac{ (4\pi\sigma)^2 }{2\pi } \right)^{1/4} \cr
    &\leq 2 \frac{(8\pi )^{1/4}\sqrt{\sigma}}{T\sqrt{2^q \mathcal{N}(\sigma,T,m)}} \sqrt{\frac{1}{ T F\N(\sigma_f,F,q) }} \left(\frac{ (4\pi\sigma)^2 }{2\pi } \right)^{1/4} \cr 
    &\qquad\qquad\max\left( 4 \exp\left\{-\pi^2 (\sigma/T)^2\right\}, \exp\left\{-\left(\frac{2^{m-1}-1}{2 \sigma/T}\right)^2\right\} \right) \cr
    &\leq 4 \frac{\sigma}{T} \sqrt{\frac{2\pi}{ 2^q T  \mathcal{N}(\sigma,T,m) F\N(\sigma_f,F,q) }}  \cr 
    &\qquad\qquad\max\left( 4 \exp\left\{-\pi^2 (\sigma/T)^2\right\}, \exp\left\{-\left(\frac{2^{m-1}-1}{2 \sigma/T}\right)^2\right\} \right)  
\end{align}

It is worth noting that, for the purpose of estimating $\left(T\N(\sigma,T,m)\right)^{1/2}$, we can increase $2^q$ arbitrarily (or decrease $F$ arbitrarily). Through this, we see that:
\begin{align}
\lim_{F\to 0}F\N(\sigma_f,F,q)&=\lim_{F\to 0}\erf\left(\frac{ (2^{q-1}-1/2) F}{\sqrt{2}\sigma_f}\right) = \erf\left(\sqrt{2} \pi \sigma /T\right)\cr
&\geq 1-\frac{1}{\exp{\left(\sqrt{2}\pi\sigma/T\right)^2}}
\end{align}
For our choice of $\frac{\sigma}{T}$ this would mean that
\begin{align}
\left(T\N(\sigma,T,m)\right)^{1/2} = 1 - O\left(\frac{2^{m/4}}{\exp\left(\Omega(2^m)\right)}\right),
\end{align}
and 
\begin{align}
T^{1/2}\N(\sigma,T,m)^{1/2} &= F^{1/2}\N(\sigma_f,F,q)^{1/2}  \cr
&- T^{1/2}\N(\sigma,T,m)^{1/2}O\left(\frac{2^{m/4}}{\exp\left(\Omega(2^m)\right)}\right) \cr
&= F^{1/2}\N(\sigma_f,F,q)^{1/2}  - O\left(\frac{2^{m/4}}{\exp\left(\Omega(2^m)\right)}\right). 
\end{align}
Therefore, the amplitudes that we get on the ancillary register are
\begin{align}
\frac{1}{\sqrt{\N(\sigma_f,F,q)}}X\left(\frac{k}{2^q T}\right)
+ O\left(\frac{2^{(m-q)/2}}{\exp\left(\Omega(2^m)\right)}\right)+\epsilon_\mathrm{DFT}.
\end{align}

\section{Proofs of State Preparation and Ground State Energy Extrapolation Costs}
\label{app:Bernstein_proof}
Here we prove results related to the costs of state preparation and phase estimation for extrapolation of the ground state energy, from \Cref{thm:gauss_state_prep} and \Cref{thm:gauss_phase_est} respectively.

\subsection{Proof of \Cref{lem:radius_of_analiticity}\label{app:radius_of_analiticity}}
\begin{proof}

Let $\lambda_{m-1}$, $\lambda_{m}$, and $\lambda_{m+1}$ be consecutive simple eigenvalues of $H$ 
\begin{align}
    \lambda_{m-1} < \lambda_{m} < \lambda_{m+1}
\end{align} 
such that $\left|\lambda_m-\lambda_{m-1}\right| \geq \gamma_0$, and $\left|\lambda_m-\lambda_{m+1}\right| \geq \gamma_0$. With $\lambda(\tau)$ as in the lemma statement, we have $\lambda_m(0) = \lambda_m$ for all indices $m$. By \Cref{lem:BauerFike}, there exist eigenvalues $\lambda_j(\tau)$, $\lambda_k(\tau)$, and $\lambda_l(\tau)$ of $\tilde{H}_z$ such that
\begin{align}
    \max_{\tau \in \{\tau:|\tau|\leq r\}}|\lambda_{m-1}(0)-\lambda_j(\tau) | \leq \max_{\tau \in \{\tau:|\tau|\leq r\}}\| H- \tilde{H}_z\| \cr
    \max_{\tau \in \{\tau:|\tau|\leq r\}}|\lambda_m(0)- \lambda_k(\tau)| \leq \max_{\tau \in \{\tau:|\tau|\leq r\}}\| H- \tilde{H}_z\| \cr
    \max_{\tau \in \{\tau:|\tau|\leq r\}}|\lambda_{m+1}(0)- \lambda_l(\tau)| \leq \max_{\tau \in \{\tau:|\tau|\leq r\}}\| H- \tilde{H}_z\|  
\end{align}
for some $j$, $k$, and $l$. We are guaranteed that $j\neq m$, $l\neq m$ and $k=m$ if
\begin{align}
 \max_{\tau \in \{\tau:|\tau|\leq r\}}\| H- \tilde{H}_z\| \leq \gamma_0/2.
\end{align}
Assuming we have the upper bound
\begin{align}
 \| H - \tilde{H}_z \| \leq \alpha \frac{|\tau|^p}{(p+1)!} e^{\beta |\tau|}, 
\end{align}
this guarantee can be satisfied by the constraint
\begin{align}
    \gamma_0 /2 \geq \max_{\tau\in \{\tau:|\tau|\leq r\}}\left(\alpha \frac{|\tau|^p}{(p+1)!}e^{\beta |\tau|}\right) = \alpha \frac{r^p}{(p+1)!}e^{\beta r}.
\end{align}
Solving for $r$ on the last inequality gives us
\begin{align}
    r \leq \frac{p}{\beta}\mathrm{LambertW}\left(\frac{\beta}{p} \left(\frac{\gamma_0 (p+1)!}{2 \alpha }\right)^{1/p}\right).
\end{align}
If $\lambda_{m}$ is the ground state or highest state of $H$, we just repeat a similar proof except there is only one state above or below respectively.
\end{proof}

\subsection{Proof of \Cref{lem:Magnus_error}\label{app:Magnus_error}}

\begin{proof}

We take a look at the complex-time Schr\"odinger equation for $S_{p}(\tau)$.
\begin{align}
    S_p'(\tau) = \overbrace{i(H+\mathscr{E}(\tau))}^{A}S_p(\tau)
\end{align}
We now recall that we can express
\begin{align}
    \Omega(\tau) = \log S_p(\tau)
\end{align}
through the Magnus expansion for the complex-time derivative,
\begin{align}
    \Omega'(\tau) = \sum_{n=0}^\infty \frac{B_n}{n!} \mathrm{ad}^n_{\Omega}(A).
\end{align}
We then re-express $E(\tau)$ and expand $\Omega'$
\begin{align}
   E (\tau) &= \int_{\mathcal{P}}\Omega'(\tau_1)\mathrm{d} \tau_1  - \int_{\mathcal{P}} H \mathrm{d} \tau_1 \cr 
   &=\int_{\mathcal{P}}\left(\Omega'(\tau_1) - H \right)\mathrm{d} \tau_1 \cr 
   &=\int_{\mathcal{P}}\left(\mathscr{E}(\tau_1) + \sum_{n=1}^\infty \frac{B_n}{n!} \mathrm{ad}^n_{\Omega(\tau_1)}(A(\tau_1)) \right)\mathrm{d} \tau_1. 
\end{align}
Here, $\mathcal{P}$ is any path in the complex plane going from $\tau_1=0$ to $\tau_1=\tau$.  We now expand out the first application of ${\rm ad}_{\Omega}$ on $A$ and separate the $n=1$ term from the sum obtaining:
\begin{align}
   E (\tau) &=\int_{\mathcal{P}}\left(\mathscr{E}(\tau_1) + B_1 [\Omega(\tau_1),A(\tau_1)]  + \sum_{n=2}^\infty \frac{B_n}{n!} \mathrm{ad}^{n-1}_{\Omega}[\Omega(\tau_1),A(\tau_1)] \right)\mathrm{d} \tau_1.
\end{align}
Following this, we make use of the estimation lemma (`ML` inequality) to obtain
\begin{align}
    \| E(\tau) \|  &\leq \int^{S}_0 \left|\gamma_{\mathcal{P}}\right| \left(\|\mathscr{E}\| + \left|B_1\right| \left\| [\Omega,A]\right\|  + \sum_{n=2}^\infty \frac{|B_n|}{n!} \left\| \mathrm{ad}^{n-1}_{\Omega}\right\| \left\|[\Omega,A] \right\|\right)\mathrm{d} s \cr 
\end{align}
where $s$ parametrizes the path $\mathcal{P}$ and $\left| \gamma_{\mathcal{P}} \right|ds=\left| \frac{\mathrm{d}\tau_1}{\mathrm{d}s}\right|ds$ is the arc-length differential. Now, we make use of the bound $\|{\rm ad}_{\Omega}\| \leq 2 \|\Omega\|$ to obtain:
\begin{align}
    &\leq \int^{S}_0 \left|\gamma_{\mathcal{P}}\right| \left(\|\mathscr{E}\| + \left|B_1\right| \left\| [\Omega,A]\right\|  + \sum_{n=2}^\infty \frac{|B_n|}{n!} \left(2\left\|{\Omega}\right\|\right)^{n-1} \left\|[\Omega,A] \right\|\right)\mathrm{d} s \cr 
    &\leq \int^{S}_0 \left|\gamma_{\mathcal{P}}\right| \left(\|\mathscr{E}\| + \left|B_1\right| \left\| [\Omega,A]\right\|  + \frac{\left\|[\Omega,A] \right\|}{\left(2\left\| {\Omega}\right\|\right)}\sum_{n=2}^\infty \frac{|B_n|}{n!} \left(2\left\| {\Omega}\right\|\right)^{n} \right)\mathrm{d} s
\end{align}
We now make use of

\begin{align}  
 \sum_{n=2}^\infty \left|B_n\right| \frac{|t|^n}{n!} = \sum_{n=1}^\infty B^{*}_n \frac{|t|^{2n}}{(2n)!} =1 - \frac{|t|}{2}\cot{\frac{|t|}{2}} \qquad\text{for } |t| \leq 2 \pi
\end{align}
to obtain:
\begin{align}
    \left\|E(\tau)\right\|  &\leq \int^{S}_0 \left|\gamma_{\mathcal{P}}\right| \left(\|\mathscr{E}\| + \frac{1}{2}\left\| [\Omega,A]\right\|  + \frac{\left\|[\Omega,A] \right\|}{\left(2\left\| {\Omega}\right\|\right)}\left(1-\|\Omega\|\cot{\|\Omega\|} \right)  \right)\mathrm{d} s \cr 
    &\leq \int^{S}_0 \left|\gamma_{\mathcal{P}}\right| \left(\|\mathscr{E}\| + \frac{1}{2}\left\| [\Omega,A]\right\|  + \frac{1}{2}\left\|[\Omega,A] \right\|  \left(\frac{1}{\left\| {\Omega}\right\|}-\cot{\|\Omega\|} \right)  \right)\mathrm{d} s \cr 
\end{align}
We now assume that
\begin{align}
    \left|\gamma_{\mathcal{P}}\right|=\left|\frac{\mathrm{d}\tau}{\mathrm{d}s}\right| = 1,
\end{align}
and
\begin{align}
    s = |\tau_1|,
\end{align}
thus
\begin{align}
    \left\|E(\tau)\right\| &\leq \int^{|\tau|}_0  \left(\|\mathscr{E}\| + \frac{1}{2}\left\| [\Omega,A]\right\|  + \frac{1}{2}\left\|[\Omega,A] \right\|  \left(\frac{1}{\left\| {\Omega}\right\|}-\cot{\|\Omega\|} \right)  \right)\mathrm{d} s.
\end{align}
We assume 
\begin{align}
    \left(\frac{1}{\left\| {\Omega}\right\|}-\cot{\|\Omega\|} \right) \leq 1,
\end{align}
which is fulfilled provided 
\begin{align}
\| E(\tau) \|\leq 1.
\end{align}
Now,
\begin{align}\label{eq:E_calE_commOA}
    \left\|E(\tau)\right\| &\leq \int^{|\tau|}_0  \left(\|\mathscr{E}\| +\left\| [\Omega,A]\right\|   \right)\mathrm{d} s. 
\end{align}
Recalling that $\Omega(\tau) = E(\tau) + \tau H$,
\begin{align}
    \left\| [\Omega,A]\right\| = \left\| [E + \tau H,\mathscr{E}+H]\right\| \leq 2\|E\|\left(\|\mathscr{E}\|+\|H\|\right) + 2 |\tau| \|H\|\|\mathscr{E}\|.
\end{align}
Plugging back into \Cref{eq:E_calE_commOA} and using estimation lemma for $\|E\|$ inside the integral we obtain:
\begin{align}
    \left\|E(\tau)\right\| &\leq \int^{|\tau|}_0 \left(\|\mathscr{E}\| + 2 s \|H\|\|\mathscr{E}\|   \right)\mathrm{d} s + 2\max_{|\tilde\tau|\leq |\tau|}\|E(\tilde\tau)\|\int^{|\tau|}_0 \left(\|\mathscr{E}\|+\|H\|\right)\mathrm{d} s.
\end{align}
We now do the relabelling $\tau \to \tau_2$ and then take $\max_{|\tau_2| \leq |\tau|}$ of both sides
\begin{align}
    \max_{|\tau_2| \leq |\tau|}\left\|E(\tau_2)\right\| &\leq \max_{|\tau_2| \leq |\tau|}\int^{|\tau_2|}_0 \left(\|\mathscr{E}\| + 2 s \|H\|\|\mathscr{E}\|   \right)\mathrm{d} s + 2\max_{|\tau_2|\leq |\tau|}\|E(\tau_2)\|\max_{|\tau_2| \leq |\tau|}\int^{|\tau_2|}_0 \left(\|\mathscr{E}\|+\|H\|\right)\mathrm{d} s,
\end{align}
which becomes
\begin{align}
    \max_{|\tau_2| \leq |\tau|}\left\|E(\tau_2)\right\| &\leq \int^{|\tau|}_0 \left(\|\mathscr{E}\| + 2 s \|H\|\|\mathscr{E}\|   \right)\mathrm{d} s + 2\max_{|\tau_2|\leq |\tau|}\|E(\tau_2)\|\int^{|\tau|}_0 \left(\|\mathscr{E}\|+\|H\|\right)\mathrm{d} s,
\end{align}
Solving for $\max_{|\tau_2| \leq |\tau|}\left\|E(\tau_2)\right\|$
\begin{align}
   \max_{|\tau_2|\leq |\tau|}\|E(\tau_2)\| &\leq \frac{\int^{|\tau|}_0  \left(\|\mathscr{E}\| + 2 s \|H\|\|\mathscr{E}\|   \right)\mathrm{d} s }{1- 2|\tau|\|H\| - 2\int^{|\tau|}_0 \|\mathscr{E}\|\mathrm{d} s}  \cr 
    &\leq \frac{\left(1+2|\tau|\|H\|\right)\int^{|\tau|}_0  \left(\|\mathscr{E}\| \right)\mathrm{d} s }{1- 2|\tau|\|H\| - 2\int^{|\tau|}_0 \|\mathscr{E}\|\mathrm{d} s}  \cr 
\end{align}
We now assume
\begin{align}
    \int^{|\tau|}_0 \|\mathscr{E}\|\mathrm{d} s \leq 1/8
\end{align}
and
\begin{align}
    |\tau| \leq 1/8,
\end{align}
thus,
\begin{align}
    \left\|E(\tau)\right\|\leq \max_{|\tau_2|\leq |\tau|}\left\|E(\tau_2)\right\| &\leq \frac{5}{2}\int^{|\tau|}_0  \|\mathscr{E}(\tau_1)\| \mathrm{d} s.
\end{align}
\end{proof}

\subsection{Proof of \Cref{lem:H_error_bound}\label{app:H_error_bound}}

Our task here is to show that a bound of the form shown~\eqref{eq:Hdiff_bd} actually exists, and for this we follow closely the work done in~\cite{childs2021theory}. In order to make assertions over the general complex plane we will extend their results to complex time. The extension will be straightforward because
$$
e^{i\tau A}
$$
has no singularities except for non-removable singularities at $\operatorname{Im}(\tau)\in \{-\infty,\infty\}$ for a Hermitian $A$ and at $\operatorname{Im}(\tau) = -\infty$ when it is positive semi-definite. Thus, we can replace all the real-axis integrals
$$
\int^{t}_0 f(t_1)\mathrm{d}t_1
$$
for the contour integrals
$$
\int_{C} f(\tau_1)\mathrm{d}\tau_1,
$$
where $\mathcal{P}$ is any path from $0$ to a generally complex $\tau$ as long as that path does not include $\operatorname{Im}(\tau)\in\{-\infty,\infty\}$. Our results are summarized by the following

\begin{proof}
First, we define a general product formula the following way
\begin{equation}
\mathscr{S}(\tau):=\prod_{\upsilon=1}^{\Upsilon}\prod_{m=1}^{m}e^{i \tau a_{(\upsilon,m)}H_{\pi_{\upsilon}(m)}},\label{eq:pf}
\end{equation}
where the coefficients $a_{(\upsilon,m)}$ are real numbers. The parameter $\Upsilon$ denotes the number of \emph{stages} of the product formula. For example, for the ST formula $S_{2k}(t)$, we have $\Upsilon=2\cdot 5^{k-1}$. The permutation $\pi_{\upsilon}$ controls the ordering of operator summands within stage $\upsilon$ of the product formula. We then consider the error in the effective Hamiltonian with the help of the definition of the path-ordered exponential:
\begin{equation}
\mathscr{S}(\tau)=\exp_{\mathcal{P}}\bigg(i\int\mathrm{d}\tau \big(H+\mathscr{E}(\tau)\big)\bigg).
\end{equation}
Using this definition and \Cref{lem:Magnus_error}, we can write the operator error on the Hamiltonian the following way:
\begin{equation}
\left\|\tilde{H}_z(t)-H\right\|\leq\frac{5}{2}\bigg(\int^{|\tau|}_{0} \left\|\mathscr{E}(\tau_1)\right\| \mathrm{d}s \bigg) /|\tau|.
\end{equation}
where $s$ in this context is $|\tau_1|$.
Let $\succeq$ be a lexicographic, or dictionary, ordering on tuples $(\upsilon,m)$, i.e.  $(\upsilon,m)\succeq(\upsilon',m')$ if $\upsilon> \upsilon'$, or if $\upsilon=\upsilon'$ and $m\geq m'$. Let
\begin{align}
\mathscr{S}(\tau)=\prod_{(\upsilon,m)}^{\longleftarrow}e^{i\tau a_{(\upsilon,m)}H_{\pi_{\upsilon}(m)}},
\end{align}
be a product formula with $\prod_{(\upsilon,m)}^{\longleftarrow}$ dictating that the operations are applied from right to left in ascending order according to $\succeq$. Let $\prod_{(\upsilon,m)}^{\longrightarrow}$ refer to descending order. After differentiating the evolution operator, and some algebra, one obtains that $\mathscr{E}(\tau)$ can be expressed as
\begin{equation}
\mathscr{E}(\tau)=\sum_{(\upsilon,m)}\prod_{(\upsilon',m')\succ(\upsilon,m)}^{\longleftarrow}e^{i\tau a_{(\upsilon',m')}H_{\pi_{\upsilon'}(m')}}\big(a_{(\upsilon,m)}H_{\pi_{\upsilon}(m)}\big)\prod_{(\upsilon',m')\succ(\upsilon,m)}^{\longrightarrow}e^{-i\tau a_{(\upsilon',m')}H_{\pi_{\upsilon'}(m')}}-H.
\label{eq:Etau}
\end{equation}
We can bound the norm of this term through the following Taylor expansion of the term 
\begin{equation}\label{eq:ABA_expans}
\begin{aligned}
&   e^{i\tau A_s}\cdots e^{i\tau A_{2}}e^{i\tau A_{1}}Be^{-i\tau A_{1}}e^{-i\tau A_{2}}\cdots e^{-i\tau A_s} \\
    &=C_0+C_1\tau+\cdots+C_{p-1}\tau^{p-1}\\
&\quad + i^{p}\sum_{k=1}^{s}\sum_{\substack{q_1+\cdots+q_k=p\\q_k\neq 0}}
e^{i\tau A_s}\cdots e^{i\tau A_{k+1}}\\
&\qquad\qquad\qquad\qquad \cdot\int_{C}\mathrm{d}\tau_2\ e^{i\tau_2 A_k}\ad_{A_k}^{q_k}\cdots\ad_{A_1}^{q_1}(B)e^{-i\tau_2 A_k}\cdot\frac{(\tau-\tau_2)^{q_k-1}\tau^{q_1+\cdots+q_{k-1}}}{(q_k-1)!q_{k-1}!\cdots q_1!}\\
&\qquad\qquad\qquad\qquad \cdot e^{-i\tau A_{k+1}}\cdots e^{-i\tau A_s},
\end{aligned}
\end{equation}
where $\ad_{A}\left(B\right):=[A,B]$. We can bound the norm of that last term in \Cref{eq:ABA_expans} through triangle inequality for contour integrals, and for sake of simplicity, we fix contour $C$ going straight from $0$ to $\tau$. We also assume for the moment that matrices $A_j$ are positive definite. Thus, the upper bound becomes:
\begin{align}
& \sum_{k=1}^{s}\sum_{\substack{q_1+\cdots+q_k=p\\q_k\neq 0}} \prod^{s}_{j=k+1} \overbrace{\| e^{i\tau A_{j}} \|  \| e^{-i\tau A_{j}} \| }^{\leq \, e^{ 2|\text{Im}\left(\tau\right)|\, \| A_j \|}}
\frac{\abs{\tau}^{q_1+\cdots+q_{k-1}}}{(q_k-1)!q_{k-1}!\cdots q_1!} 
\norm{\ad_{A_k}^{q_k}\cdots\ad_{A_1}^{q_1}(B)} \cr
&\qquad\qquad \int_{C} \mathrm{d}\tau_2 |\tau-\tau_2|^{q_k-1} \overbrace{\|e^{i\tau_2 A_k}\| \|e^{-i\tau_2 A_k}\|}^{\leq \, e^{2|\text{Im}\left(\tau_2\right)|\, \| A_k \|}} \\
&\leq \sum_{k=1}^{s}\sum_{\substack{q_1+\cdots+q_k=p\\q_k\neq 0}}
\binom{p}{q_1\ \cdots\ q_k}\frac{\abs{\tau}^p}{p!}
\norm{\ad_{A_k}^{q_k}\cdots\ad_{A_1}^{q_1}(B)} e^{2|\tau| \, \sum^{s}_{j=k} \| A_j \|} \\
& \leq\sum_{q_1+\cdots+q_s=p}
\binom{p}{q_1\ \cdots\ q_s}\frac{\abs{\tau}^p}{p!}
\norm{\ad_{A_s}^{q_s}\cdots\ad_{A_1}^{q_1}(B)} e^{2|\tau| \, \sum^{s}_{j=1} \| A_j \|} \\
&= \acomm\big(A_s,\ldots,A_1,B\big)\frac{\abs{\tau}^p}{p!} e^{2|\tau| \, \sum^{s}_{j=1} \| A_j \|},
\end{align}
where
\begin{equation}
\acomm\big(A_s,\ldots,A_1,B\big):=\sum_{q_1+\cdots+q_s=p}\binom{p}{q_1\ \cdots\ q_s}\norm{\ad_{A_s}^{q_s}\cdots\ad_{A_1}^{q_1}(B)}.
\end{equation}
Thus, the upper bound on $\norm{\mathscr{E}(\tau)}$ is
\begin{align}
\norm{\mathscr{E}(\tau)} \leq \sum_{(\upsilon,m)} \acomm\big(a_{(\Upsilon,m)}H_{\pi_{\Upsilon}(m)},\ldots,a_{(\upsilon,m+1)}H_{\pi_{\upsilon}(m+1)},a_{(\upsilon,m)}H_{\pi_{\upsilon}(m)}\big) \cr
\cdot\frac{\abs{\tau}^p}{p!} e^{2|\tau| \, \sum_{(\upsilon',m')\succ(\upsilon,m)} \left\| a_{(\upsilon',m')}H_{\pi_{\upsilon'}(m')} \right\|},
\end{align}
where $(\upsilon,m+1)=(\upsilon+1,1)$.
With this, the upper bound in $\| \tilde{H}_z(t) - H \|$
\begin{align}
    \norm{H - \tilde{H}_z} \leq \frac{5}{2}\sum_{(\upsilon,m)} \acomm\big(a_{(\Upsilon,m)}H_{\pi_{\Upsilon}(m)},\ldots,a_{(\upsilon,m+1)}H_{\pi_{\upsilon}(m+1)},a_{(\upsilon,m)}H_{\pi_{\upsilon}(m)}\big) \cr
\cdot\frac{\abs{\tau}^p}{(p+1)!} e^{2|\tau| \, \sum_{(\upsilon',m')\succ(\upsilon,m)} \left\| a_{(\upsilon',m')}H_{\pi_{\upsilon'}(m')} \right\|} \cr
\leq \frac{5}{2}\sum_{(\upsilon,m)} \acomm\big(a_{(\Upsilon,m)}H_{\pi_{\Upsilon}(m)},\ldots,a_{(\upsilon,m+1)}H_{\pi_{\upsilon}(m+1)},a_{(\upsilon,m)}H_{\pi_{\upsilon}(m)}\big) \cr
\cdot\frac{\abs{\tau}^p}{(p+1)!} e^{2|\tau| \, \sum^{m}_{j} \left\| H_{j} \right\|}
\end{align}
\end{proof}

\subsection{Avoiding fractional queries}

For the Gaussian QPEA defined in~\Cref{subsubsec:gqpea}, we must define the $\tilde{U}_{s_k}$ operator. A first attempt looks the following way
\begin{align}\label{eq:Uk}
\tilde{U}_{s_k}=\left(S_p\left(t s_k\right)\right)^{e_k}=\left(\exp\left\{-i t s_k \tilde H_{s_k}\right\}\right)^{e_k},
\end{align}
where $s_k$ are the Chebyshev nodes with $a=1$, and
\begin{align}
e_k =  \frac{s_1}{s_k},
\end{align}
such that the total simulation times 
\begin{align}
    T=s_1 t,
\end{align}
which is the same for each $\tilde{U}_{s_k}$.

This, however, will require fractional powers of (that is, a non-integer number of queries of) $S_p\left(t s_k\right)$. Although this can be efficiently achieved through quantum signal processing and one extra ancillary qubit, within this section we would like to stay as rudimentary as possible, minimizing quantum overhead. Hence, we will instead use the evolution operator
\begin{align}\label{eq:Ukprime}
\tilde{U}'_{s_k}=\left(S_p\left(t s_k\right)\right)^{e'_k}= \exp\left\{-i t s_k \mathrm{\,sgn}{s_k}\left\lceil \frac{s_1}{\left| s_k\right|} \right\rceil \tilde H_{s_k}\right\}.
\end{align}
Here,
\begin{align}\label{eq:ek}
e'_k = \sgn{s_k}\left\lceil \frac{s_1}{\left| s_k\right|} \right\rceil,
\end{align}
is an integer by construction. For this new operator $\tilde{U}'_{s_k}$, the Trotter step size remains
\begin{align}
    dt = t s_k,
\end{align}
but the total time evolution changes to
\begin{align}
    T'_k = e'_k\times dt =  t s_k \mathrm{\,sgn}{s_k}\left\lceil \frac{s_1}{\left| s_k\right|} \right\rceil = \Theta (t s_1).
\end{align}
We note that now the equivalent evolution time $T'_k$ varies for different $k$'s, but this has a negligible effect in state preparation (See \Cref{thm:gauss_state_prep}) or in eigen energy estimation (See \Cref{thm:gauss_phase_est}) with respect to the effective Hamiltonian $\tilde{H}_{s_k}$.

When it comes to query cost, this has the effect of increasing the resolution and the cost of QPEA by a factor of
\begin{align}
\frac{|e'_k|}{|e_k|}\leq 2.
\end{align}

As a last step before jumping into the proof of \Cref{thm:gauss_state_prep} we will first discuss some estimates for the coefficients of the terms in the extrapolation.  
The first such result is given in In the framework we propose here, most algorithms' cost will scale proportionally to $\| e' \|_1\leq 2\|e\|_1$.

\subsection{Proof of~\Cref{thm:gauss_state_prep}}
\label{app:proof_state_prep}

\begin{proof}[Proof of~\Cref{thm:gauss_state_prep}]
In the framework we propose here, most algorithms' cost will scale from~\Cref{lem:Trotter_step_bound}
\begin{align}\label{eq:dk_bound}
\norm{e'}_1\leq 2 \| e \|_1 = 2s_1\sum^{n}_{k=1} \frac1{\left\vert s_k \right\vert}= O\left( n \log{n} \right).
\end{align}
For example, the ground state preparation methods detailed in~\cite{rendon2022effects} would scale proportionally to $\norm{e}_1$. More precisely, the cost of state preparation scales with the minimum eigenvalue gap as
\begin{align}
    C_{\rm prep} = O\left( \frac{\|e'\|_1}{t \min_k\left( {E}_1(s_k) - {E}_0(s_k)\right)}\right),
\end{align}
where, from \Cref{lem:BauerFike} and the fact that the spectral gap is bounded below by $\gamma_0$, we have that
%
\begin{align}
\min_k\left( {E}_1(s_k) - {E}_0(s_k)\right) \geq {\gamma_0}  - \max_{t_1 \leq t,s\leq 1 }{\| H - \tilde H_s (t_1)  \|} \cr
\geq \gamma_0 - O\left(t^{p}\right).
\end{align}
Thus, for $t=O(\gamma_0^{1/p})$,
%
\begin{align}
\min_k\left( {E}_1(s_k) - {E}_0(s_k)\right) = \Omega(\gamma_0).
\end{align}
From~\Cref{lem:radius_of_analiticity}, we also have that
\begin{align}
r = \Omega \left( \left(\frac{(p+1)!}{2^{p+1}}\right)^{1/p}\frac{\gamma_0^{1/p}}{t} \right)
\end{align}
Recalling \Cref{lem:Berns} that the interpolation error
\begin{align}
    \epsilon = O\left(\frac{1}{\rho^{M+1}}\right)
\end{align}
Thus, solving for $n$ we find:
\begin{align}
    n = O\left(\frac{\log 1/\epsilon}{\log\rho}\right)
\end{align}
We also have that the radius of the Bernstein elipse needed satisfies $\rho_\mathrm{max} = \Theta (r_\mathrm{max})$ from~\Cref{lem:radius_of_analiticity}; hence, 
\begin{align}
    n = O\left(\frac{p\log(1/\epsilon)}{\log\left(\gamma_0/t^p\right) + \log((p+1)!/2^{p+1}})\right).
\end{align}
It is clear that a constraint is
\begin{align}
t = \Omega\left(\gamma_0^{1/p}\right),
\end{align}
since we need the radius to be a quantity greater than $1$ and there is no advantage in going to a much smaller $t$.
Finally, the cost for state preparation is
\begin{align}
    C_{\rm prep} = O\left(\frac{\log(1/\epsilon)\log\log(1/\epsilon)}{\gamma_0^{1+1/p}} \right).
\end{align}
\end{proof}

\subsection{Proof of~\Cref{thm:gauss_phase_est}}

\begin{proof}[Proof of~\Cref{thm:gauss_phase_est}]
Now, the cost of estimating the energy through GPE would, naively speaking, also scale like $O\left( n \log{n} \right)$. However, to get a tighter cost bound, we will allow the variances of observables to vary across nodes and then minimize the cost function:
\begin{align}\label{eq:L0}
L_0=\sum^{n/2}_{k=1} \frac1{\sigma_k \cos{\frac{2k-1}{2n}\pi}}
\end{align}
with the constraint
\begin{align}
2\sum^{n/2}_{k=1} c^2_k\sigma^2_k-\sigma_{P}^2=0.
\end{align}
Thus, we the cost function with Lagrange multiplier is:
\begin{align}
L=L_0+\lambda \left( 2\sum^{n/2}_{k=1} c^2_k\sigma^2_k-\sigma_{P}^2 \right).
\end{align}
Note that the terms in the cost function $L_0$ go like $1/\sigma_k$ due to Fourier duality and \Cref{thm:spectral_gaussian_error} and the other factor is from the cost of a single $U_k$ implementation. After doing the minimization we are left with:
\begin{align}
L_0&=\frac{2^{1/2}}{ \sigma_{P}}\left(\sum^{n/2}_{k=1} \frac{\lvert c_k\rvert^{2/3}}{\cos^{2/3}\left(\frac{2k-1}{2n}\pi\right)}\right)^{3/2} \cr
\sigma_k&=\frac{\sigma_{P}}{2^{1/2}\lvert c_k\rvert^{2/3} \cos^{1/3}\left(\frac{2k-1}{2n}\pi\right)\left(\sum^{n/2}_{k=1} \frac{\lvert c_k\rvert^{2/3}}{\cos^{2/3}\left(\frac{2k-1}{2n}\pi\right)}\right)^{1/2}}.
\end{align}
The optimal $L_0$ goes as
\begin{align}
L_0 \in O\left(\frac{n}{\sigma_{P}}\right).
\end{align}
Now, the cost for estimating the interpolant with a variance of $\sigma_{P}$ and a bias $\epsilon$ using Gaussian phase estimation is
\begin{align}
    C_{\rm est} \in O\left(\frac{\log 1/\epsilon}{\sigma_P}\right)
\end{align}
On the other hand, using~\Cref{lem:c_norm}, if one wishes to estimate the energy using a single ancillar qubit approach, the cost of estimating the energy is
\begin{align}
    C_{\rm est,1-qubit} \in O\left(\frac{\log 1/\epsilon \log \log 1/\epsilon}{\epsilon}\right),
\end{align}
where $\epsilon$ is the semi-deterministic error coming from the Heisenberg-limited estimation algorithms like IQAE~\cite{IQAE} or the one in Ref.~\cite{HeisenbergL_Higgins}, or non-Heisenberg-limited alternatives like the semi-classical QPE~\cite{SemiclassicalQFT_1995,kitaev1995quantum}, the single-qubit version of the textbook QPE~\cite{Abrams_QPE}.

\end{proof}

\section{Cost asymptotics for expectation values
\label{app:expvals_proof}}
Here we provide proofs for Lemma~\ref{lem:interp_error_evals} and Theorem~\ref{thm:expvals_cost} as stated in Section~\ref{sec:expvals} of the main paper.

\begin{proof}[Proof of Lemma~\ref{lem:interp_error_evals}]
For scalar functions $f(s)$, derivatives of $\exp(f(s))$ can be expressed through the complete Bell polynomials via Faà di Bruno's formula.
\begin{align}
    \partial_{s}^n e^{f(s)} = Y_n(f'(s), f''(s), \dots, f^{(n)}(s)) e^{f(s)}
\end{align}
For operator exponentials such as $\exp(-i \tilde{H}_s t)$, derivatives can be expressed via repeated application of Duhamel's formula. Yet these expressions are always upper bounded by the commuting (scalar) case~\cite{watkins2022time}, so that
\begin{align}
    \norm{\partial_s^n e^{-i \tilde{H}_s t}} \leq Y_n\left(t \norm{\partial_s \tilde{H}_s}, t \norm{\partial_s^2 \tilde{H}_s}, \dots, t \norm{\partial_s^n \tilde{H}_s}\right).
\end{align}
Note that the exponential disappeared in the bound since it has norm one. Applying Lemma~\ref{lem:Heffderiv} and invoking the fact that $Y_n$ is monotonic in each argument, this is upper bounded by
\begin{align}
    Y_n \left((2 j^j (e^2c)^{j+1})_{j=1}^n\right).
\end{align}
An explicit formula for this is given by 
\begin{align}
     Y_n \left((2 j^j (e^2c)^{j+1})_{j=1}^n\right) = \sum_{D} \frac{n!}{d_1! \dots d_n!} \prod_{j=1}^n \left(\frac{2 j^j (e^2c)^{j+1}}{j!}\right)^{d_j}
\end{align}
where $D$ is a sum over all indices $(d_j)_{j=1}^n$ such that $d_j \geq 0$ and
\begin{align} \label{eq:Bell_index_sum}
    \sum_{j=1}^n d_j j = n.
\end{align}
Using a Stirling-type bound
\begin{align}
    \frac{1}{j!} \leq \left(\frac{e}{j}\right)^j \frac{1}{\sqrt{2\pi}}
\end{align}
allows us to write
\begin{align}
\begin{aligned}
    Y_n \left((2 j^j (e^2c)^{j+1})_{j=1}^n\right) &\leq \sum_{D} \frac{n!}{d_1! \dots d_n!} \prod_{j=1}^n \left(\sqrt{\frac{2}{\pi}} e^j (e^2c)^{j+1}\right)^{d_j} \\
    &= (e^3c)^n \sum_{D} \frac{n!}{d_1! \dots d_n!} \prod_{j=1}^n \left(\sqrt{\frac{2}{\pi}} e^2 c\right)^{d_j} \\
    &=(e^3c)^n Y_n(\sqrt{2/\pi} e^2 c, \sqrt{2/\pi} e^2 c, \dots, \sqrt{2/\pi} e^2 c) \\
    &= (e^3c)^n B_n (\sqrt{2/\pi} e^2 c).
\end{aligned}
\end{align}
In the second line we brought out $n$ factors of $ec$ using the condition on the indices $D$, and we identified $Y_n$ evaluated the same at every argument to be the single-variable Bell (or Touchard) polynomial $B_n$. We can bound the size of $B_n (\sqrt{2/\pi} e^2 c)$~\cite{ahle2022sharp} by
\begin{align}
    B_n(\sqrt{2/\pi} e^2 c) \leq \left(\frac{n}{\log(1 + \sqrt{\frac{\pi}{2}} n/(e^2c))}\right)^n
\end{align}
for all $n > 0$, with $n = 0$ defined by the limit (which is $1$). With this,  
\begin{align}
    \norm{\partial_s^n e^{-i \tilde{H}_s t}} \leq \left(\frac{e^3 cn}{\log(1 + \sqrt{\frac{\pi}{2}} n/(e^2c))}\right)^n \leq \left(\frac{e^3 cn}{2}\right)^n \left(1+\sqrt{\frac{8}{\pi} }\frac{e^2c}{n}\right)^n
\end{align}
where we've used the bound $1/\log(1+x) \leq 1/2 + 1/x$. Again, this inequality is valid for $n = 0$ via the limit, which is always one.

With this bound on the ST formula derivatives, we now turn to bounding $\partial_s^n O_s (t)$. Applying the binomial theorem and triangle inequality to~\eqref{eq:setup_expvals},
\begin{align} \label{eq:1st_pass_expvals}
\begin{aligned}
    \frac{\norm{\partial_s^n O_s (t)}}{\norm{O}} &\leq \sum_{p=0}^n \binom{n}{p} \norm{\partial_s^p e^{i t \tilde{H}_s}} \norm{\partial_s^{n-p} e^{-i t \tilde{H}_s}} \\
    &\leq \left(\frac{e^3c}{2}\right)^n \sum_{p=0}^n \binom{n}{p} p^p (n-p)^{n-p} \left(1+\sqrt{\frac{8}{\pi}}\frac{e^2c}{p}\right)^p \left(1+\sqrt{\frac{8}{\pi}}\frac{e^2c}{n-p}\right)^{n-p}.
\end{aligned}
\end{align}
At this point, it will be fruitful to consider two regimes. Recall that $c$ encodes information about the simulation time.

In the case where {$c > n$}, we have
\begin{align}
\begin{aligned}
    \frac{\norm{\partial_s^n O_s}}{\norm{O}} &\leq \left(\frac{e^3 c}{2}\right)^n \sum_{p=0}^n \binom{n}{p} (c + \sqrt{8/\pi} e^2 c)^p (c + \sqrt{8/\pi}e^2 c)^{n-p} \\
    &\leq \left(\frac{e^3c}{2}\right)^n c^n \left(1+\sqrt{8/\pi}e^2\right)^n \sum_{p=0}^n \binom{n}{p}  \\ 
    &= \left(c\sqrt{e^3(1+\sqrt{8/\pi}e^2)} \right)^{2n}.
\end{aligned}
\end{align}
This implies a relative error in the polynomial fit bounded by
\begin{align}
    \abs{E_{n-1}(0)} < \left(129\frac{c^2 a}{n}\right)^n.
\end{align}

In the case where $c\le n$, the approximation
\begin{align}
    \left(1+e^2\sqrt{\frac{8}{\pi}} \frac{c}{p}\right)^p < e^{ce^2\sqrt{8/\pi}}
\end{align}  
holds and is not so crude. Applying this to~\eqref{eq:1st_pass_expvals}, 
\begin{align}
    \frac{\norm{\partial_s^n O_s}}{\norm{O}} &\leq \left(\frac{e^3 c}{2}\right)^n n! \sum_{p=0}^n \frac{p^p}{p!} \frac{(n-p)^{n-p}}{(n-p)!} e^{4ce^2\sqrt{2/\pi}}.
\end{align}
Regrouping and employing a Stirling bound where appropriate,
\begin{align}
\begin{aligned}
    \frac{\norm{\partial_s^n O_s}}{\norm{O}} &\leq e^{4ce^2\sqrt{2/\pi}} \left(\frac{e^3c}{2}\right)^n n! \left(2 \frac{e^n}{\sqrt{2\pi n}}+\frac{e^n}{2\pi} \sum_{p=1}^{n-1} \frac{1}{\sqrt{p(n-p)}}\right) \\
    &\leq  e^{4ce^2\sqrt{2/\pi}} \left(\frac{e^4 c}{2}\right)^n n! \left(\frac{2}{\sqrt{2\pi n}}+\frac{\sqrt{n-1}}{2\pi}\right) \\
    &\leq \frac{1}{2\pi}(\sqrt{8\pi} +  \sqrt{n-1}) \left(\frac{e^4 c}{2}\right)^n n! e^{4ce^2\sqrt{2/\pi}} \\
    &\leq \sqrt{\frac{2n}{\pi}}\left(\frac{e^4 c}{2}\right)^n n! e^{4ce^2\sqrt{2/\pi}}
\end{aligned}
\end{align}
After another Stirling bound, this gives a corresponding interpolation error of 
\begin{align}
    E_{n-1}(0)< 2 \sqrt2 n \left(6 ca\right)^n  e^{24 c}.
\end{align}

\end{proof}

\begin{proof}[Proof of Theorem~\ref{thm:expvals_cost}]
Let $f(s) = \langle O_s (t)\rangle/\norm{O}$ be the normalized expectation value under Trotter evolution. Our interpolation algorithm produces an estimate $\Bar{f}$ of $f(0)$ which we require to be accurate within $\epsilon$.
\begin{align}
    \abs{f(0) - \Bar{f}} \leq \epsilon
\end{align}

There is the interpolation error from the polynomial $P_{n-1} f$ fitting $f$ assuming perfect interpolation points $(s_i, f(s_i))$. But $f(s_i)$ can only be estimated; let's call $\tilde{y}_i$ this estimate. The error in $\tilde{y}_i$ in our analysis comes from the statistical error inherent in the estimation protocol as well as the error in the fractional query procedure for a $1/s$ evolution. We can independently consider the interpolation error and the data error via the triangle inequality.
\begin{align}
\begin{aligned}
    \abs{f(0) - \Bar{f}} &\leq \abs{f(0) - P_{n-1} f (0)} + \abs{P_{n-1}f(0) - \tilde{P}_{n-1} f(0)} \\
    &\leq \epsilon_{\rm int} + L_n \epsilon_{\rm data}
\end{aligned}
\end{align}
Here $L_n$ is the Lebesgue constant of the interpolation, essentially a condition number, and $\epsilon_{\rm data}$ is an upper bound on the error in the data. $\tilde{P}_{n-1}f$ is the fit to the imperfect data and $P_{n-1}f$ the fit to the perfect data $(s_i, f(s_i))$. For generic interpolation nodes, $L_n$ can grow rapidly; however, for the set of Chebyshev nodes we obtain a near-optimal value~\cite{rivlin2020chebyshev}.
\begin{align}
    L_n \leq \frac{2}{\pi} \log(n+1) + 1
\end{align}
Since we want the total error to be within a threshold $\epsilon$, we can require
\begin{align}
    \epsilon_{\rm data} \leq \frac{\epsilon}{2 L_n}, \quad \epsilon_{\rm int} \leq \frac{\epsilon}{2}.
\end{align}

Given these error bounds, we can now turn to the cost of acquiring the data points. Because $O/\gamma$ can be block encoded, the expectation value calculation can be encoded as an amplitude estimation problem. Specifically, a Hadamard test circuit gives the amplitude
\begin{align}
    \frac{1+\langle O_{s_i} (t)\rangle/\gamma}{2}.
\end{align}
If we estimate this amplitude to within accuracy $\epsilon_\mathrm{data} \norm{O}/2\gamma$, we can estimate $f(s_i)$ within $\epsilon_\mathrm{data}$. Using Iterative Quantum Amplitude Estimation~\cite{IQAE}, we can obtain this estimate using a Grover iterate $G$ constructed from two Hadamard test oracles. The number of Grover oracles $N_G$ required is given by
\begin{align} \label{eq:IQAE_bound}
    N_G \leq \frac{200 \gamma L_n}{\norm{O}\epsilon_\mathrm{data}} \log\left(\frac{2n}{\delta} \log_2 \left(\frac{\gamma L_n \pi}{\norm{O}\epsilon_\mathrm{data}}\right)\right)
\end{align}
to ensure probability $1-\delta$ of all data being within $\epsilon_\mathrm{data}$ of the true value. Each $G$ requires two Hadamard tests, and each Hadamard oracle calls a (controlled) ST evolution once. The number of controlled exponentials needed for a single data point at value $s_i$ is in
\begin{align}
    O\left(\frac{N_k}{\abs{s_i}} \log 1/\epsilon_\mathrm{data}\right)
\end{align}
where $N_k = 2m 5^{k-1}$, and where the logarithm comes from the need for fractional queries with QSVT. There is also a $O(1)$ overhead associated with the fractional queries and IQAE. Altogether, the number of exponentials for a single data point is in 
\begin{align}
    O\left(N_G \times 2 \times \frac{N_k}{\abs{s_i}} \log 1/\epsilon_\mathrm{data}\right).
\end{align}
Therefore, the total number $N_\mathrm{exp}$ of generating all $n/2$ data points (we only need half due to symmetry) is in
\begin{align} \label{eq:Cdata_def}
    N_\mathrm{exp} \in O\left(N_G N_k \sum_{i=1}^{n/2} \frac{1}{s_i} \log(1/\epsilon_\mathrm{data})\right).
\end{align}
Plugging in~\eqref{eq:IQAE_bound} for $N_G$ above and summing over $1/s_i$ using Lemma \ref{lem:Trotter_step_bound},
\begin{align} \label{eq:Cdata_pre_a}
\begin{aligned}
    N_\mathrm{exp} &\in O\left(\frac{ \gamma N_k L_n n }{ \norm{O} \epsilon a} (\log n)\log\left(\frac{2n}{\delta}\log_2\left(\frac{\gamma L_n \pi}{\norm{O}\epsilon}\right)\right)\log(1/\epsilon_\mathrm{data})\right) \\
    &\subset \tilde{O}\left(\frac{n}{a\epsilon}\log1/\delta\right)
\end{aligned}
\end{align}
where $\tilde{O}$ suppresses factors logarithmic in $n$ and $\epsilon$. We also employed our assumption that $\gamma/\norm{O} \in O(1)$. The number of nodes $n$ and the interpolation interval $[-a,a]$ will be determined by $\epsilon_{\rm int}$, the interpolation error assuming perfect data. To apply our error bounds from the previous subsection, choose $a$ to satisfy Lemma~\ref{lem:Heffderiv}, i.e. $ca < \pi/20$, while also taking $1/a \in O(c)$.

Choose $n \geq \lceil c \rceil$. Then the second bound of Lemma~\ref{lem:interp_error_evals} holds. From the interpolation error, we must satisfy
\begin{align}
    2\sqrt2 n (6ca)^n e^{24c} < \epsilon/2
\end{align}
which in turn can be satisfied provided that 
\begin{align}
    4\sqrt2 n \left(6e^{24} ca\right)^n < \epsilon
\end{align}
since $n \geq c$. Choose $a$ such that $6e^{24}ca = 1/2$, which is consistent with our previous conditions on $a$. Then, to satisfy the error bound, $n$ can satisfy
\begin{align}
    4\sqrt2 n 2^{-n} < \epsilon.
\end{align}
This can be solved using the $-1$ branch of the $\mathrm{LambertW}$ function. 
\begin{align}
    n > - \frac{\mathrm{LambertW}_{-1}\left(-\epsilon \log 2/(4\sqrt2))\right)}{\log 2}.
\end{align}
The appropriate asymptotics is $n \in O(\log(1/\epsilon)$. By taking $n = n_*$ where
\begin{align}
    n_* &= \max\left\{\lceil c \rceil, \left\lceil- \frac{\mathrm{LambertW}_{-1}(-\epsilon \log 2/(4\sqrt2)}{\log 2}\right\rceil\right\} \\
    &\in O(\max\{c, \log(1/\epsilon)\})
\end{align}
we satisfy all required constraints and arrive at our final asymptotic scaling.
\begin{align}
    N_\mathrm{exp} \in \tilde{O}\left(\max\{c, \log(1/\epsilon)\} c \epsilon^{-1} \log(1/\delta)\right)
\end{align}
\end{proof}

\begin{remark}
    In the proof above, we set $n > c$ from the beginning, in order to use the second of the two bounds from Lemma~\ref{lem:interp_error_evals}. Together with $1/a \in O(c)$ this condemns us to a suboptimal $c^2$ scaling in the large $c$ limit. However, using the first bound instead of the second would not help us, since the $nc^2/a$ term in that bound must be order one. 
\end{remark}

\end{document}